\PassOptionsToPackage{prologue,dvipsnames}{xcolor}
\documentclass[acmsmall,nonacm]{acmart}

\newif\ifshort
\shortfalse

\usepackage{quiver}
\usepackage{amsmath}
\usepackage{bussproofs} 
    \EnableBpAbbreviations
\usepackage{mathpartir} 
\usepackage{mathtools}
\usepackage{stmaryrd}
    \EnableBpAbbreviations
\usepackage{multirow}

\newcommand{\R}{\mathbb{R}}
\newcommand{\F}{\mathbb{F}}

\newcommand{\Set}{\textbf{Set}}
\newcommand{\denot}[1]{\llbracket {#1} \rrbracket}
\newcommand{\pdenot}[1]{\llparenthesis {#1} \rrparenthesis}
\newcommand{\pdenotid}[1]{{\llparenthesis {#1} \rrparenthesis}_{id}}
\newcommand{\pdenotap}[1]{{\llparenthesis {#1} \rrparenthesis}_{ap}}

\newcommand{\Bel}{\textbf{Bel}}
\newcommand{\Del}{\textbf{Del}}
\newcommand{\sgn}[1]{sgn({#1})}
\newcommand{\stepid}{\Downarrow_{id}}
\newcommand{\stepap}{\Downarrow_{ap}}

\newcommand{\Uid}{U_{id}}
\newcommand{\Uap}{U_{ap}}

\newcommand{\tensor}{\otimes}
\DeclareMathOperator{\dom}{dom}

\usepackage{hyperref}
\usepackage{amsthm}
\usepackage[noabbrev,capitalize,nameinlink]{cleveref}

\theoremstyle{definition}
\newtheorem{theorem}{Theorem}[section]
\newtheorem{lemma}[theorem]{Lemma}

\newtheorem{definition}{Definition}[section]

\newtheorem{remark}{Remark}

\newtheoremstyle{named}{}{}{\itshape}{}{\bfseries}{.}{.5em}{\thmnote{#3}#1}
\theoremstyle{named}
\newtheorem*{namedtheorem}{}

\newcommand{\bea}{\textsc{\textbf{Bean}}}
\newcommand{\LangS}{$\Lambda_S$}

\newcommand{\inl}{\textbf{inl }}
\newcommand{\inr}{\textbf{inr }}
\newcommand{\op}{\textbf{op}}
\newcommand{\slet}[3]{\mathbf{let} \ {#1} = {#2} \ \mathbf{in} \ {#3}}
\newcommand{\dlet}[3]{\mathbf{dlet} \ {#1} = {#2} \ \mathbf{in} \ {#3}}
\newcommand{\case}[3]{\mathbf{case} \ {#1} \ \mathbf{of}\ ({#2} \ | \ {#3})}
\newcommand{\add}[2]{\mathbf{add} \ {#1} \ {#2}}
\newcommand{\sub}[2]{\mathbf{sub} \ {#1} \ {#2}}
\newcommand{\mul}[2]{\mathbf{mul} \ {#1} \ {#2}}
\newcommand{\dmul}[2]{\mathbf{dmul} \ {#1} \ {#2}}
\newcommand{\fdiv}[2]{\mathbf{div} \ {#1} \ {#2}}

\newcommand{\stexttt}[1]{\texttt{\small{#1}}}
\newcommand{\iso}{\xrightarrow{\,\smash{\raisebox{-0.65ex}{\ensuremath{\scriptstyle\sim}}}\,}}

\newcommand{\unit}{\textbf{unit}}
\newcommand{\num}{\textbf{num}}

\usepackage{enumitem}
\usepackage{pdflscape}
\usepackage{lscape}
\usepackage{geometry}
\usepackage{subcaption}
\usepackage{colortbl}
\usepackage{tabularray}

\usepackage[T1]{fontenc}
\usepackage[scaled]{beramono}
\usepackage{adjustbox}
\usepackage{verbatim}
\definecolor{verbgray}{gray}{0.9}
   {\par\noindent\adjustbox{margin=1ex,
    bgcolor=verbgray,margin=0ex \medskipamount}
    \bgroup\minipage\linewidth\verbatim\small}%
   {\endverbatim\endminipage\egroup}

\usepackage{listings}
\lstdefinelanguage{bean}{
    mathescape=true,
    morekeywords=[2]{let, in, case, of, inl, inr, dlet},
    morekeywords=[3]{add, dmul, mul, sub, div},
    morekeywords=[4]{InnerProduct2, PolyVal, Horner, LinSolve, 
        ScaleVec, SVecAdd, InnerProduct, GMatVecMul, MatVecMul, SMatVecMul, SVecAdd,
        DotProd2, MatVecEx,}
    columns=fullflexible,
    sensitive,
    basicstyle =\linespread{2},
    keepspaces=true,
    showstringspaces=false,
    breaklines=true,
    breakatwhitespace=false,
    basicstyle=\ttfamily\footnotesize,
    keywordstyle=[2]{\color{Mahogany}},
    keywordstyle=[3]{\color{Plum}},
    keywordstyle=[1]{\color{MidnightBlue}},
    morecomment=[l][\color{Gray}]{\/},
    xleftmargin=.05\textwidth, 
    xrightmargin=.05\textwidth
   }[keywords,comments,strings]%
\lstset{
    language=bean,
    literate={
        {mapsto}{{$ \multimap$}}1
        {otimes}{{$ \otimes$}}1
    }
}
\DeclareTextFontCommand{\linec}{\sffamily\selectfont}

\begin{document}

\title{\bea{}: A Language for Backward Error Analysis}

\author{Ariel E. Kellison}
\orcid{0000-0003-3177-7958}
\affiliation{%
  \institution{Cornell University}
  \city{Ithaca}
  \country{USA}
}
\email{ak2485@cornell.edu}

\author{Laura Zielinski}
\orcid{0009-0006-6516-1392}
\affiliation{%
  \institution{Cornell University}
  \city{Ithaca}
  \country{USA}
}
\email{lcz8@cornell.edu}

\author{David Bindel}
\orcid{0000-0002-8733-5799}
\affiliation{%
  \institution{Cornell University}
  \city{Ithaca}
  \country{USA}
}
\email{bindel@cornell.edu}

\author{Justin Hsu}
\orcid{0000-0002-8953-7060}
\affiliation{%
  \institution{Cornell University}
  \city{Ithaca}
  \country{USA}
}
\affiliation{%
  \institution{Imperial College London}
  \city{London}
  \country{United Kingdom}
}
\email{justin@cs.cornell.edu}

\begin{abstract} 
\emph{Backward error analysis} offers a method for assessing the quality of
numerical programs in the presence of floating-point rounding errors.  However,
techniques from the numerical analysis literature for quantifying backward
error require substantial human effort, and there are currently no tools or automated
methods for statically deriving sound backward error bounds. To address this
gap, we propose \bea{}, a typed first-order programming language designed to
express quantitative bounds on backward error.  \bea{}'s type system combines a
graded coeffect system with strict linearity to soundly track the flow of
backward error through programs.  We prove the soundness of our system using a
novel categorical semantics, where every \bea{} program denotes a triple of
related transformations that together satisfy a backward error guarantee. 

To illustrate \bea{}'s potential as a practical tool for automated backward
error analysis, we implement a variety of standard algorithms from numerical
linear algebra in \bea, establishing fine-grained backward error bounds via
typing in a compositional style. We also develop a prototype implementation of
\bea{} that infers backward error bounds automatically. Our evaluation shows
that these inferred bounds match worst-case theoretical relative backward error
bounds from the literature, underscoring \bea{}'s utility in validating a key
property of numerical programs: \emph{numerical stability}.
\end{abstract}

\begin{CCSXML}
<ccs2012>
   <concept>
       <concept_id>10003752.10003790.10002990</concept_id>
       <concept_desc>Theory of computation~Logic and verification</concept_desc>
       <concept_significance>500</concept_significance>
       </concept>
   <concept>
       <concept_id>10002950.10003714.10003715</concept_id>
       <concept_desc>Mathematics of computing~Numerical analysis</concept_desc>
       <concept_significance>500</concept_significance>
       </concept>
 </ccs2012>
\end{CCSXML}

\ccsdesc[500]{Theory of computation~Logic and verification}
\ccsdesc[500]{Mathematics of computing~Numerical analysis}

\keywords{Floating point, Backward error, Linear type systems}

\maketitle


\section{Introduction}
It is well known that the floating-point implementations of mathematically
equivalent algorithms can produce wildly different results. This discrepancy
arises due to floating-point rounding errors, where small inaccuracies are
introduced during computation because real numbers are represented with limited
precision. In numerical analysis, the property of \emph{numerical stability} is
used to distinguish implementations that produce reliable results in the
presence of rounding errors from those that do not.  While several notions of
numerical stability exist, \emph{backward stability} is a particularly crucial
property in fields that rely on numerical linear algebra
\cite{Higham:2002:Accuracy,Demmel:1992:BLAS3}, such as scientific computing,
engineering, computer graphics, and machine learning.  

Essentially, a floating-point program is considered backward stable if its
results match those that would be obtained from an exact arithmetic
computation---free of rounding errors---on slightly perturbed inputs.  This
property is valuable because it means that any errors introduced during
execution have the same effect as small changes in the input data.
Consequently, analyzing the accuracy of the computed result reduces to
understanding how small perturbations in the inputs affect the output,
independent of the implementation details.

\emph{Backward error analysis} determines whether a program is backward stable
by quantifying how much the input to an exact version must be perturbed to
match the floating-point result.  The size of this perturbation is the
\emph{backward error}, and in \bea{}, we aim to bound the \emph{relative}
backward error. If this error is small, then the program is considered backward
stable. In a slogan \cite{Trefethen:1997:book}: 
\begin{center}
A backward stable program gives exactly the right solution to nearly the right problem.
\end{center}
Many fields rely on backward stability to ensure accurate, reliable results,
but no tools currently exist to statically derive sound backward error bounds.
In contrast, several tools support the static analysis of \emph{forward error
bounds}, providing direct measures of program accuracy
\cite{Rosa1,Rosa2,REAL2FLOAT,VCFLOAT2,FPTaylor,DAISY,Fluctuat,PRECISA}.
\ifshort
\else
In situations where backward error bounds are required, programmers often compute
the backward error dynamically by implementing rudimentary heuristics. These
methods provide empirical estimates of the backward error in place of sound
rigorous bounds, and often have a higher computational cost than running the
original program \cite{Corless:2013:Analysis}. 
\fi

\paragraph{\textbf{Challenges}} 
The challenges associated with designing static analysis tools that derive
sound backward error bounds are twofold. First, \emph{many programs are not backward
stable}: for these programs, the rounding errors produced during execution
cannot be interpreted as small perturbations to the input of an exact
arithmetic version of the program. Surprisingly, even seemingly straightforward 
computations can lack backward stability. Second, when they do exist, 
\emph{backward error bounds are
generally not preserved under composition}, and the conditions under which
composing backward stable programs yields another backward stable program are
poorly understood \cite{beltran:2023:forward,Bornemann:2007:backwardcompose}.

\paragraph{\textbf{Solution}}
Our work demonstrates that the challenges in designing a static analysis tool
for backward error analysis can be effectively addressed by leveraging concepts
from \emph{bidirectional programming languages}~\citep{Bohannon:2008:lenses}
along with a linear type system and a graded coeffect
system~\citep{Brunel:2014:coeffects}. In this approach, every expression in the
language denotes two forward transformations---a floating-point program and its
associated exact arithmetic function where all operations are performed in
infinite precision without any rounding errors---together with a backward
transformation, which captures how rounding errors appearing in the solution
space can be propagated back as perturbations to the input space. Backward error
bounds characterizing the size of these perturbations are tracked using a
\emph{coeffect graded comonad}. Finally, the \emph{linear} type system ensures
that program variables carrying backward error are never duplicated, providing a
sufficient condition for preserving backward stability under composition.
Structuring the language around these core ideas allows us to capture standard
numerical primitives and their backward error bounds, and to bound the backward
error of large programs in a compositional way. 

Concretely, we propose \bea{}, a programming
language for {\sc{\textbf{b}}}ackward {\sc{\textbf{e}}}rror
{\sc{\textbf{a}}}{\sc{\textbf{n}}}alysis, which features a linear coeffect type
system that tracks how backward error flows through programs, and ensures that
well-typed programs have bounded backward error. In developing \bea{} we 
tackled several foundational problems, resulting in the following technical
contributions:

\paragraph{\textbf{Contribution 1: The \bea{} language} 
(\Cref{sec:language}, \Cref{sec:examples}).} 
We introduce \bea{}, a first-order language with numerical primitives and
a graded coeffect type system designed for tracking backward error. The type
system in \bea{} tracks per-variable backward error bounds, corresponding to
so-called \emph{componentwise} backward error bounds described in the numerical
analysis literature. We demonstrate how \bea{} can establish backward error
bounds for various algorithms from the numerical analysis literature, through
typing.

\paragraph{\textbf{Contribution 2: A \bea{} implementation} 
(\Cref{sec:implementation})} 
We provide a prototype implementation of \bea{} that can automatically infer relative
backward error bounds, marking the first tool to statically derive such bounds.
We translate several large benchmarks into \bea{} and demonstrate that our
implementation infers useful error bounds and scales to large numerical
programs.

\paragraph{\textbf{Contribution 3: Semantics and backward error soundness}
(\Cref{sec:metatheory}).} Building on the literature on bidirectional
programming languages and lenses, we introduce a general semantics describing
computations amenable to backward error analysis based on \emph{backward error
lenses}. We propose the category of backward error lenses 
(\Bel), and give \bea{} a semantics in this category. We conclude our main 
\emph{backward error soundness} theorem for \bea{}---which says that the execution 
of every well-typed program satisfies the backward 
error bound that the type system assigns it---from the properties of the 
morphisms in this category.

\section{Background and Overview}\label{sec:overview}
Our point of departure from other static analysis tools for floating-point
rounding error analysis is our focus on deriving backward error bounds rather
than forward error bounds. We adopt the following notation: floating-point
approximations and perturbed data will be denoted by a tilde. For instance,
$\tilde{f}$ denotes the floating-point approximation of a real-valued function
$f$, and $\tilde{x}$ denotes a slight perturbations of the data $x$.

\subsection{Backward Error and Backward Stability}
Let $\tilde{f}:\F^n\to\F^m$ be a program computing a function $f:\R^n\to\R^m$,
where $\F\subset\R$ is a finite set of floating-point numbers. Given
$x\in\F^n$ and an output $\tilde{y}=\tilde{f}(x)$, a forward error analysis
bounds the distance between $\tilde{y}$ and the exact result $y = f(x)$. In
contrast, a backward error analysis finds $\tilde{x}\in\R^n$ such that
$\tilde{y} = f(\tilde{x})$, and bounds the distance between $x$ and
$\tilde{x}$. 
\begin{figure}
\begin{center}
\begin{tikzpicture}
  \draw[draw=blue!20, thick,fill=blue!20, opacity = 0.6] (0,0) 
  ellipse (1.25cm and 1.5cm);
  \draw[draw=green!20, thick,fill=green!20, opacity = 0.6] (5,0) 
  ellipse (1.25cm and 1.5cm);
  \node[label=below:{Input Space $\R^n$}] (A) at (0,2.2) {};
  \node[label=below:{Output Space $\R^m$}] (B) at (5,2.2) {};
  \node[circle,fill=black,inner sep=0pt,minimum size=3pt,
  label=above:{$x$}] (A) at (0,0.5) {};
  \node[circle,fill=black,inner sep=0pt,minimum size=3pt,
  label=below:{$\tilde{x}$}] (B) at (0,-0.5) {};
  \node[circle,fill=black,inner sep=0pt,minimum size=3pt] (C) at (5,0) {};
  \node[align=center,anchor=north] (lab) at (5.4,0.75) 
    {$\tilde{{f}}(x)$ \\ = \\ $f(\tilde{x})$};
  \draw (0.1,0.5)  edge[bend left, ->, thick] node[above] 
  {$\tilde{{f}}$} (4.9,0.1);
  \draw (0.1,-0.5) edge[bend right, ->, thick] node[above] {${{f}}$} 
  (4.9,-0.1);
  \draw[dashed,thick,draw = red] node[left] {${\delta x}$} (A) -- (B);
\end{tikzpicture}
\end{center}
\caption{An illustration of backward error.
  The function $\tilde{f}$ represents a floating-point implementation of the
  function $f$. Given the points $\tilde{x} \in \R^n$ and $x \in \F^n \subset
  \R^n$ such that $\tilde{f}(x) = f(\tilde{x})$, the backward error is the
  distance $\delta x$ between $x$ and $\tilde{x}$. }
\label{fig:BEA}
\end{figure}
An illustration of the backward error is given in \Cref{fig:BEA}. If the
backward error is small for every possible input, then an implementation is
said to be \emph{backward stable}:

\begin{definition}(Backward Stability)\label{def:bstab}
  A floating-point implementation $\tilde{f} : \F^n \rightarrow \F^m$ of a
  real-valued function ${f} : \R^n \rightarrow \R^m$ is \emph{backward stable}
  if for every input $x \in \F^n$, there exists an input 
  $\tilde{x} \in \R^n$ such that 
  \begin{equation} 
    f(\tilde{x}) = \tilde{f}(x) \text{ and } d(x,\tilde{x}) \le \alpha u
  \end{equation} 
  where $d :\R^n \times \R^n \rightarrow \R_{\geq 0}\cup\{\infty\}$ is an
  extended metric, $\alpha$ is a small constant, and $u$ is the \emph{unit
  roundoff}---the maximum relative error when rounding to the nearest
  $\F$-representable number. For example, for IEEE binary64 with
  round-to-nearest and any tie-breaking rule, we have $u= 2^{-53}$.
\end{definition}

Forward error can arise from two sources: the problem's conditioning or the
program's stability. An \emph{ill-conditioned} problem is highly sensitive to
changes in its input, which can amplify the effect of rounding errors. A
\emph{well-conditioned} problem is not sensitive to its input, but if it is
implemented with an \emph{unstable} program, the results can still be inaccurate
due to rounding error introduced by the computation. Forward error does not
distinguish between these two sources of error, but backward error provides a
way to separate them. Their relationship is governed by the \emph{condition
number}:
\begin{align}
  \text{forward error $\le$ condition number $\times$ backward error}.
\end{align}
\ifshort
\else
A formal definition of the condition number is given in \Cref{def:cnum}. 
\fi

By automatically deriving sound backward error bounds that indicate the
backward stability of programs, \bea{} addresses a significant gap in current
tools for automated error analysis. To quote Dianne P. O'Leary
\cite{Oleary:2009:book}: ``Life may toss us some ill-conditioned problems, but
there is no good reason to settle for an unstable algorithm.''

\subsubsection{Backward Error of the Basic Arithmetic Operations}
Analyzing a program's forward and backward error requires assumptions about the
accuracy of basic floating-point operations. A common approach models
arithmetic over a finite set $\F \subset \R$ by defining each basic operation
$\op \in \{+,-,\cdot, / \}$ as computing $fl(x ~\op~ y)$, where $x, y \in \F$
and $fl: \R \rightarrow \F$ rounds real numbers to nearby floating-point
values. The \emph{standard rounding error model} \cite[Theorem
2.2]{Higham:2002:Accuracy} asserts that, assuming no underflow or overflow,
there exists a real number $\delta$ such that
\begin{equation}\label{eq:std_mdl}
    fl(x ~\op~ y) = (x ~\op~ y)(1 + \delta), \qquad |\delta| \le u, 
    \qquad  \op \in \{+,-,\cdot, / \}.
\end{equation}
We instead adopt an model due to Olver \cite{Olver:78:error}, which replaces
the linear $1 + \delta$ error term with an exponential one: 
\begin{equation}\label{eq:olver_mdl}
    fl(x ~\op~ y) = (x ~\op~ y)e^\delta, \qquad |\delta| \le u/(1-u), 
    \qquad  \op \in \{+,-,\cdot, / \}.
\end{equation}
This formulation soundly overapproximates the standard model, as shown by the
Taylor series expansion $e^\delta = 1 + \delta + \mathcal{O}(\delta^2)$ around
$\delta =0$. The exponential form preserves the leading-order behavior of the
standard model while offering better compositional properties for analyzing
error propagation through sequences of operations~\cite{NUMFUZZ}. 

The following extended metric $RP : \R \times \R \to \R^{\geq 0}\cup\{\infty\}$,
called the \emph{relative precision metric}, captures a relative notion of error 
induced by Olver's model:
\begin{equation}\label{eq:olver}
 RP(x,y) =
 \begin{cases}
   |\ln(x/y)| & \text{if $\sgn{x} = \sgn{y}$ and $x, y \neq 0$} \\
   0 &  \text{if $x = y = 0$} \\
   \infty & \text{otherwise.}
 \end{cases}
\end{equation}
Using relative precision, the \emph{forward error} of each basic floating-point
operation is bounded by $RP(fl(x ~\op~ y),x ~\op~ y) \le u/(1-u)$. To derive a
corresponding \emph{backward error} bound, we interpret each floating-point
operation as exact on slightly perturbed inputs. Let $a_i, b_i, c_i, d_i$ be
floating-point numbers for $i = 1, 2$. Then:
\begin{center}
\begin{minipage}{0.50\textwidth}
\begin{align}
fl(a_1 + a_2)  &= a_1e^\alpha + a_2e^\alpha = \tilde{a}_1 + \tilde{a}_2
\label{eq:be_add} \\
fl(b_1 - b_2)  &= b_1e^\beta - b_2e^\beta = \tilde{b}_1 - \tilde{b}_2
\label{eq:be_sub}
\end{align}
\end{minipage}%
\begin{minipage}{0.50\textwidth}
\begin{align}
fl(c_1 \cdot c_2)  &= c_1e^{\gamma/2} \cdot c_2e^{\gamma/2}
= \tilde{c}_1 \cdot \tilde{c}_2 
\label{eq:be_mul} \\
fl(d_1/d_2)  &= (d_1e^{\delta/2}) / (d_2e^{-\delta/2})
= \tilde{d}_1 / \tilde{d}_2,
\label{eq:be_div}
\end{align}
\end{minipage}
\end{center}
\vspace{0.5em}
where $|\alpha|,|\beta|,|\gamma|,|\delta| \le \varepsilon = u/(1-u)$. The
intuition behind a perturbed input like $\tilde{a}_1 = a_1e^\alpha$ in
\Cref{eq:be_add} is that $\tilde{a}_1$ is approximately equal to $a_1 (1 +
\alpha)$ when the magnitude of $\alpha$ is small; the backward
error with respect to $a_1$ is then $RP(a_1,\tilde{a}_1) = |\alpha| \le
\varepsilon$. Thus, the result of basic arithmetic operations is exact
for perturbed inputs differing from the true inputs by at most $\varepsilon$ in
relative terms. 

\subsubsection{Backward Error Analysis by Example} 
A classic example illustrating the value of backward error analysis is the dot
product of two vectors. Although it can be computed in a backward stable way,
if the vectors are orthogonal (i.e., the dot product is zero), the relative
forward error can be arbitrarily large. In such cases, a forward error analysis
yields trivial bounds, while backward error analysis provides
meaningful bounds describing the quality of an implementation for all possible
inputs. 

To illustrate, consider vectors $x = (x_0,x_1) \in \R^2$ and $y = (y_0,y_1) \in
\R^2$ with floating-point entries. The exact dot product is $s = x_0 \cdot y_0
+ x_1 \cdot y_1$, and the floating-point version is $\tilde{s} = fl(fl(x_0
\cdot y_0) + fl(x_1 \cdot y_1))$. A backward error bound on $\tilde{s}$ follows
from \Cref{eq:be_add,eq:be_mul}. In particular, we use the bounds
for floating-point addition and multiplication to define the perturbed vectors
$\tilde{x} =(\tilde{x}_0 , \tilde{x}_1)$ and $\tilde{y} =
(\tilde{y}_0,\tilde{y}_1)$ such that their exact dot product equals
$\tilde{s}$: 
\begin{align}
\tilde{s} = fl(fl(x_0 \cdot y_0) + fl(x_1 \cdot y_1))
&= fl(x_0 e^{\delta_0/2} \cdot y_0 e^{\delta_0/2} +
  x_1 e^{\delta_1/2} \cdot y_1 e^{\delta_1/2})
  \nonumber \\
&= (x_0 e^{\delta_0/2} \cdot y_0 e^{\delta_0/2})e^{\delta_2} +
    (x_1 e^{\delta_1/2} \cdot y_1 e^{\delta_1/2}) e^{\delta_2}
  \nonumber \\
&= (x_0 e^{\delta_0/2+\delta_2} \cdot y_0 e^{\delta_0/2+\delta_2}) +
  (x_1 e^{\delta_1/2+\delta_2} \cdot y_1 e^{\delta_1/2+\delta_2}) 
  \nonumber \\
&= (x_0 e^{\delta} \cdot y_0 e^{\delta}) +
    (x_1 e^{\delta'} \cdot y_1 e^{\delta'}) 
= (\tilde{x}_0  \cdot \tilde{y}_0) + (\tilde{x}_1 \cdot \tilde{y}_1)
  \label{eq:dpbe}
\end{align}
where we set $\delta=\delta_0/2+\delta_2$ and $\delta'=\delta_1/2+\delta_2$. By
\Cref{eq:be_add,eq:be_mul}, we have
$|\delta_0|,|\delta_1|,|\delta_2|\leq \varepsilon=u/(1-u)$. Since $\delta$ is
the sum of two terms, one scaled by $1/2$, we obtain $|\delta| \le
|\delta_0/2| + |\delta_2| \le \varepsilon/2 + \varepsilon = 3\varepsilon/2$,
and similarly for $|\delta'|$. Thus, the above analysis shows that
the floating-point dot product of the vectors $x$ and $y$ is equal to an exact
dot product of the slightly perturbed inputs $\tilde{x}$ and $\tilde{y}$, where
each component satisfies $RP(x_i,\tilde{x}_i) \le 3\varepsilon/2$ and
$RP(y_i,\tilde{y}_i) \le 3\varepsilon/2$. We say that the dot product operation
is backward stable with componentwise backward error bounded by $3\varepsilon/2$
in relative precision, as formalized by \Cref{def:bstab}.

A subtle point is that the backward error for multiplication and division can
be described in a slightly different way, while still maintaining the same
backward error bound given in \Cref{eq:be_mul}. In particular, floating-point
multiplication behaves like multiplication in exact arithmetic with a
\emph{single} input subject to small perturbations: let $c_i, d_i$ be
floating-point numbers for $i = 1, 2$, then
\begin{align}
fl(c_1 \cdot c_2) = c_1 \cdot c_2e^\gamma = c_1 \cdot \tilde{c}_2 \label{eq:be_mul2}\\
fl(d_1 / d_2) = d_1e^\delta / d_2 = \tilde{d}_1 / d_2 \label{eq:be_div2}
\end{align} 
with $|\gamma|, |\delta| \le \varepsilon$. There are many other ways to assign
backward error to multiplication and division, provided the exponents sum to
$\gamma$ and $\delta$, respectively; in general, a given program may satisfy a
variety of different backward error bounds depending on how the backward error
is allocated between the program inputs. However, assigning different
per-variable backward error bounds for individual operations does not
fundamentally alter the stability analysis, as total error across all variables 
remains within the same bound. We will see in \Cref{sec:linearity} that, in some cases, 
using \Cref{eq:be_mul2} instead of \Cref{eq:be_mul} enables a backward
error analysis of computations that share variables across subexpressions.

\subsection{Backward Error Analysis in \bea{}: Motivating Examples}
In order to reason about backward error as it has been described so far, the
type system of \bea{} combines three ingredients: coeffects, distances, and
linearity. To get a sense of the critical role each of these components plays
in the type system, we first consider the following \bea{} program for
computing the dot product of 2D-vectors \stexttt{x} and \stexttt{y}:

\vspace{-1em}
\begin{center}
\begin{minipage}[t]{.5\textwidth}
\begin{lstlisting}
DotProd2 x y :=
let (x0, x1) = x in
let (y0, y1) = y in
let v = mul x0 y0 in
let w = mul x1 y1 in
add v w
\end{lstlisting}
\end{minipage}
\end{center}

\subsubsection{Coeffects}
The type system of \bea{} allows us to prove the following typing judgment: 
\begin{equation}\label{eq:dpbound} 
\emptyset \mid \stexttt{x} :_{3\varepsilon/2} \R^2, \stexttt{y}:_{3\varepsilon/2}
\R^2 \vdash \stexttt{DotProd2} : \R
\end{equation}
The $\emptyset$ represents the set of input variables which may not be perturbed; 
in this program, there are none. 
To the right of the vertical bar, \stexttt{x} and \stexttt{y}
are the inputs which may be perturbed, i.e., they may have nonzero backward error. 
The \emph{coeffect} annotations ${3\varepsilon/2}$ in the context bindings
$\stexttt{x}:_{3\varepsilon/2} \R^2$ and $\stexttt{y} :_{3\varepsilon/2} \R^2$ express
per-variable relative backward error bounds for \stexttt{DotProd2}.
Thus, the typing judgment for \stexttt{DotProd2} captures the desired
backward error bound in \Cref{eq:dpbe}.

{Coeffect} systems \cite{Ghica:2014:coeffects, Petricek:2014:coeffects,
Brunel:2014:coeffects, tate:2013:effects} have traditionally been used in the
design of programming languages that perform resource management, and provide a
formalism for precisely tracking the usage of variables in programs. In
\emph{graded coeffect systems} \cite{Gaboardi:2016:combining}, bindings in a
typing context $\Gamma$ are of the form $x :_r \sigma$, where the annotation
$r$ is some quantity controlling how $x$ can be used by the program. In
\bea{}, these annotations describe the amount of backward error that can be
assigned to the variable: a typing judgment $\emptyset \mid x
:_r \sigma \vdash e : \tau$ ensures that the term $e$ has at most $r$ backward
error with respect to the variable $x$. 

In \bea{}, the coeffect system allows us to derive backward error bounds for
larger programs from the known language primitives; the typing rules are used
to track the backward error of increasingly large programs in a compositional
way. For instance, the typing judgment given in \Cref{eq:dpbound} for the
program \stexttt{DotProd2} is derived using primitive typing rules for
addition and subtraction. These rules capture the backward error bounds
described in \Cref{eq:be_add} and \Cref{eq:be_mul}: 
\vspace{0.5em}
\begin{center}
\AXC{}
\RightLabel{(Add)}
\UIC{$\emptyset \mid  \Gamma, x:_\varepsilon \R, 
            y:_\varepsilon\R \vdash
   \add x y : \R$}
\bottomAlignProof
\DisplayProof
\hspace{2em}
\AXC{}
\RightLabel{({Mul})}
\UIC{$\emptyset \mid  \Gamma, x:_{\varepsilon/2} \R, 
            y:_{\varepsilon/2}\R \vdash \mul  x  y : \R$}
\bottomAlignProof
\DisplayProof
\end{center}
We can also derive similar rules for subtraction and division, where division
may return a $\unit$ type to represent a division-by-zero error:
\vspace{0.5em}
\begin{center}
\AXC{}
\RightLabel{(Sub)}
\UIC{$\emptyset \mid  \Gamma, x:_\varepsilon \R, 
            y:_\varepsilon\R \vdash
   \sub x y : \R$}
\bottomAlignProof
\DisplayProof
\hspace{2em}
\AXC{}
\RightLabel{({Div})}
\UIC{$\emptyset \mid  \Gamma, x:_{\varepsilon/2} \R, 
            y:_{\varepsilon/2}\R \vdash \fdiv  x  y : \R+\unit$}
\bottomAlignProof
\DisplayProof
\end{center}

The following rule captures the backward error bound described in 
\Cref{eq:be_mul2}:  

\vspace{0.5em}
\begin{center}
\AXC{}
\RightLabel{({DMul})}
\UIC{$ x: \R \mid 
    \Gamma, y:_{\varepsilon}\R \vdash \dmul  x  y : \R$}
\bottomAlignProof
\DisplayProof
\end{center}

In \bea{}, backward error originates from the arithmetic operations captured 
by these rules, each reflecting standard floating-point analyses. To make rounding 
behavior explicit, \bea{} could also be extended with a rule modeling a unary
rounding operation. 

\subsubsection{Distances}
In order to derive concrete backward error bounds, we require a notion of
\emph{distance} between points in an input space. To this end, each type
$\sigma$ in \bea{} is equipped with a distance function $d_\sigma :
\sigma \times \sigma \to \R_{\ge 0}\cup\{\infty\}$ describing how close pairs of
values of type $\sigma$ are to one another. For our numeric type
$\R$, choosing the relative precision metric given in \Cref{eq:olver} allows
us to prove backward error bounds for a relative notion of error. This idea is
reminiscent of type systems capturing function sensitivity
\citep{FUZZ,DFUZZ,NUMFUZZ}; however, the \bea{} type system does not capture
function sensitivity since this concept does not play a central role in
backward error analysis.

\subsubsection{Linearity}\label{sec:linearity}
The conditions under which composing backward stable programs yields another
backward stable program are poorly understood. Our development of a static
analysis framework for backward error analysis led to the following insight:
the composition of two backward stable programs remains backward stable
\emph{as long as they do not assign backward error to shared variables}. To
enforce this, \bea{} uses a \emph{linear} typing discipline to control variable 
duplication. While most coeffect type systems allow using a variable $x$ in two
subexpressions as long as the grades $x :_r \sigma$ and $x :_s \sigma$ are
combined in the overall program, \bea{} requires a stricter condition: linear
variables cannot be duplicated at all.

To understand why a type system for backward error analysis must control
duplication, consider the floating-point computation $h(x, a, b) =
ax^2 + bx$. The variable $x$ is used in each of the subexpressions $f(x, a) =
ax^2$ and $g(x, b) = bx$. Using the backward error bound given in
\Cref{eq:be_mul} for multiplication, the backward stability of $f$ is
guaranteed by the existence of the perturbed coefficient $\tilde{a} =
ae^{\delta_1}$ and the perturbed variable $\tilde{x}_f = xe^{\delta_2}$:
\begin{align}
  \tilde{f}(x, a) = ae^{\delta_1} \cdot (xe^{\delta_2})^2  = \tilde{a} \cdot 
  \tilde{x}_f^2=f(\tilde{x_f},\tilde{a}).
\end{align}
Similarly, the backward stability of $g$ is guaranteed by the existence of the
coefficient $\tilde{b} = be^{\delta_3/2}$ and the variable $\tilde{x}_g =
xe^{\delta_3/2}$. However, since $\delta_2$ might not be equal to $\delta_3$,
there might not be a common input $\tilde{x}$ such that
${f}(\tilde{x}, \tilde{a}) + g(\tilde{x},\tilde{b}) = \tilde{f}(x, a) +
\tilde{g}(x, b)$. 

Linearity in \bea{} ensures that we never need to reconcile multiple, possibly
incompatible backward error requirements for the same variable. However, it
excludes some programs that are backward stable---for instance,
$h(x, a, b) = ax^2 + bx$ actually \emph{is} backward stable! To regain
flexibility, we note that a variable \emph{can} be duplicated safely when it
does not need to be perturbed to provide a backward error guarantee.
For $h(x, a, b)$, using \Cref{eq:be_mul2} allows assigning zero backward error
to $x$ and non-zero backward error to $a$ and $b$. Since $x$ need not be
perturbed to provide a backward error guarantee, it can be duplicated without
violating backward stability. 

To realize this idea in \bea, the type system distinguishes between linear
(restricted-use) and non-linear (reusable) data; backward error is only
assigned to linear variables, never to non-linear ones. Technically, \bea{}
uses a dual context judgment, reminiscent of work on linear/non-linear logic
\cite{DBLP:conf/csl/Benton94}, to track both kinds of variables. In more
detail, a judgment of the form $y: \alpha \mid x :_r \sigma \vdash e: \tau$,
ensures that $e$ has at most $r$ backward error with respect to the
\emph{linear} variable $x$, and has \emph{no backward error} with respect to
the \emph{non-linear} variable $y$. (Note that the bindings in the nonlinear
context do not carry an index, because no amount of backward error can be
assigned to these variables.) The soundness theorem for \bea{} introduced in
the next section formalizes this guarantee.

\begin{remark}[Linearity and Completeness] 
  Like most type systems, \bea{} is \emph{sound but not complete}: it may
  reject backward stable programs. For instance, $f(x, y) = x \cdot y + y$ is
  backward stable but rejected by \bea{} because $y$ is reused, and an overall
  backward error guarantee cannot be derived by assigning error to $x$ alone.
  While the conditions under which forward error bounds compose has been
  studied (e.g., \citep{Bornemann:2007:backwardcompose}), less is known about
  composing backward error bounds. \bea{} enforces a sufficient but not
  necessary condition: shared variables must not accumulate incompatible
  perturbations. Whether a more permissive system can remain sound is an open
  question.
\end{remark}

\section{\bea{}: A Language for Backward Error Analysis} \label{sec:language}

\begin{figure}[tbp]
  \begin{alignat*}{3}
         &\sigma,\tau &::=~ &\unit
         \mid \num
         \mid \sigma \otimes  \sigma
         \mid \sigma + \sigma \mid \alpha
         \tag*{(types)} \\
         &\alpha &::=~ &m(\sigma) \tag*{(discrete types)} \\
	 &\Gamma &::=~ &\emptyset
	 \mid \Gamma, x:_r \sigma
	 \tag*{(linear typing contexts)} \\
	 &\Phi &::=~ &\emptyset 
	 \mid \Phi, z:\alpha
         \tag*{(discrete typing contexts)} \\
         &e,f \ &::=~ &x \mid z
         \mid ()
         \mid ~!e
         \mid  (e, f) 
         \mid \inl e
         \mid \inr e \\
         & & & \hspace{-0.25em} 
         \mid \slet x e f 
         \mid \slet {(x,y)} e f 
         \mid \dlet z e f 
         \mid \dlet {(x,y)} e f \\
         & & & \hspace{-0.25em} 
         \mid \case {e'} {\inl x.e} {\inr y.f} 
         \mid \add e f
         \mid \sub e f 
         \mid \mul e f 
         \mid \dmul e f 
         \mid \fdiv e f
        \tag*{(expressions)}
  \end{alignat*}
  \caption{Grammar for $\bea{}$ types and terms.}
  \label{fig:grammar}
\end{figure}

\bea{} is a simple first-order programming language, extended with a few
constructs that are unique to a language for backward error analysis. The
grammar of the language is presented in \Cref{fig:grammar}, and the typing
relation is presented in \Cref{fig:typing_rules}.

\subsection{{Types}} 
The base types in \bea{} are the numeric type $\num$ and the unit type $\unit$,
with a single inhabitant. Types combine using tensor product $\otimes$
and sum $ + $. The type constructor $m$ forms \emph{discrete}
versions of types; types not wrapped in $m$ are called \emph{linear}.
This distinguishes unrestricted-use (discrete) data, which cannot have 
backward error, from restricted-use (linear) data, which can.

\subsection{Typing Judgments}
Terms are typed with judgments of the form ${\Phi, z: \alpha \mid \Gamma, x:_r
\sigma \vdash e : \tau}$ where $\Phi$ is a discrete typing context and $\alpha$
is a discrete type, $\Gamma$ is a linear typing context and $\sigma$ is a
linear type, and $e$ is an expression. For linear typing contexts, variable
assignments have the form $x :_r \sigma$, where the grade $r$ is a member of a
preordered monoid $\mathcal{M} = (\mathbb{R}^{\ge0},+,0)$. Typing contexts,
both linear and discrete, are defined inductively as shown in
\Cref{fig:grammar}.

Although linear typing contexts cannot be joined together with discrete typing
contexts, linear typing contexts can be joined with other linear typing
contexts as long as their domains are disjoint. We write $\Gamma, \Delta$ to
denote the disjoint union of the linear contexts $\Gamma$ and $\Delta$.

While most graded coeffect systems support the composition of linear typing
contexts $\Gamma$ and $\Delta$ via a \emph{sum}, or \emph{contraction} operation
$\Gamma + \Delta$ ~\cite{DalLago:2022:relational, Gaboardi:2016:combining},
where the grades of shared variables in the contexts are combined together, this
operation is not supported in \bea{} because the sum operation allows
duplication of variables, but \bea{}'s strict linearity requirement does not
allow variables to be duplicated in general. However, \bea{}'s type system does
support a sum operation that adds a given grade $q \in \mathcal{M}$ to the
grades in a linear typing context: 
\begin{gather*} 
  q + \emptyset \triangleq \emptyset \qquad
  q + (\Gamma, x:_r \sigma) \triangleq q + \Gamma, x:_{q+r} \sigma.
\end{gather*}
Intuitively, this operation is used to push $q$ backward error through a typing
judgment.

In \bea{}, a well-typed expression ${ \Phi \mid x:_r \sigma \vdash e : \tau}$
is a program that has at most $r$ backward error with respect to the
\emph{linear} variable $x$, and has \emph{no backward error} with respect to
the \emph{discrete} variables in the context $\Phi$. For more general programs
of the form 
\[
	\Phi \mid x_1:_{r_1} \sigma_1, \dots, x_i:_{r_i} \sigma_i \vdash e : \tau,
\] 
\bea{} guarantees that the program has at most $r_i$ backward error with
respect to each variable $x_i$, and has \emph{no backward error} with respect
to the {discrete} variables in the context $\Phi$. This idea is formally
expressed in our soundness theorem (\Cref{thm:main}).

\subsection{{Expressions}} 
\bea{} expressions include linear variables $x$ and discrete variables $z$, as
well as a unit () value. Linear variables are bound in let-bindings of the form
{$\slet x {e} {f}$}, while discrete variables are bound in let-bindings of the
form {$\dlet z {e} {f}$}. The !-constructor
declares that an expression can be duplicated, preventing backward error
from being pushed onto the expression. The pair constructor $(e,f)$
corresponds to a tensor product, and can be composed of expressions of both
linear and discrete type. Discrete pairs are eliminated by pattern matching
using the construct {$\dlet {(x,y)} {e} {f}$}, whereas linear pairs are
eliminated using the construct $\slet {(x,y)} {e} {f}$.
The injections $\inl e$ and $\inr e$ construct the coproduct, and are
eliminated by case analysis using the construct $\case {e} {\inl x.f} {\inr y.f}$. 
The primitive arithmetic operations of the language ($\mathbf{add}$,
$\mathbf{mul}$, $\mathbf{sub}$, $\mathbf{div}$, $\mathbf{dmul}$) were introduced in
\Cref{sec:overview}. 

\subsection{{Typing Relation}} 
\begin{figure}
\begin{center}
\AXC{ }
\RightLabel{(Var)}
\UIC{$\Phi\mid \Gamma, x:_r \sigma \vdash x : \sigma$}
\bottomAlignProof
\DisplayProof
\hskip 0.5em
\AXC{}
\RightLabel{(DVar)}
\UIC{$\Phi, z:\alpha\mid \Gamma \vdash z : \alpha$}
\bottomAlignProof
\DisplayProof
\vskip 1em
\AXC{$\Phi\mid\Gamma \vdash e : \sigma$}
\AXC{$\Phi\mid\Delta \vdash f : \tau$}
\RightLabel{($\otimes $ I)}
\BinaryInfC{$\Phi\mid\Gamma, \Delta \vdash ( e, f ): \sigma \otimes   \tau $}
\bottomAlignProof
\DisplayProof
\hskip 0.5em
\AXC{}
\RightLabel{(Unit)}
\UIC{$\Phi\mid \Gamma \vdash () : \mathbf{unit}$}
\bottomAlignProof
\DisplayProof
\vskip 1em
\AXC{$\Phi\mid\Gamma \vdash e : \tau_1 \otimes  \tau_2$}
\AXC{$\Phi\mid\Delta, x:_r \tau_1, 
            y:_r \tau_2 \vdash f : \sigma$}
\RightLabel{($\otimes $ E$_\sigma$)}
\BIC{$\Phi\mid r + \Gamma, \Delta \vdash \slet {(x,y)} e f : \sigma$}
\bottomAlignProof
\DisplayProof
\vskip 1em

\AXC{$\Phi\mid\Gamma \vdash e : \alpha_1 \otimes  \alpha_2$}
\AXC{$\Phi, z_1: \alpha_1, 
            z_2: \alpha_2\mid\Delta \vdash f : \sigma$}
\RightLabel{($\otimes $ E$_\alpha$)}
\BIC{$\Phi\mid \Gamma, \Delta \vdash 
        \dlet {(z_1,z_2)} e f : \sigma$}
\bottomAlignProof
\DisplayProof
\vskip 1em

\AXC{$\Phi\mid\Gamma \vdash e' : \sigma+\tau$}
\AXC{$\Phi\mid\Delta, x:_q \sigma \vdash e : \rho$}
\AXC{$\Phi\mid\Delta, y:_q \tau \vdash f: \rho$}
\RightLabel{($+$ E)}
\TIC{$\Phi\mid q+\Gamma,\Delta \vdash \case{e'}{\inl x.e}{\inr y.f} : \rho$}
\bottomAlignProof
\DisplayProof
\vskip 1em
\AXC{$\Phi\mid\Gamma \vdash e : \sigma$ }
\RightLabel{($+$ $\text{I}_L$)}
\UIC{$\Phi\mid\Gamma \vdash \inl \ e : \sigma + \tau$}
\bottomAlignProof
\DisplayProof
\hskip 0.5em
\AXC{$\Phi\mid\Gamma \vdash e : \tau$ }
\RightLabel{($+$ $\text{I}_R$)}
\UIC{$\Phi\mid\Gamma \vdash \inr \ e : \sigma + \tau$}
\bottomAlignProof
\DisplayProof
\vskip 1em
\AXC{$\Phi\mid\Gamma \vdash e :  \tau$}
\AXC{$\Phi\mid\Delta, x :_r \tau \vdash f : \sigma$}
\RightLabel{(Let)}
\BIC{$\Phi\mid r + \Gamma,\Delta \vdash \slet x e f : \sigma$}
\bottomAlignProof
\DisplayProof
\vskip 1.3em
\AXC{$\Phi\mid\Gamma \vdash e :  \sigma$}
\RightLabel{(Disc)}
\UIC{$\Phi \mid \Gamma  \vdash ~ !e : m(\sigma)$}
\bottomAlignProof
\DisplayProof
\hskip 0.5em
\AXC{$\Phi\mid\Gamma \vdash e :  \alpha$}
\AXC{$\Phi, z: \alpha \mid\Delta \vdash f : \sigma$}
\RightLabel{(DLet)}
\BIC{$\Phi \mid \Gamma,\Delta \vdash \dlet z e f : \sigma$}
\bottomAlignProof
\DisplayProof
\vskip 1.3em
\AXC{}
\RightLabel{(Add, Sub)}
\UIC{$\Phi\mid \Gamma, x:_{\varepsilon + r_1} \num, 
            y:_{\varepsilon + r_2}\num \vdash
   \{\mathbf{add}, 
    \mathbf{sub}\} \ x \ y : \num$}
\bottomAlignProof
\DisplayProof
\vskip 1.3em
\AXC{ }
\RightLabel{(Mul)}
\UIC{$\Phi\mid \Gamma, 
    x:_{\varepsilon/2 + r_1} \num, 
    y:_{\varepsilon/2 + r_2}\num \vdash \mul  x  y : \num$}
\bottomAlignProof
\DisplayProof
\vskip 1.3em
\AXC{ }
\RightLabel{(Div)}
\UIC{$\Phi\mid
    \Gamma, x:_{\varepsilon/2 + r_1} \num, 
    y:_{\varepsilon/2 + r_2}\num
     \vdash \fdiv x  y : \num + \unit$}
\bottomAlignProof
\DisplayProof
\vskip 1.3em
\AXC{ }
\RightLabel{(DMul)}
\UIC{$\Phi,z: m(\num)\mid 
        \Gamma, 
    x:_{\varepsilon + r}\num \vdash \dmul  z  x : \num$}
\bottomAlignProof
\DisplayProof
\vskip 1em
\end{center}
    \caption{Typing rules  with 
    $q,r,r_1,r_2 \in \mathbb{R}^{\ge0}$ and fixed parameter $\varepsilon \in \mathbb{R}^{>0}$. }
    \label{fig:typing_rules}
\end{figure}

The full type system for \bea{} is given in \Cref{fig:typing_rules}. It is
parametric with respect to the constant $\varepsilon = u/(1-u)$, where $u$
represents the unit roundoff. 

Let us now describe the rules in \Cref{fig:typing_rules}, starting with those
that employ the sum operation between grades and linear typing contexts: the
linear let-binding rule (Let) and the elimination rules for sums ($+$ E) and
linear pairs ($\otimes$ E$_\sigma$). Using the intuition that a grade describes
the backward error bound of a variable with respect to an expression, we see
that whenever we have an expression $e$ that is well-typed in a context
$\Gamma$ and we want to use $e$ in place of a variable that has a backward
error bound of $r$ with respect to another expression, then we must assign $r$
backward error onto the variables in $\Gamma$ using the sum operation $r +
\Gamma$. That is, if an expression $e$ has a backward error bound of
$q$ with respect to a variable $x$ and the expression $f$ has backward error
bound of $r$ with respect to a variable $y$, then $f[e/y]$ will have backward
error bound of $r + q$ with respect to the variable $x$.

The action of the !-constructor is illustrated in the Disc rule, which promotes
an expression of linear type to discrete type. The
!-constructor allows an expression to be freely duplicated, while preventing
backward error to be pushed onto the expression.

The only remaining rules in
\Cref{fig:typing_rules} that are not mostly standard are the rules for
primitive arithmetic operations: addition (Add), subtraction (Sub),
multiplication between two linear variables (Mul), division (Div), and
multiplication between a discrete and non-linear variable (DMul). While these
rules are designed to mimic the relative backward error bounds for
floating-point operations following analyses described in the numerical
analysis literature
\cite{Olver:78:error,Higham:2002:Accuracy,Corless:2013:Analysis} and as briefly
introduced in \Cref{sec:overview}, they also allow weakening, or relaxing, the
backward error guarantee. Intuitively, if the backward error of an expression
with respect to a variable is \emph{bounded} by $\varepsilon$, then it is also
\emph{bounded} by $\varepsilon + r$ for some grade $r \in \mathcal{M}$. We also
note that division is a partial operation, where coproduct result is used to
model a possible division by zero.

\subsection{Backward Error Soundness}\label{ssec:soundness}
With the basic syntax of the language in place, we can state the following 
backward error soundness theorem for the type system, which is the central 
guarantee for \bea{}. 

\begin{theorem}[Backward Error Soundness]\label{thm:main}
Let $e$ be a well-typed \bea{} program
\[ 
z_1:\alpha_1,\dots,z_n:\alpha_n \mid  x_1:_{r_1}\sigma_1,\dots,x_m:_{r_m}\sigma_m
\vdash e : \tau ,
\] 
and let $(p_i)_{1 \le i \le n}$ and $(k_j)_{1
\le j \le m}$ be sequences of values such that $ \vdash p_i : \alpha_i$ and $ \vdash k_j
:  \sigma_j$ for all $i \in 1, \dots, n$ and $j \in 1, \dots, m$. If the
program $e[p_1/z_1]\cdots[{k_m}/x_m]$ evaluates to a value $v$ under an
approximate floating-point semantics, then (1) there exist well-typed values
$(\tilde{k}_j)_{1 \le i \le m}$ such that the program
$e[p_1/z_1]\cdots[{\tilde{k}_m}/x_m]$ also evaluates to $v$ under an ideal,
infinite-precision semantics, and (2) the distance between the values ${k}_i$
and $\tilde{k}_i$ is at most $r_i$ for every $j \in 1, \dots, m$. 
\end{theorem}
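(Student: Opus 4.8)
The plan is to prove Theorem~\ref{thm:main} by establishing a semantic interpretation of \bea{} in the category $\Bel$ of backward error lenses, and then reading off the soundness statement from the structure of the denotations. Concretely, every well-typed term $\Phi \mid \Gamma \vdash e : \tau$ should denote a morphism $\denot{e}$ that packages three related transformations: an ideal (infinite-precision) forward map $\pdenotid{e}$, an approximate (floating-point) forward map $\pdenotap{e}$, and a backward map $\pdenotb{e}$ that, given an output, produces a perturbed input. The defining coherence conditions of a $\Bel$-morphism should say precisely that (i) the approximate forward map followed by the backward map, then the ideal forward map, recovers the approximate output (a round-tripping/section-like law), and (ii) the backward map moves the input by a bounded amount, where the bound is dictated by the coeffect grades in $\Gamma$ together with the distance functions $d_\sigma$ on types. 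The graded comonad structure $\Box_r$ on $\Bel$ interprets the graded context annotations, and the linear/non-linear split of contexts is interpreted by the fact that only the linear part carries a nontrivial backward component while the discrete part $\Phi$ is interpreted with a trivial (identity) backward map and hence zero perturbation.

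The key steps, in order, are as follows. First, define the category $\Bel$: objects are (extended) metric spaces (or the types' underlying sets equipped with $d_\sigma$), morphisms are triples $(f_{id}, f_{ap}, f_b)$ satisfying the two coherence laws above, and check that $\Bel$ is symmetric monoidal (for $\otimes$), has coproducts (for $+$), and carries a family of graded comonads $\Box_r$ indexed by $r \in \mathcal{M}=(\R^{\ge0},+,0)$ with the expected lax structure $\Box_r \Box_s \Rightarrow \Box_{r+s}$, $\Box_0 \Rightarrow \mathrm{Id}$. Second, interpret types as objects and contexts as objects via $\denot{\Phi \mid \Gamma} = \denot{\Phi} \otimes \bigotimes_i \Box_{r_i}\denot{\sigma_i}$, with $\denot{m(\sigma)}$ the ``discrete'' comonoid object admitting diagonal maps (so that non-linear variables can be duplicated). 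Third, interpret each typing rule as a construction on $\Bel$-morphisms: $(\otimes\,\mathrm{I})$, $(+\,\mathrm{I})$, $(\mathrm{Var})$, $(\mathrm{DVar})$, $(\mathrm{Unit})$ use the monoidal/cartesian-on-discrete structure; $(\mathrm{Let})$, $(\otimes\,\mathrm{E}_\sigma)$, $(+\,\mathrm{E})$ use composition together with the graded-comonad multiplication, which is exactly where the syntactic operation $r+\Gamma$ gets matched by $\Box_r$ applied to the denotation of $\Gamma$; $(\mathrm{Disc})$, $(\mathrm{DLet})$, $(\otimes\,\mathrm{E}_\alpha)$ use the discreteness/duplicability of $m(\sigma)$; and the arithmetic rules $(\mathrm{Add})$, $(\mathrm{Sub})$, $(\mathrm{Mul})$, $(\mathrm{Div})$, $(\mathrm{DMul})$ are interpreted by explicit $\Bel$-morphisms whose backward maps are precisely the perturbations exhibited in \Cref{eq:be_add,eq:be_sub,eq:be_mul,eq:be_div,eq:be_mul2}, with the $RP$-distance bounds $\varepsilon$, $\varepsilon/2$ verifying the coherence law; the $+r_i$ slack in those rules is absorbed by postcomposing with the lax map $\Box_\varepsilon \Rightarrow \Box_{\varepsilon+r}$. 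Fourth, prove by induction on typing derivations that $\denot{e}$ is a well-defined $\Bel$-morphism (the \emph{fundamental lemma}). Fifth, connect the semantics to the operational statement of the theorem: show that $\pdenotap{e}$ computes the floating-point evaluation (so if $e[\vec p/\vec z][\vec k/\vec x]$ evaluates to $v$ under the approximate semantics then $\pdenotap{e}(\vec p,\vec k)=v$), that $\pdenotid{e}$ computes the infinite-precision evaluation, and then define $\tilde k_j$ as the $j$-th component of $\pdenotb{e}(\vec p,\vec k,v)$ applied at $v$; coherence law (i) gives $e[\vec p/\vec z][\vec{\tilde k}/\vec x] \Downarrow_{id} v$, and coherence law (ii), together with the decomposition of $\denot{\Phi\mid\Gamma}$ as $\Box_{r_j}\denot{\sigma_j}$ in slot $j$, gives $d_{\sigma_j}(k_j,\tilde k_j)\le r_j$, which is exactly conclusions (1) and (2).

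I expect the main obstacle to be getting the definition of $\Bel$ and its graded comonad exactly right so that \emph{all} the typing rules go through simultaneously --- in particular, reconciling (a) the strict-linearity requirement (no contraction on linear variables, so $\Bel$ must \emph{not} be cartesian, only monoidal, yet still admit diagonals on the image of $m(-)$), (b) the need for the backward maps to compose associatively and coherently with the graded multiplication (so that substitution $f[e/y]$ really does yield backward error $r+q$, matching the informal argument given after the typing rules), and (c) the coproduct elimination rule $(+\,\mathrm{E})$, where the two branches share the grade $q$ and the backward map must case-split on which injection the output came from while still respecting the single bound $q$ on the scrutinee's variables. A secondary but real difficulty is the partiality of division: the codomain $\num+\unit$ and the possibility of a division-by-zero ``error'' value means the backward map on that branch must be handled carefully (e.g.\ the $\unit$/error branch can be assigned zero perturbation, or the lens laws must be stated to tolerate it), and one must check the coherence law still holds vacuously or trivially there. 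Once $\Bel$ is pinned down, the inductive fundamental lemma and the final operational corollary should be largely bookkeeping, modulo a careful statement of the approximate and ideal operational semantics and an adequacy lemma linking each to $\pdenotap{-}$ and $\pdenotid{-}$ respectively.
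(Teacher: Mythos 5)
Your plan is the paper's own strategy, step for step: interpret typing derivations as morphisms in a category $\Bel$ of backward error lenses (triples of ideal, approximate, and backward maps), interpret the graded linear context via a graded comonad and the discrete context via a duplicable full subcategory, prove the interpretation well-defined by induction on derivations, bridge to ideal and approximate operational semantics via soundness and adequacy lemmas, and finish by applying the backward map to the observed output and invoking the two lens laws to obtain conclusions (1) and (2).

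The one genuine gap is in the definition of $\Bel$ and, consequently, of the graded comonad you call $\Box_r$. You take objects to be ``(extended) metric spaces'' and state the backward condition informally as ``the backward map moves the input by a bounded amount, where the bound is dictated by the coeffect grades.'' But a lens law must be a property of a morphism between two fixed objects; if the bound is an externally supplied grade $r$, the law cannot be preserved by lens composition, and there is no way to define $\Box_r$ acting on objects. The paper's resolution is to make the grade part of the object: objects are \emph{slack distance spaces} $(X,d_X,r_X)$ with a nonnegative slack $r_X$; Property~1 is stated as the inequality-modulo-slack $d_X(x,b(x,y)) - r_X \le d_Y(\tilde f(x),y) - r_Y$; and $D_r(X,d_X,r_X)=(X,d_X,r_X+r)$ with $D_r$ the identity on underlying maps. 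Shifting the slack additively (rather than scaling the metric multiplicatively as a sensitivity comonad would) is exactly what makes $r+\Gamma$ cohere with lens composition in the fundamental lemma, and without it the inductive case for the primitive arithmetic lenses fails: the Add lens genuinely introduces $\varepsilon$ of backward error even when $y=\tilde f(x)$, so an unadorned inequality $d_X(x,b(x,y))\le d_Y(\tilde f(x),y)$ is simply false for that morphism. Separately, the adequacy step you defer to ``bookkeeping'' is carried in the paper through an intermediate ungraded language \LangS{} with both $\stepid$ and $\stepap$ semantics together with a pairing lemma relating the forgetful functors $\Uid,\Uap:\Bel\to\Set$ to the \LangS{} denotations; your plan will need an analogous bridge to make the operational part of the argument precise.
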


We will return to the precise statement and proof of this theorem in
\Cref{sec:metatheory}, but we first walk through some examples of \bea{}
programs.

\section{Example \bea{} Programs}\label{sec:examples}
We present case studies showing how algorithms with known backward error bounds from
the literature can be implemented in \bea{}. We begin with matrix-vector
multiplication, starting with the simplest case and then composing programs to
implement a generalized version. Next, we explore polynomial evaluation,
comparing a naive strategy with Horner's scheme. Finally, we write a triangular
linear solver to illustrate some of \bea{}'s more complex language features.

To improve the readability of our examples, we adopt several conventions.
First, matrices are assumed to be stored in row-major order. Second, following
the convention used in the grammar for \bea{} in \Cref{sec:language}, we use
\stexttt{x} and \stexttt{y} for linear variables and \stexttt{z} for discrete
variables. Finally, for types, we denote both discrete and linear numeric
types by $\R$, and use a shorthand for type assignments of vectors and
matrices. For instance: $\R^2 \equiv (\R \otimes \R)$ and $\R^{3 \times 2}
\equiv (\R \otimes \R) \otimes (\R \otimes \R) \otimes (\R \otimes \R)$.

Since \bea{} is a simple first-order language and currently does not support
higher-order functions or variable-length tuples, programs can become verbose.
To reduce code repetition, we use basic user-defined abbreviations in our
examples. 

\subsection{Matrix-Vector Product}
\subsubsection{A Simple Matrix-Vector Product}
To illustrate how \bea{} performs backward error analysis, consider the
matrix-vector product of a matrix $A \in \R^{2\times 2}$ and a vector $z \in
\R^2$. We present the computation in both standard mathematical notation and
\bea{} syntax, and explain how the type system verifies a backward error bound.
Rather than showing the full typing derivation, we highlight key typing rules
and their connection to standard floating-point error analysis. To begin, let
\begin{align*}
A = 
\begin{pmatrix}
a_{00} & a_{01}\\
a_{10} & a_{11} 
\end{pmatrix} \in \R^{2\times 2}
\quad
\text{and}
\quad
z = 
\begin{pmatrix}
z_0\\
z_1
\end{pmatrix} \in \R^2
\end{align*}
with exact product
\begin{align*}
A z = 
\begin{pmatrix}
a_{00}z_0 +  a_{01}z_1\\
a_{10}z_0 +  a_{11}z_1
\end{pmatrix}. 
\end{align*}
The corresponding floating-point matrix-vector product $fl(Az)\in \R^2$ is
\begin{align*}
fl(A z) = 
\begin{pmatrix}
fl(fl(a_{00} \cdot z_0) +  fl(a_{01} \cdot z_1))\\
fl(fl(a_{10}\cdot z_0) +  fl(a_{11} \cdot z_1))
\end{pmatrix}.
\end{align*}
Using \Cref{eq:be_add} and \Cref{eq:be_mul2}, we have $fl(Az) = \tilde{A}z$ where  
\begin{align*}
\tilde{A} = 
\begin{pmatrix}
\tilde{a}_{00}  &  \tilde{a}_{01} \\
\tilde{a}_{10} &  \tilde{a}_{11}
\end{pmatrix},
\end{align*}
and each entry $\tilde{a}_{ij}$ of $\tilde{A}$ satisfies $\tilde{a}_{ij} =
a_{ij}e^\delta$ with $|\delta| \le 2 \varepsilon$. We encode this analysis of
the matrix-vector product in \bea{} as follows:
\vspace{-1em}
\begin{center}
\begin{minipage}[t]{.6\textwidth}
\begin{lstlisting} 
MatVecEx ((a00, a01), (a10, a11)) (z0, z1) :=
let s0 = dmul z0 a00 in
let s1 = dmul z1 a01 in 
let s2 = dmul z0 a10 in 
let s3 = dmul z1 a11 in 
let u0 = add s0 s1 in 
let u1 = add s2 s3 in 
(u0, u1)
\end{lstlisting}
\end{minipage}
\end{center}
Each \stexttt{dmul} operation represents a floating-point multiplication of a
matrix entry and a vector component. According to the DMul rule, the backward
error is attributed to the second operand: \stexttt{dmul z0 a00} is interpreted
as being precisely equal to the exact result of multiplying the exact value of
\stexttt{z0} with a perturbed version of \stexttt{a00}. Using the standard
backward error notation (\Cref{eq:be_mul2}), the assignments in
\stexttt{MatVecEx} correspond to
\begin{equation*}
\begin{aligned}
s_0 &= fl(z_0 \cdot a_{00}) = z_0 (a_{00} e^{\delta_0}) \quad &   
s_1 &= fl(z_1 \cdot a_{01}) = z_1 (a_{01} e^{\delta_1})  \\
s_2 &= fl(z_0 \cdot a_{10}) = z_0 (a_{10} e^{\delta_2}) \quad &    
s_3 &= fl(z_1 \cdot a_{11}) = z_1 (a_{11} e^{\delta_3}),
\end{aligned}
\end{equation*}
where each $e^{\delta_i}$ satisfies $|\delta_i| \le \varepsilon=u/(1-u)$. 

The backward error from a single \stexttt{dmul} is composed with errors from
later operations via the let-binding rule (Let). To illustrate, consider the
error attributed to \stexttt{a00}. Let $e$ denote the remainder of
\stexttt{MatVecEx} after the assignment of \stexttt{s0}, so that $\stexttt{let
s0 = dmul z0 a00 in }e \triangleq \stexttt{MatVecEx}$.
By the DMul rule, the expression \stexttt{dmul z0 a00} is well-typed in the
context $\Phi \mid \stexttt{a00} :_{\varepsilon} \R$, where $\Phi \triangleq
\stexttt{z0}:\R,\stexttt{z1}:\R$, indicating a backward error of at most
$\varepsilon$ attributed to \stexttt{a00}. When the result is bound to
\stexttt{s0} and used in $e$, the Let rule propagates any error assigned to
\stexttt{s0} in $e$ to \stexttt{a00}. Since \stexttt{s0} is later used in a
floating-point addition, \stexttt{a00} accumulates an additional $\varepsilon$
of error. This composition of errors is captured in the following valid typing
derivation:
\[ 
  \AXC{}\RightLabel{(DMul)}
  \UIC{$\Phi\mid \stexttt{a00} :_{\varepsilon} \R \vdash \stexttt{dmul z0 a00} : \R$} 
  \AXC{\strut \vdots}\RightLabel{(Let)}
  \UIC{$\Phi\mid \stexttt{s0}:_\varepsilon \R, \Gamma \vdash e : \R^2$} \RightLabel{(Let)}
  \BinaryInfC{$\Phi \mid (\varepsilon + \stexttt{a00} :_{\varepsilon} \R), \Gamma
  \vdash \stexttt{let s0 = dmul z0 a00 in }e : \R^2 $}
  \bottomAlignProof \DisplayProof 
\]
where $\Gamma \triangleq  \stexttt{a01}: _{2\varepsilon} \R,
\stexttt{a10}:_{2\varepsilon} \R, \stexttt{a11}: _{2\varepsilon} \R.$ The sum
$\varepsilon + \stexttt{a00} :_{\varepsilon} \R$ reduces to $\stexttt{a00}
:_{2\varepsilon} \R$, meaning that \stexttt{a00} accumulates a total backward
error of $2\varepsilon$ in \stexttt{MatVecEx}.

We now generalize from \stexttt{a00} to the full set of inputs. Each
multiplication \stexttt{dmul} introduces a backward error of $\varepsilon$ to
the corresponding matrix input (\stexttt{a00}, \stexttt{a01}, \stexttt{a10},
\stexttt{a11}), and each resulting product (\stexttt{s0}, \stexttt{s1},
\stexttt{s2}, \stexttt{s3}) is later used in a floating-point addition.
According to the rounding error analysis for addition (\Cref{eq:be_add}), the
assignments to \stexttt{u0} and \stexttt{u1} in \stexttt{MatVecEx} are
interpreted as 
\begin{align*} 
u_0 = fl(s_0 + s_1) = (s_0 + s_1) e^{\delta_4} \quad \text{ and } \quad
u_1 = fl(s_2 + s_3) = (s_2 + s_3) e^{\delta_5},
\end{align*}
where the new error terms $e^{\delta_4}$ and $e^{\delta_5}$ account for
rounding errors introduced by the floating-point additions and satisfy
$|\delta_4|,|\delta_5| \le \varepsilon$. The Add typing rule reflects this
analysis, assigning a backward error of $\varepsilon$ to each operand of the
addition. 

As a result, the local variables \stexttt{s0}, \stexttt{s1}, \stexttt{s2} and
\stexttt{s3} contribute $\varepsilon$ of backward error to the (linear) 
variables in the expressions they are bound to. This error is added to the 
$\varepsilon$ error already attributed to the \stexttt{dmul} operations. Thus, 
each matrix element \stexttt{a00}, \stexttt{a01}, \stexttt{a10}, and \stexttt{a11} 
accumulates a total backward error of $2 \varepsilon$, ensuring that the 
computed matrix-vector product corresponds to the exact result of a perturbed 
input within the expected bound.

\subsubsection{Generalized Matrix-Vector Product}
A key feature of \bea{}'s type system is its ability to compose backward error
guarantees for smaller programs to derive guarantees for larger programs.
We illustrate this feature with a
series of examples that build up to a scaled matrix-vector multiplication of
the form $a \cdot (M \cdot v) + b \cdot u$, where $M \in \R^{m \times n}$, $v
\in \R^{n}$, $u \in \R^m$, and $a,b \in \R$. Since \bea{} does not currently
support variable-length tuples, we focus on the case where $M \in \R^{2 \times
2}$.

We begin with \stexttt{SVecAdd}, which scales a vector using \stexttt{ScaleVec}
and adds the result to another vector. Given a discrete variable $\stexttt{a}
: \R$ and linear variables $\stexttt{x}: \R^2$ and $\stexttt{y}: \R^2$, the
programs are implemented as follows:
\vspace{-1em}
\begin{center}
\begin{minipage}[t]{.45\textwidth}
\begin{lstlisting}
ScaleVec a x :=
let (x0, x1) = x  in 
let u = dmul a x0 in 
let v = dmul a x1 in 
(u, v)
\end{lstlisting}
\end{minipage}
\begin{minipage}[t]{.45\textwidth}
\begin{lstlisting}
SVecAdd a x y :=
let (x0, x1) = ScaleVec a x in 
let (y0, y1) = y  in 
let u = add x0 y0 in
let v = add x1 y1 in
(u, v)
\end{lstlisting}
\end{minipage}
\end{center}
\noindent
These programs have the following valid typing judgments:
\[ 
  \stexttt{a}: \R \mid \stexttt{x} :_{\varepsilon}
  \R^2 \vdash \stexttt{ScaleVec a x} : \R
  \quad \text{and} \quad
  \stexttt{a}:  \R \mid \stexttt{x} :_{2\varepsilon} \R^2,
  \stexttt{y}:_{\varepsilon} \R^2 \vdash \stexttt{SVecAdd a x y} : \R.
\]
In the typing judgment for \stexttt{SVecAdd}, we observe that the linear
variable \stexttt{x} has a backward error bound of $2 \varepsilon$, while the
linear variable \stexttt{y} has backward error bound of only $\varepsilon$.
This difference arises because \stexttt{x} accumulates $\varepsilon$ backward
error from \stexttt{ScaleVec} and an additional $\varepsilon$ backward error
from the vector addition with the linear variable \stexttt{y}.

Now, given discrete variables $\stexttt{a} : \R$
and $\stexttt{b} : \R$, and $\stexttt{v}: \R^{2}$,
along with the linear variables 
$\stexttt{M}: \R^{2 \times 2}$ 
and $\stexttt{u}: \R^{2}$, 
we can compute a matrix-vector product of 
\stexttt{M} and \stexttt{v}
with
\stexttt{MatVecMul},
and use the result in the scaled matrix-vector 
product, \stexttt{SMatVecMul}:
\vspace{-1em}
\begin{center}
\begin{minipage}[t]{.45\textwidth}
\begin{lstlisting}
MatVecMul M v :=
let (m0, m1) = M in 
let u0 = InnerProduct m0 v in 
let u1 = InnerProduct m1 v in 
(u0, u1)
\end{lstlisting}
\end{minipage}
\begin{minipage}[t]{.45\textwidth}
\begin{lstlisting}
SMatVecMul M v u a b :=
let x = MatVecMul M v in
let y = ScaleVec b u  in
SVecAdd a x y 
\end{lstlisting}
\end{minipage}
\end{center}
For \stexttt{MatVecMul}, we rely on a program \stexttt{InnerProduct},
which computes the dot product of two $2 \times 2$ vectors. Notably,
\stexttt{InnerProduct} differs from the \stexttt{DotProd2} program 
described in \Cref{sec:overview} because it assigns backward error only onto 
the first vector. The type of this program is:
\[
  \stexttt{v}: \R^2 \mid \stexttt{u} :_{2\varepsilon}
  \R^2 \vdash \stexttt{InnerProduct u v}: \R
\]
The \bea{} programs \stexttt{MatVecMul} and 
\stexttt{SMatVecMul} have the following valid typing judgments: 
\begin{align*}
  \stexttt{v} :  \R^2 \mid \stexttt{M} :_{2\varepsilon}  \R^{2\times 2} &\vdash
  \stexttt{MatVecMul M v} : \R^2
  \\
  \stexttt{a}:\R, \stexttt{b}:\R, \stexttt{v}:\R^{2} 
  \mid \stexttt{M} :_{4\varepsilon} \R^{2\times 2},
  \stexttt{u}:_{2\varepsilon}\R^{2} &\vdash \stexttt{SMatVecMul M v u a b}:
  \R^2
\end{align*}

By error soundness, these judgments imply that the vector \stexttt{u} has at
most $2 \varepsilon$ backward error and the matrix \stexttt{M} has at most $4
\varepsilon$ backward error. The bound for \stexttt{M} has two sources:
\stexttt{MatVecMul} contributes at most $2 \varepsilon$ backward error, and
\stexttt{SVecAdd} adds at most an additional $2 \varepsilon$ backward error, 
resulting in a backward error bound of $4 \varepsilon$. Likewise, \stexttt{u}
accumulates at most $2 \varepsilon$ backward error from \stexttt{ScaleVec}
and \stexttt{SVecAdd}, each contributing at most $\varepsilon$. As shown in
\Cref{sec:evaluation}, \bea{}'s inferred bounds match the worst-case
theoretical backward error bounds from the literature.

These examples highlight the compositional nature of \bea's analysis:
like all type systems, the type of a \bea{} program is derived from the types
of its subprograms. While the numerical analysis literature is unclear on
whether (and when) backward error analysis can be performed compositionally
(e.g., ~\citep{Bornemann:2007:backwardcompose}), \bea{} demonstrates that this
is in fact possible.

\subsection{Polynomial Evaluation} 
To demonstrate \bea{}'s fine-grained backward error analysis, we start with two
simple programs for polynomial evaluation. The first, \stexttt{PolyVal},
follows a naive scheme, computing each term by multiplying the corresponding
coefficient with successive powers of the variable and summing the results.
The second, \stexttt{Horner}, uses Horner's method, which rewrites the
polynomial to minimize operations \cite[p.94]{Higham:2002:Accuracy}. We focus
on second-degree polynomials; \Cref{sec:implementation} describes a prototype
implementation of \bea{} and evaluates its inferred bounds for higher-degree
cases.

Given a tuple $\stexttt{a} : \R^3$ of coefficients and a discrete variable
$\stexttt{z} : \R$, the \bea{} programs for evaluating a second-order
polynomial $p(z) = a_0 + a_1z + a_2z^2$ using naive polynomial evaluation and
Horner's method are shown below. 
\vspace{-1em}
\begin{center}
\begin{minipage}[t]{.5\textwidth}
\begin{lstlisting}
PolyVal a z :=
let (a0, a') = a  in 
let (a1, a2) = a' in
let y1  = dmul z a1  in 
let y2' = dmul z a2  in
let y2  = dmul z y2' in
let x   = add a0 y1  in
add x y2
\end{lstlisting}
\end{minipage}
\begin{minipage}[t]{.5\textwidth}
\begin{lstlisting}
Horner a z :=
let (a0, a') = a   in 
let (a1, a2) = a'  in
let y1 = dmul z a2 in
let y2 = add a1 y1 in
let y3 = dmul z y2 in
add a0 y3
\end{lstlisting}
\end{minipage}
\end{center}
\noindent
Recall from \Cref{sec:overview} that the $\textbf{dmul}$ operation assigns
backward error onto its second argument; in the programs above, the operation
indicates that backward error should not be assigned to the discrete variable
\stexttt{z}. Using \bea{}'s type system, the following typing judgments are
valid:
\[ 
  \stexttt{z} : \R \mid \stexttt{a} :_{3\varepsilon} \R^3 \vdash \stexttt{PolyVal a z}: \R 
  \qquad  \qquad
  \stexttt{z} : \R \mid \stexttt{a} :_{4\varepsilon} \R^3 \vdash \stexttt{Horner a z} : \R
\] 
Surprisingly, although Horner's scheme is often considered more numerically
stable due to fewer floating-point operations, it can incur greater backward
error with respect to the coefficient vector. Examining each coefficient
individually provides more insight. By rewriting the implementations to take
each coefficient as a separate input, we can derive individual backward error
bounds for each coefficient. With this change, \bea{}'s type system yields the
following valid judgments:
\begin{align*}
  \stexttt{z} : \R \mid \stexttt{a0} :_{2\varepsilon}\R, \stexttt{a1}:_{3\varepsilon}\R, 
  \stexttt{a2}:_{3\varepsilon}\R &\vdash
   \stexttt{PolyValAlt z a0 a1 a2}: \R
  \\
  \stexttt{z} : \R \mid \stexttt{a0} :_{\varepsilon}\R, \stexttt{a1}:_{3\varepsilon}\R, 
  \stexttt{a2}:_{4\varepsilon}\R &\vdash
  \stexttt{HornerAlt z a0 a1 a2}: \R
\end{align*}
We see that Horner's scheme assigns more backward error onto the coefficients
of higher-order terms than lower-order terms, while naive polynomial evaluation 
assigns the same error onto all but the lowest-order coefficient. In this way, 
\bea{} can be used to investigate the numerical stability of different polynomial 
evaluation schemes by providing a fine-grained error analysis.

\subsection{Triangular Linear Solver}
One of the benefits of integrating error analysis with a type system is the
ability to weave common programming language features, such as conditionals
(if-statements) and error-trapping, into the analysis. We demonstrate these
features in our final, and most complex example: a linear solver for triangular
matrices. Given a lower triangular matrix $A\in\R^{2\times 2}$ and a vector
$b\in\R^2$, the linear solver should compute return a vector $x$ satisfying $Ax
= b$ if there is a unique solution.

We comment briefly on the program \stexttt{LinSolve}, shown below. The matrix
and vector are given as $\stexttt{((a00, a01), (a10, a11))}:\R^{2\times 2}$,
assuming $\stexttt{a01}=0$ and $\stexttt{(b0, b1)}:\R^2$. The program returns
either the solution vector $x$ or an error if the system does not have a unique
solution. The $\mathbf{div}$ operator has return type $\R + \unit$, where
$\unit$ represents division by zero. Ensuing computations can check if the
division succeeded using $\mathbf{case}$ expressions. This example also uses
the $!$-constructor to convert a linear variable into a discrete one; this is
required since the later entries in the vector $x$ depend on---i.e., require
duplicating---earlier entries in the vector.
\begin{center}
\begin{lstlisting}
    LinSolve ((a00, a01), (a10, a11)) (b0, b1) :=
    let x0_or_err = div b0 a00 in // solve for x0 = b0 / a00
    case x0_or_err of 
      inl (x0) => // if div succeeded
        dlet d_x0 = !x0 in // make x0 discrete for reuse
        let s0 = dmul d_x0 a10 in // s0 = x0 * a10
        let s1 = sub b1 s0 in // s1 = b1 - x0 * a10
        let x1_or_err = div s1 a11 in // solve for x1 = (b1 - x0 * a10) / a11
        case x1_or_err of 
          inl (x1) => inl (d_x0, x1) // return (x0, x1)
        | inr (err) => inr err // division by 0
    | inr (err) => inr err // division by 0
\end{lstlisting}
\end{center}
\noindent 
The type of \stexttt{LinSolve} is
$
  \stexttt{A}:_{5\varepsilon/2}\R^{2\times 2}, \stexttt{b}:_{3\varepsilon/2}\R^2
  \vdash \stexttt{LinSolve A b} : \R^2 + \unit.
$
Hence, \stexttt{LinSolve} has a guaranteed backward error bound of at most
$5\varepsilon/2$ with respect to the matrix \stexttt{M} and at most
$3\varepsilon/2$ with respect to the vector \stexttt{b}. If either of the
division operations fail, the program returns $\unit$. This example demonstrates
how various features in \bea{} combine to establish backward error guarantees
for programs involving control flow and duplication, via careful control of how
to assign and accumulate backward error through the program.

\begin{remark}[Conditionals and Backward Error]
The reader may wonder how programs with conditionals can have bounded backward error. 
While roundoff error in conditionals can cause large \emph{forward
error}---since small perturbations may change the control flow---the \emph{backward
error} can still be small. This is not a contradiction: forward error is
bounded by the product of the condition number and backward error
(\Cref{def:cnum}). Thus, even with poor forward stability, strong backward error
guarantees may still hold.
\end{remark}

\section{Implementation and Evaluation}\label{sec:implementation}
\subsection{{Implementation}}\label{sec:algorithm}
We implemented a type checking and coeffect inference algorithm for \bea{} in
OCaml. It is based on the sensitivity inference algorithm introduced by
\citet{ADA:2014:typecheck}, which is used in implementations of
\emph{Fuzz}-like languages \cite{DFUZZ, NUMFUZZ}. Given a \bea{} program
without any error bound annotations in the context, the type checker ensures
the program is well-formed, outputs its type, and infers the tightest possible
backward error bound on each input variable. Using the type checker, users can
write large \bea{} programs and automatically infer backward error with respect
to each variable.

More precisely, let $\Gamma^\bullet$ denote a context \emph{skeleton}, a linear
typing context with no coeffect annotations. If $\Gamma$ is a linear context,
let $\overline{\Gamma}$ denote its skeleton. Next, we say $\Gamma_1$ is a
\emph{subcontext} of $\Gamma_2$, $\Gamma_1\sqsubseteq\Gamma_2$, if
$\dom{\Gamma_1}\subseteq\dom{\Gamma_2}$ and for all $x:_r\sigma\in\Gamma_1$, we
have $x:_q\sigma\in\Gamma_2$ where $r \leq q$. In other words, $x$ has a
tighter backward error bound in the subcontext. Now, the input to the type
checking algorithm is a typing context skeleton $\Phi\mid \Gamma^\bullet$ and a
\bea{} program, $e$. The output is the type of the program $\sigma$ and a
linear context $\Gamma$ such that $\Phi\mid\Gamma\vdash e:\sigma$ and
$\overline{\Gamma}\sqsubseteq\Gamma^\bullet$. Calls to the algorithm are
written as $\Phi\mid \Gamma^\bullet;e\Rightarrow \Gamma;\sigma$. The algorithm
uses a recursive, bottom-up approach to build the final context.

For example, to type the \bea{} program $(e,f)$, where $e$ and $f$ are
themselves programs, we use the algorithmic rule
\[
  \AXC{$\Phi\mid\Gamma^\bullet;e\Rightarrow \Gamma_1;\sigma$} \AXC{$\Phi\mid
  \Gamma^\bullet;f\Rightarrow\Gamma_2;\tau$}
  \AXC{$\dom\Gamma_1\cap\dom\Gamma_2=\emptyset$} \RightLabel{($\tensor$ I)}
  \TrinaryInfC{$\Phi\mid\Gamma^\bullet;(e,f)\Rightarrow\Gamma_1,\Gamma_2;\sigma\otimes\tau$}
  \bottomAlignProof \DisplayProof
\]
Concretely, we recursively calling the algorithm on $e$ and $f$, and then
combine their resulting contexts. The output contexts discard unused variables
from the input skeletons; thus, the requirement
$\dom\Gamma_1\cap\dom\Gamma_2=\emptyset$ ensures the strict linearity
requirement is met. 

The type checking algorithm is sound and complete, meaning that it agrees
exactly with \bea{}'s typing rules. Precisely: 
\begin{theorem}[Algorithmic Soundness]\label{thm:algo_sound} If
  $\Phi\mid\Gamma^\bullet;e\Rightarrow \Gamma;\sigma$, then
  $\overline{\Gamma}\sqsubseteq\Gamma^\bullet$ and the derivation
$\Phi\mid\Gamma\vdash e:\sigma$ exists.
\end{theorem}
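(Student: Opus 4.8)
The plan is to prove both conclusions simultaneously by induction on the derivation of the algorithmic judgment $\Phi\mid\Gamma^\bullet;e\Rightarrow\Gamma;\sigma$ --- equivalently, by structural induction on $e$, since the inference algorithm is syntax-directed. I would first strengthen the statement so that $\Phi$ is universally quantified, which is needed because the algorithmic rules for $\dlet z e f$, $\dlet {(x,y)} e f$, and $\case{e'}{\inl x.e}{\inr y.f}$ recurse under an extended discrete context. In each case of the induction I would (i) apply the induction hypothesis to the algorithm's recursive calls, obtaining a declarative derivation for each sub-result together with the corresponding skeleton-containment fact; (ii) rebuild a declarative derivation for $e$ by applying the matching rule of \Cref{fig:typing_rules} to those sub-derivations; and (iii) verify $\overline{\Gamma}\sqsubseteq\Gamma^\bullet$ by tracking how the output domain is assembled --- the algorithm only ever takes subsets of, and never adds variables to, the input skeleton (let- and case-bound variables are stripped from the result before it is returned), and types are preserved throughout.

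The one substantive auxiliary result needed is a \emph{weakening} (subsumption) lemma for the declarative system: if $\Phi\mid\Gamma\vdash e:\sigma$ and $\Gamma\sqsubseteq\Gamma'$, then $\Phi\mid\Gamma'\vdash e:\sigma$. I would prove this by a separate induction on the typing derivation. The rules (Var) and (Unit) already permit an arbitrary ambient linear context, so new variables pose no difficulty; the arithmetic rules (Add), (Sub), (Mul), (Div), (DMul) absorb any increase in the grade of an operand because that grade already has the form $\varepsilon+r_i$, so a larger grade $\varepsilon+r_i'$ with $r_i'\ge r_i$ is still a legal instance; and the context-sum rules (Let), $(\otimes\mathrm{E}_\sigma)$, and $(+\mathrm{E})$ require splitting $\Gamma'$ compatibly with the grade-addition operation $q + (-)$, using that $\Gamma\sqsubseteq\Gamma'$ implies $q+\Gamma\sqsubseteq q+\Gamma'$ and that any grade bounded below by $q+r$ can be written $q+r'$ with $r'\ge r\ge 0$. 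With this lemma in hand, the cases where the algorithm must \emph{reconcile} grade information become routine.

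Most cases of the main induction are then immediate: for variables and the arithmetic primitives the algorithm returns exactly the minimal context demanded by the corresponding rule (e.g.\ $x:_0\sigma$ for a variable, or the operand context of (Add) with $r_1=r_2=0$); for $(e,f)$, $\inl e$, $\inr e$, $!e$, $\dlet z e f$, and $\dlet {(x,y)} e f$ the algorithm's output context is already precisely the one produced by the corresponding declarative rule applied to the sub-results, with the disjointness side-conditions on pair construction and elimination being exactly what makes the declarative (disjoint-union) conclusions well-formed; and for $\slet x e f$ the algorithm types the body under the extended skeleton $\Gamma^\bullet, x:\tau$, reads off the inferred grade $r$ on $x$, and returns $r+\Gamma_1,\Gamma_2$, which matches (Let) directly, with no weakening.

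I expect the real obstacle to be the $\case{e'}{\inl x.e}{\inr y.f}$ case, together with its miniature analogue $(\otimes\mathrm{E}_\sigma)$: the declarative rule $(+\mathrm{E})$ insists that the two branch bodies be typed under a \emph{common} linear context $\Delta$ and a \emph{common} grade $q$ on the branch-bound variable, whereas the algorithm types the branches independently and may obtain $\Gamma_2,x:_{q_1}\sigma$ and $\Gamma_3,y:_{q_2}\tau$ that disagree both on which variables are used and on the grades assigned. The algorithm's output is built from the pointwise join $\Delta$ of $\Gamma_2$ and $\Gamma_3$ (union of domains, maximum of the grades on the overlap) and $q := \max(q_1,q_2)$; turning the two independent sub-derivations into premises of $(+\mathrm{E})$ then requires weakening each along $\Gamma_2,x:_{q_1}\sigma \sqsubseteq \Delta,x:_{q}\sigma$ and symmetrically, which is exactly what the weakening lemma supplies. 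Checking that this join is well-defined, that it is disjoint from the context $\Gamma_1$ returned for $e'$ (so that the declarative conclusion $q+\Gamma_1,\Delta$ is well-formed), and that $\overline{q+\Gamma_1,\Delta}\sqsubseteq\Gamma^\bullet$ still holds, is the part demanding the most care; every other case is a direct rule-by-rule reassembly.
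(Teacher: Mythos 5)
Your proposal matches the paper's proof essentially line for line: induction on the algorithmic derivation, a declarative weakening lemma (\Cref{lem:type_weak}) proved by a separate induction on typing derivations, and the identification of $(\otimes\ \mathrm{E}_\sigma)$ and $(+\ \mathrm{E})$ as the cases where the algorithm's $\max$-based reconciliation of grades must be repaired via weakening. The paper's actual proof follows exactly this plan, so there is nothing substantive to add.
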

\begin{theorem}[Algorithmic Completeness]\label{thm:algo_complete} If
  $\Phi\mid\Gamma\vdash e:\sigma$ is a valid derivation in \bea{}, then there
  exists a context $\Delta\sqsubseteq\Gamma$ such that
  $\Phi\mid\overline{\Gamma}; e\Rightarrow \Delta;\sigma$. 
\end{theorem}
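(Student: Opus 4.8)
The plan is to proceed by induction on the given declarative typing derivation $\Phi\mid\Gamma\vdash e:\sigma$. A naive induction does not quite close, because the algorithm threads the \emph{entire} input skeleton into each recursive call, whereas a sub-derivation of $\Phi\mid\Gamma\vdash e:\sigma$ typically involves only some of the variables of $\Gamma$ (and binding rules such as (Let) introduce fresh ones). I would therefore prove the following slightly stronger statement by induction on the derivation: \emph{whenever $\Phi\mid\Gamma\vdash e:\sigma$ is derivable, $\Phi'$ is a discrete context extending $\Phi$, and $\Theta^\bullet$ is a linear skeleton of which $\overline{\Gamma}$ is a sub-skeleton (i.e.\ $\dom\Gamma\subseteq\dom\Theta^\bullet$ and the two agree on types over $\dom\Gamma$), then there exists $\Delta$ with $\Phi'\mid\Theta^\bullet; e\Rightarrow\Delta;\sigma$ and $\Delta\sqsubseteq\Gamma$.} The theorem is the instance $\Phi'=\Phi$, $\Theta^\bullet=\overline{\Gamma}$. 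Before the induction I would record a few routine preliminaries: $\sqsubseteq$ is a preorder; the grade-scaling operation $q+(-)$, the disjoint union of linear contexts, and the pointwise join $\vee$ of linear contexts are all monotone with respect to $\sqsubseteq$ (so, e.g., $q'\le q$ and $\Delta'\sqsubseteq\Gamma$ imply $q'+\Delta'\sqsubseteq q+\Gamma$); $(\mathbb{R}^{\ge 0},\le)$ has binary joins, namely $\max$; and if $\Delta_1\sqsubseteq\Gamma$ and $\Delta_2\sqsubseteq\Gamma$ then $\Delta_1\vee\Delta_2\sqsubseteq\Gamma$. I would also adopt the convention that a variable silently discarded by the algorithm carries inferred grade $0$.

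In the inductive step I would go rule by rule. For the leaf rules (Var), (DVar), (Unit) the algorithm returns the minimal possible linear context ($x:_0\sigma$, respectively $\emptyset$), which is $\sqsubseteq$ \emph{any} declarative context of the right domain, so these cases are immediate. For the purely multiplicative rules ($\otimes$ I), (DLet), ($\otimes$ E$_\alpha$), (Disc) I apply the induction hypothesis to the premises — extending $\Theta^\bullet$ or $\Phi'$ exactly as the corresponding algorithmic rule does, e.g.\ adding $z{:}\alpha$ to the discrete context when descending into the body of a (DLet) — and then observe that the algorithm forms the disjoint union of the returned contexts; the disjointness side-condition holds because, by the induction hypothesis, each returned context is a subcontext of one of the disjoint declarative contexts $\Gamma$, $\Delta$, and domains only shrink. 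For the rules carrying a grade-scaling, namely (Let), ($\otimes$ E$_\sigma$), and ($+$ E): I apply the induction hypothesis to the body after adding the bound variable(s) to the skeleton, obtaining an inferred grade $q'$ (resp.\ grades $q_1',q_2'$) with $q'\le r$ (resp.\ $q_i'\le r$, $q_i'\le q$) from the induction hypothesis; the algorithm scales the context inferred for the scrutinee by $q'$ (or, in the pair/case cases, by the join of the two inferred grades), which is still $\le$ the declarative grade, and monotonicity of $q+(-)$ gives the desired subcontext. Result types match on the nose because the induction hypothesis returns exactly the declarative type; in particular in ($+$ E) both branches are reported with the common type $\rho$ so the algorithm does not get stuck, and the two branch contexts are combined by pointwise join, which remains $\sqsubseteq$ the single declarative $\Delta$ since each branch context is. Finally the arithmetic primitives (Add), (Sub), (Mul), (Div), (DMul) are easy: the declarative rules bake weakening into the grades (e.g.\ $\varepsilon+r_i$), and the algorithm simply instantiates each $r_i$ to $0$, producing, say, $x:_\varepsilon\num,\,y:_\varepsilon\num$, which is a subcontext of $\Gamma,\,x:_{\varepsilon+r_1}\num,\,y:_{\varepsilon+r_2}\num$.

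I expect the main obstacle to be precisely the rules where the declarative system imposes an \emph{equality} of annotations that the algorithm can only recover after the fact by taking a least upper bound: ($+$ E) — and hence conditionals, which are case expressions here — where the two branches must share the linear context $\Delta$ and the two bound variables must share the grade $q$, and ($\otimes$ E$_\sigma$), where the two projected variables must share the grade $r$. One has to check carefully that (i) the relevant joins exist, both in the grade monoid $\mathcal{M}=(\mathbb{R}^{\ge 0},+,0)$ and in the ordering of contexts by $\sqsubseteq$, which they do; (ii) replacing two separately-inferred quantities by their join only ever \emph{lowers} them relative to the declarative witness, so the subcontext relation is preserved; and (iii) the reported types of the two branches genuinely coincide, which is exactly where one leans on the induction hypothesis returning the declarative type verbatim. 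A smaller wrinkle is sum introduction: since the second summand of $\inl\,e:\sigma+\tau$ is not determined by $e$, completeness here presupposes that injections are type-annotated (as in the implementation of \Cref{sec:algorithm}), after which the case is routine. Apart from these points the argument is a lengthy but mechanical case analysis driven entirely by monotonicity of the grade operations.
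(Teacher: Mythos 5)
Your proof is correct and follows essentially the same strategy as the paper's: induction on the declarative derivation, with the arithmetic-primitive leaves trivially subcontexts of the declarative context, and the scaled let/elimination/case rules handled by applying the IH to each premise and observing that inferred grades are bounded above by the declarative ones. The only presentational difference is that you fold the skeleton-weakening step into a strengthened induction hypothesis, whereas the paper isolates it as a separate lemma (algorithmic weakening, \Cref{lem:algo_weak}) that it then invokes inside each inductive case; the two variants are routine reformulations of one another.
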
 
\ifshort
\else
The full algorithm and proofs of its correctness are given in
\Cref{app:algorithm}.
\fi
The \bea{} implementation is parametrized only by unit roundoff, which is
dependent on the floating-point format and rounding mode and is fixed for a
given analysis. The implementation computes rounding error bounds using 
OCaml's Float module, which follows the IEEE 754 standard with double-precision
(64-bit) numbers; the default rounding mode is round-to-nearest, ties-to-even.

\subsection{{Evaluation}}\label{sec:evaluation}
\begin{table}
\caption{ 
    The performance of \bea{} benchmarks with known backward error bounds
    from the literature. The Input Size column gives 
    the length of the input vector or dimensions of the input matrix; the 
    Ops column gives the total number of floating-point operations.  
    The Backward Bound column reports the bounds
    inferred by \bea{} and well as the standard bounds (Std.) from
    the literature. The Timing column reports the time in seconds for 
    \bea{} to infer the backward error bound. 
} 
\label{tab:benchStd} 
\begin{tabular}{l l l c c r} 
\hline
{Benchmark}
& {Input Size}
& {Ops}
& \multicolumn{2}{c}{Backward Bound} 
& {Timing (s)} 
\\ \cline{4-5}
& & & {{\bea}} & {{Std.}} & 
\\ \hline 
\multirow{4}{1em}{{DotProd}} 
& 20 & {39} & {2.22e-15} & {2.22e-15} & 0.004
\\ \arrayrulecolor{gray}\cline{2-6}
& 50 & {99} & {5.55e-15} & {5.55e-15} & 0.04
\\ \cline{2-6}
& 100 & 199 & {1.11e-14} & {1.11e-14} & 0.3 
\\ \cline{2-6}
& 500 & 999 & {5.55e-14} & {5.55e-14} & 30
\\ \arrayrulecolor{black}\hline
\multirow{4}{1em}{{Horner}} 
& 20 & {40} & {4.44e-15} & {4.44e-15} & 0.002
\\ \arrayrulecolor{gray}\cline{2-6}
& 50 & {100} & {1.11e-14} & {1.11e-14} & 0.02
\\ \cline{2-6}
& 100 & 200 & {2.22e-14} & {2.22e-14} & 0.1
\\ \cline{2-6}
& 500 & 1000 & {1.11e-13} & {1.11e-13} & 10
\\ \arrayrulecolor{black}\hline
\multirow{4}{1em}{{PolyVal}} 
& 10 & {65} & {1.22e-15} &{1.22e-15} & 0.004
\\ \arrayrulecolor{gray}\cline{2-6}
& 20 & {230} & {2.33e-15} & {2.33e-15} & 0.06
\\ \cline{2-6}
& 50 & 1325 & {5.66e-15} & {5.66e-15} & 5
\\ \cline{2-6}
& 100 & 5150 & {1.12e-14} & {1.12e-14} & 200
\\ \arrayrulecolor{black}\hline
\multirow{4}{1em}{{MatVecMul}} 
& 5 $\times$ 5 & 45 & {5.55e-16} & {5.55e-16} & 0.003
\\ \arrayrulecolor{gray}\cline{2-6}
& 10 $\times$ 10 & 190 & {1.11e-15} & {1.11e-15} & 0.1
\\ \cline{2-6}
& 20 $\times$ 20 & 780 & {2.22e-15} & {2.22e-15} & 6
\\ \cline{2-6}
& 50 $\times$ 50 & 4950 & {5.55e-15} & {5.55e-15} & 1000
\\ \arrayrulecolor{black}\hline
\multirow{4}{1em}{{Sum}} 
& 50 & {49} & {5.44e-15} & {5.44e-15} & 0.008
\\ \arrayrulecolor{gray}\cline{2-6}
& 100 & {100} & {1.10e-14} & {1.10e-14} & 0.04
\\ \cline{2-6}
& 500 & 499 & {5.54e-14} & {5.54e-14} & 4
\\ \cline{2-6}
& 1000 & 999 & {1.11e-13} & {1.11e-13} & 30
\\ \arrayrulecolor{black}\hline
\end{tabular}
\end{table} 

\begin{table}
\caption{ 
    A comparison of \bea{} to \citet{Fu:2015:BEA} on polynomial approximations
    of $\sin$ and $\cos$. The \bea{} implementation matches the programs
    evaluated by \citet{Fu:2015:BEA} for the given range of input values.
    $^*$The timing values 1310 and 1280 were taken from the \citet{Fu:2015:BEA}
    paper.
} 
\label{tab:benchFu}
\begin{tabular}{l l c c c c} 
\hline
{Benchmark} & {Range} & \multicolumn{2}{c}{Backward Bound} 
& \multicolumn{2}{c}{Timing (ms)}  \\
\cline{3-6} 
{} & {} & {{\bea}}
& {{\citet{Fu:2015:BEA}}}
& {{\bea}}
& {{\citet{Fu:2015:BEA}}} \\
\hline 
\stexttt{cos} & [0.0001, 0.01] & {1.33e-15} & {5.43e-09} & 1 & 1310* \\
\arrayrulecolor{gray}\hline
\stexttt{sin} & [0.0001, 0.01] & {1.44e-15} & {1.10e-16} & 1 & 1280*
\\ \arrayrulecolor{black}\hline
\end{tabular}
\end{table}

\begin{table}
\caption{ 
  A comparison of forward bounds. Forward error bounds for \bea{} are derived
  from backward error bounds using the relation \emph{forward error} $\le$
  \emph{condition number} $\times$ \emph{backward error} (see \Cref{def:cnum}),
  and the fact that all listed benchmarks have a relative condition number of
  exactly $\kappa_{rel}=1$. \bea{} computes the backward error statically; forward
  error is not directly computed. We compare the resulting bounds to those of
  NumFuzz and Gappa. For Gappa, we assume all variables are in the interval
  $[0.1,1000]$.
} 
\label{tab:benchFz} 
\begin{tabular}{l c c c c c c} 
\hline
{Benchmark}
& {Input Size}
& {Ops}
& \multicolumn{3}{c}{Forward Bound} 
\\ \cline{4-6}
& & & {{\bea}} & {NumFuzz } & Gappa \\
\hline 
Sum
& 500 & 499 & {1.11e-13} & {1.11e-13} & {1.11e-13} \\
\arrayrulecolor{gray}\hline
DotProd
& 500 & 999 & {1.11e-13} & {1.11e-13} & {1.11e-13} \\
\hline
Horner
& 500 & 1000 & {2.22e-13} & {2.22e-13} & {2.22e-13} \\
\hline
PolyVal
& 100 & {5150} & {2.24e-14} & {2.24e-14} & {2.24e-14} 
\\ \arrayrulecolor{black}\hline
\end{tabular}
\end{table}

In this section, we report results from an empirical evaluation of our \bea{}
implementation, focusing primarily on the quality of the inferred bounds. Since
\bea{} is the first tool to statically derive \emph{sound} backward error
bounds, a direct comparison with existing tools is challenging. We therefore
evaluate the inferred bounds using three complementary methods.

We first evaluate \bea{} against theoretical worst-case backward error bounds
from the literature, providing a benchmark for how closely \bea{}'s inferred
bounds approach these worst-case values. Next, we compare our results to those
from an analysis tool developed by \citet{Fu:2015:BEA}, which, to our
knowledge, provides the only automatically derived quantitative bounds on
backward error. While useful as a baseline, their results focus on
transcendental functions, whereas \bea{} targets larger programs involving
linear algebra primitives. Finally, we evaluate the quality of the backward
error bounds derived by \bea{} using forward error as a proxy. Specifically,
using known values of the relative componentwise condition number
(\Cref{def:cnum}), we compute forward error bounds from our backward error
bounds. This approach enables a comparison to existing tools focused on forward
error analysis. We compare our derived forward error bounds to those produced
by two tools that soundly and automatically bound relative forward error:
NumFuzz~\cite{NUMFUZZ} and Gappa~\cite{Gappa}. Both tools are capable of
scaling to larger benchmarks involving over 100 floating-point operations,
making them suitable tools for comparison with \bea{}. All of our experiments
were performed on a MacBook Pro with an Apple M3 processor and 16 GB of memory.

\subsubsection{Evaluation Against Theoretical Worst-Case Bounds}
\Cref{tab:benchStd} shows results for benchmark problems with known backward
error bounds from the literature. Each benchmark was run on inputs of
increasing size (given in Input Size), with the total number of floating-point
operations listed in the Ops column. The Std. column gives the worst-case
theoretical backward error bound assuming double-precision and
round-to-nearest, obtained from \citet[p.63, p.94,
p.82]{Higham:2002:Accuracy}. For simplicity, \bea{} programs use a single
linear input variable, with remaining inputs treated as discrete variables. The
maximum elementwise backward bound is computed with respect to the linear
input. The \bea{} programs emulate the following analyses for input size $N$: 
\begin{itemize}
\item \stexttt{DotProd} computes the dot product
of two vectors in $\R^N$, assigning backward error to a single vector.
\item \stexttt{Horner} evaluates an $N$-degree polynomial using Horner's
scheme, assigning backward error onto the vector of coefficients.
\item \stexttt{PolyVal} naively evaluates an $N$-degree polynomial, 
assigning backward error onto the vector of coefficients.
\item \stexttt{MatVecMul} computes the product of a matrix in $\R^{N\times
N}$ and a vector in $\R^N$, assigning backward error onto the matrix.
\item \stexttt{Sum} sums the elements of a vector
in $\R^N$, assigning backward error onto the vector.
\end{itemize}
Since we report the backward error bounds from the literature under the
assumption of double-precision and round-to-nearest, we instantiated \bea{}
with a unit roundoff of $u=2^{-53}$.

\subsubsection{Comparison to Dynamic Analysis}
\Cref{tab:benchFu} shows the comparison between \bea{} and the
optimization-based tool for automated backward error analysis by
\citet{Fu:2015:BEA}. The benchmarks are polynomial approximations of $\sin$
and $\cos$ implemented using Taylor series expansions, following the GNU C
Library (glibc) version 2.21. Our \bea{} implementations match the glibc
implementations on the input range $[0.0001,0.01]$. Specifically, the Taylor
series expansions implemented in \bea{} only match the glibc implementations
for inputs in this range. Since the glibc implementations analyzed by
\citet{Fu:2015:BEA} use double-precision and round-to-nearest, we instantiated
\bea{} with a unit roundoff of $u=2^{-53}$. While we include timing data for
reference, the implementation from \citeauthor{Fu:2015:BEA} is not publicly 
available or maintained, preventing direct runtime comparisons; thus, all values 
are taken from Table 6 of \citet{Fu:2015:BEA}. 

\subsubsection{Using Forward Error as a Proxy}
We can compare the quality of the backward error bounds derived by \bea{} to
existing tools by using forward error as a proxy. Specifically, for benchmarks
with a known relative componentwise condition number $\kappa_{rel}$, we can
convert relative backward error bounds to relative forward error bounds
\cite[Definition 2.12]{hohmann:2003:numerical}: 
\begin{definition}[Relative Componentwise Condition Number]
\label{def:cnum}
The \emph{relative componentwise condition number} of a scalar function $f:
\R^n \rightarrow \R$ is the smallest number $\kappa_{rel} \ge 0$ such that,
for all $x\in\R^n$, 
\begin{align}\label{eq:krel}
d_{\R}(f(x),\tilde{f}(x)) \le \kappa_{rel} \max_{i} d_\R(x_i,\tilde{x}_i)
\end{align}
\end{definition}
\noindent where $\tilde{f}$ is the approximating program and $\tilde{x}$ is the 
perturbed input such that $f(\tilde{x})=\tilde{f}(x)$. Here,
$d_{\R}(f(x),\tilde{f}(x))$ is the \emph{relative forward error}, and $\max_{i}
d_\R(x_i,\tilde{x}_i)$ is the maximum \emph{relative backward error}. Thus, when
$\kappa_{rel}$ is known, we can compute relative
forward error bounds from the relative backward error bounds inferred by
\bea{}. 

\Cref{tab:benchFz} presents the results for several benchmark problems with
$\kappa_{rel}=1$, so the maximum relative backward error directly bounds the
relative forward error. For example, the relative condition number for summing
$n$ values $(a_i)_{1\le i\le n}$ is $\kappa_{rel} = \sum_{i=1}^n |a_i|/
|\sum_{i=1}^n a_i|$ \cite{MullerBook}, which reduces to $1$ when all $a_i >0$.
In each benchmark in \Cref{tab:benchFz}, this condition holds only under
the assumption of strictly positive inputs. NumFuzz requires this assumption
for soundness, and we enforced it in Gappa by specifying inputs intervals of
$[0.1, 1000]$. To ensure consistency, all tools were instantiated with a unit
roundoff of $u=2^{-52}$. Each benchmark was implemented separately in each tool
to compute the reported error. The results reported in \Cref{tab:benchFz} are
identical to the second digit because each tool computes bounds in double
precision, and the bounds are optimally tight (i.e., they match the theoretical
worst-case error bounds), so agreement to the second digit is expected.

\subsubsection{Evaluation Summary} 
The main conclusions from our evaluation results are as follows.

\emph{\textbf{\bea{}'s backward error bounds are useful}}: In all of our
experiments, \bea{} produced competitive error bounds. Compared to the
backward error bounds reported by \citet{Fu:2015:BEA} for their dynamic
backward error analysis tool, \bea{} was able to derive \emph{sound} backward
error bounds that were close to or better than those produced by the dynamic
tool. Furthermore, \bea{}'s sound bounds precisely match the worst-case
theoretical backward error bounds from the literature, demonstrating that our
approach guarantees soundness without being overly conservative. Finally,
when using forward error as a proxy to assess the quality of \bea{}'s backward
error bounds, we find that \bea{}'s bounds again precisely match the bounds
produced by NumFuzz and Gappa.

\emph{\textbf{\bea{} performs well on large programs}}: In our comparison to
worst-case theoretical error bounds, we find that \bea{} takes under a minute to
infer backward error bounds on benchmarks with fewer than 1000 floating-point
operations. Overall, \bea{}'s performance scales linearly with the number of
floating-point operations in a benchmark.

\section{Semantics and Metatheory}\label{sec:metatheory}
Recall the intuition behind the guarantee for \bea{}'s type system: a
well-typed term of the form 
$\Phi \mid x_1:_{r_1} \sigma_1, \dots, x_i:_{r_i} \sigma_i \vdash e : \tau$
is a program that has at most $r_i$ backward error with respect to each
variable $x_i$, and has {no backward error} with respect to the {discrete}
variables in the context $\Phi$. To prove soundness for the type system, we
provide a categorical semantics for \bea{}: we interpret every typing judgment
as a morphism in a suitable category, and conclude soundness from properties of
the morphisms in our category. While this approach is standard, the category 
we use is not standard and is a novel contribution of our work. 

\subsection{\Bel: The Category of Backward Error Lenses}\label{sec:meta_errcat}
The key semantic structure for \bea{} is the category \Bel{} of \emph{backward
error lenses}.  Each morphism in \Bel{} corresponds to a \emph{backward error
lens}, which consists of a \emph{triple} of transformations describing an an
ideal computation, its floating-point approximation, and a \emph{backward map}
that serves as a constructive mechanism for witnessing the existence of a
backward error result: 

\begin{definition}[Backward Error Lenses]\label{def:error_lens} 
First, we define a \emph{slack distance space} to be a triple $(X,d_X,r_X)$
where $X$ is a set, $d_X:X\times X\to \R_{\geq 0}\cup\{\infty\}$ is a function
providing distance, and $r_X\in \R_{\geq 0}\cup\{\infty\}$ is a constant called the slack.
It is not required that $d_X$ satisfies the metric laws.
A \emph{backward error lens} between the slack distance spaces $(X,d_X,r_X)$ and $(Y,d_Y,r_Y)$
has three components: a \emph{forward map} $f: X \rightarrow Y$,
an \emph{approximate map} $\tilde{f}: X \rightarrow Y$, and 
a \emph{backward map} $b:  X\times Y \to X$ defined 
for every $x \in X$ and every $y \in Y$ such that 
$ d_Y(\tilde{f}(x),y)<\infty$.  
These three components satisfy two properties for every $ x \in
X$ and $y \in Y$, as long as ${d_Y(\tilde{f}(x),y)}<\infty$: 
\begin{align*}
d_X(x,b(x,y))-r_X &\le d_Y(\tilde{f}(x),y)-r_Y \label{eq:prop1} \tag*{(Property 1)} \\
f(b(x,y))&=y \label{eq:prop2} \tag*{(Property 2)} 
\end{align*}
\end{definition}  
\noindent Note that $\infty+a=\infty-a=\infty$ for any $a\in\R\cup\{\infty\}$, 
but $a-\infty=-\infty$ for any $a\in\R$. 

Backward error lenses generalize the notion of backward stability from
\Cref{def:bstab}. A lens $(f,\tilde{f},b) \in X \rightarrow Y$ ensures two
properties: (1) for any input $x \in X$, any output ${y} \in Y$ at finite
distance from the floating-point result $\tilde{y} = \tilde{f}(x) \in Y$ is the
image of some $\tilde{x} \in X$ under $f$; and (2) the distance between $x$ and
$\tilde{x}$ increases smoothly with the distance between $\tilde{y}$ and $y$.
In contrast, \Cref{def:bstab} only guarantees that $\tilde{y}$ can be matched
by an exact computation $f(\tilde{x})$ with $\tilde{x}$ close to $x$. 
\ifshort
Our generalization enables compositional reasoning about backward error, and this
compositional property allows us to form the category \Bel{} with slack
distance spaces as objects and backward error lenses as morphisms.
\else
Our generalization enables compositional reasoning about backward error, and this
compositional property allows us to form the category \Bel{} with slack
distance spaces as objects and backward error lenses as morphisms; a precise
definition of the category is given in \Cref{def:lensC}.
\fi

\subsubsection{Basic Constructions in \Bel{}}
In order to interpret typing judgments in \bea{} as morphisms in \Bel{}, we
must define lenses for the tensor product, coproduct, and a \emph{graded
comonad}. 
\ifshort
\else
These lenses, along with other basic constructions in \Bel{}, are defined in
\Cref{app:check_bel_cons}. 
\fi

The key construction in \Bel{} that enables our semantics to capture the
standard backward stability guarantee in \Cref{def:bstab} is a graded comonad.
To summarize, the graded comonad on \Bel{} 
\ifshort
\else
(see \Cref{app:comonad}) 
\fi
is defined by the family of functors 
\begin{align*}
\{D_r :
  \Bel\rightarrow \Bel \ | \ r \in \mathcal{R}\}
\end{align*}
where the pre-ordered monoid $\mathcal{R}$ is the non-negative real numbers
${\R^{\ge 0}}$ with the usual order and addition. On objects, $D_r:
\Bel{} \rightarrow \Bel{}$ takes a slack distance space $(X,d_X, r_X)$ to a 
slack distance space $(X,d_X, r_X+r)$. On morphisms, $D_r$ takes an error lens 
$(f,\tilde{f},b) : A \rightarrow X$ to an error lens $(D_rf,D_r\tilde{f},D_rb)
: D_rA \rightarrow D_rX$ where $(D_rg)x \triangleq g(x)$. 

Crucially, using the graded comonad, we can show that the standard backward
stability guarantee is a special case of the general guarantee provided by
backward error lenses: given a lens $(f,\tilde{f},b) : D_{\alpha\varepsilon} X
\rightarrow Y$, for every input $x \in X$, the input $\tilde{x} =
b(x,\tilde{f}(x)) \in X$ exists such that $f(\tilde{x}) = \tilde{f}(x)$
(Property 2) and $d_X(x,\tilde{x}) \le \alpha\varepsilon$ (Property 1).

\subsubsection{Interpreting \bea{}}
With the basic structure of \Bel{} in place, we can now interpret the types and
typing judgments of \bea{} as objects in \Bel{} and morphisms in \Bel{},
respectively. 
\ifshort 
First, every type $\tau$ is interpreted as a distance space $\denot{\tau} \in$
\Bel{} with reflexive distance and slack $0$.
\else
First, every type $\tau$ is interpreted as a distance space $\denot{\tau} \in$
\Bel{} with reflexive distance and slack $0$, defined inductively in
\Cref{app:interp_bea}.
\fi
For instance, the numeric type $\num$ is interpreted as the real numbers with
the RP metric (\Cref{eq:olver}). Typing contexts $\Phi~\mid~\Gamma$ are also
interpreted as objects $\denot{\Phi~\mid~\Gamma} \in$ \Bel{}. The graded
comonad $D_r$ is used to interpret linear variable bindings: $\denot{\Phi \mid
x:_r \sigma} \triangleq \denot{\Phi} \otimes D_r \denot{\sigma}$.

Finally, we interpret discrete types $\alpha$ and contexts $\Phi$ as
\emph{discrete distance spaces} where the distance between any two distinct
points is infinite. It turns out that discrete distance spaces belong to a
natural subcategory $\Del$ of \emph{discrete} objects and error lenses. Any
lens from a discrete object $A$ is guaranteed not to push backward error onto
$A$, and discrete objects can be duplicated.

Given these ingredients, we can define the interpretation of well-typed terms in \bea{}:
\begin{definition}(Interpretation of \bea{} terms.)\label{def:interpL}
	We can interpret each well-typed \bea{} term
  $\Phi \mid \Gamma \vdash e : \tau$ as an
	error lens $\denot{e} : \denot{\Phi} \otimes \denot{\Gamma} \rightarrow
	\denot{\tau}$ in \Bel, by induction on the typing derivation. 
\end{definition}
\ifshort
\else
\noindent The details of each construction for \Cref{def:interpL} are provided
in \Cref{app:interp_bea}.
\fi

\subsection{Metatheory}
The proof of backward error soundness (\Cref{thm:main}) relates the lens
semantics described above to an exact and a floating-point operational
semantics.  In the lens semantics, a \bea{} program corresponds to a triple of
set-maps $(f,\tilde{f},b)$ describing the ideal computation, its floating-point
approximation, and a map that constructs the backward error between them.
Intuitively, the \emph{set semantics} of the first component of the lens should
model the ideal operational semantics, while the \emph{set semantics} of the
second component of the lens should model the floating-point operational
semantics. We achieve this distinction for \bea{} programs by defining an
intermediate language, which we call \LangS, where programs denote morphisms in
$\Set$. We then define an ideal ($\stepid$) and approximate ($\stepap$)
big-step operational semantics for \LangS, and relate these semantics to the
backward error lens semantics of $\bea{}$ via the $\Set$ semantics of \LangS.
Importantly, \LangS{} is semantically sound and computationally adequate: a
\LangS{} program computes to a value if and only if their interpretations in
\Set{} are equal.
\ifshort
\else
The ideal and approximate operational semantics along with the details of
\LangS{} are deferred to \Cref{app:LangS}.
\fi

\paragraph{Backward Error Soundness}
With the interpretation of \bea{} terms in place, the path towards a proof of
\emph{backward error soundness} is clear.  

\begin{theorem}[Backward Error Soundness]\label{thm:main2}
Let the well-typed \bea{} program
\[ 
z_1:\alpha_1,\dots,z_n:\alpha_n \mid  x_1:_{r_1}\sigma_1,\dots,x_m:_{r_m}\sigma_m
\vdash e : \tau
\] 
be given, and let the sequences of values $(p_i)_{1 \le i \le n}$ and $(k_j)_{1
\le j \le m}$ be given also. Suppose $ \vdash p_i : \alpha_i$ and $ \vdash k_j
:  \sigma_j$ for all $i \in 1, \dots, n$ and $j \in 1, \dots, m$. If
$e[p_1/z_1]\cdots[{k_m}/x_m] \stepap v$, then the sequence of values
$(\tilde{k}_j)_{1 \le i \le m}$ exists such that
$e[p_1/z_1]\cdots[{\tilde{k}_m}/x_m] \stepid v$, and 
$d_{\denot{\sigma_i}}({k}_i,\tilde{k}_i) \le r_i$ for every $j \in 1, \dots, m$. 
\end{theorem}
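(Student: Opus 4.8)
The plan is to transfer the statement through the categorical semantics of \Cref{def:interpL}, using the intermediate language \LangS{} as the bridge between the operational semantics and the set-level components of a backward error lens, and then to read off conclusions (1) and (2) from Property 2 and Property 1 of \Cref{def:error_lens}, respectively. I would first prove a \emph{substitution lemma}: for the well-typed term $e$ and the closing substitution that replaces each $z_i$ by the value $p_i$ and each $x_j$ by the value $k_j$, the denotation $\denot{e[p_1/z_1]\cdots[k_m/x_m]}$ is $\denot{e} : \denot{\Phi}\otimes\denot{\Gamma}\to\denot{\tau}$ evaluated at the point of $\denot{\Phi}\otimes\denot{\Gamma}$ determined by $(\vec p,\vec k)$. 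Writing $\denot{e}=(f,\tilde f,b)$, the forward component then computes $f(\vec p,\vec k)$ and the approximate component computes $\tilde f(\vec p,\vec k)$. Since \LangS{} is semantically sound and computationally adequate, and the translation of \bea{} into \LangS{} makes the $\stepid$ and $\stepap$ semantics track exactly these two components, the hypothesis $e[p_1/z_1]\cdots[k_m/x_m]\stepap v$ becomes the equation $\tilde f(\vec p,\vec k)=v$ in \Set{}.

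I would then recover the perturbed inputs with the backward map. Every type is interpreted with reflexive distance, so $d_{\denot{\tau}}(\tilde f(\vec p,\vec k),v)=d_{\denot{\tau}}(v,v)=0<\infty$ and $b$ is defined at $((\vec p,\vec k),v)$; put $(\vec p',\vec k')\triangleq b((\vec p,\vec k),v)$. Property 2 gives $f(\vec p',\vec k')=v$, and adequacy of the ideal semantics turns this into $e[p'_1/z_1]\cdots[k'_m/x_m]\stepid v$. Because $\denot{\Phi}$ is a discrete object — it lies in the subcategory \Del{}, and the dual-context discipline threads it through the interpretation so that no backward map ever acts on it — the backward map is the identity on the $\denot{\Phi}$-component, hence $\vec p'=\vec p$. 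Setting $\tilde k_j\triangleq k'_j$ yields $e[p_1/z_1]\cdots[\tilde k_m/x_m]\stepid v$, which is conclusion (1); well-typedness of each $\tilde k_j$ follows since the backward map lands in $\denot{\sigma_j}$ and each type's carrier is closed under the relevant operations.

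For conclusion (2) I would apply Property 1 at the same point: $d_{\denot{\Phi}\otimes\denot{\Gamma}}((\vec p,\vec k),(\vec p,\vec k'))-r\le d_{\denot{\tau}}(\tilde f(\vec p,\vec k),v)-0=0$, where $r$ is the slack of $\denot{\Phi}\otimes\denot{\Gamma}$, assembled from $r_1,\dots,r_m$ via $\denot{\Phi\mid x:_r\sigma}=\denot{\Phi}\otimes D_r\denot{\sigma}$ together with the fact that $D_r$ adds $r$ to the slack; this is essentially the many-variable generalization of the special case noted just after the graded comonad is introduced. The step I expect to be the main obstacle is that this inequality only controls an aggregate of the per-variable distances, whereas the theorem demands $d_{\denot{\sigma_i}}(k_i,\tilde k_i)\le r_i$ for each $i$ individually, and no combining rule on the tensor of slack distance spaces (be it sum or max) recovers the componentwise bound from the aggregate one. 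To close the gap I would strengthen the induction underlying \Cref{def:interpL} and prove, as an invariant maintained for every interpreted term $\Phi\mid x_1:_{r_1}\sigma_1,\dots,x_m:_{r_m}\sigma_m\vdash e:\tau$, that the backward map of $\denot{e}$ fixes the $\denot{\Phi}$-component and satisfies $d_{\denot{\sigma_i}}(k_i,k'_i)\le r_i+d_{\denot{\tau}}(\tilde f(\vec p,\vec k),y)$ for each linear slot $i$ and each admissible output $y$. The delicate cases are the arithmetic primitives, which must allocate exactly the $\varepsilon/2$ or $\varepsilon$ of slack demanded by \Cref{eq:be_add} through \Cref{eq:be_div2}; the \textbf{let}, $\otimes$-elimination, and $+$-elimination rules, where the syntactic context-sum $q+\Gamma$ must be matched semantically by the comonad action $D_q$ distributing fresh slack uniformly across the relevant tensor factors; and the $!$- and discrete rules, where the \Del{}-structure guarantees that no slack is allocated. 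Instantiating the invariant at $y=v=\tilde f(\vec p,\vec k)$ annihilates the distance term and yields $d_{\denot{\sigma_i}}(k_i,\tilde k_i)\le r_i$, completing the proof.
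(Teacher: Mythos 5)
Your first two paragraphs track the paper's argument closely: interpret the judgment as a lens $\denot{e}=(f,\tilde f,b)$, use the backward map to produce perturbed inputs, apply Property~2 to get $f$ at the perturbed inputs equal to $v$, and shuttle between the lens semantics and the operational semantics via the pairing lemma, semantic soundness, and computational adequacy of \LangS. That is exactly the paper's route, and those steps are correct.

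The third paragraph is where you go astray. The ``main obstacle'' you identify --- that Property~1 only bounds an aggregate distance and that no combining rule on the tensor (``be it sum or max'') can recover componentwise bounds --- is not a real gap. The slack distance on a tensor of slack distance spaces is defined (in the finite case) as $d_{X\otimes Y} = \max\{d_X + r_Y,\ d_Y + r_X\}$ with slack $r_{X\otimes Y} = r_X + r_Y$, and the paper proves that after subtracting slack this decomposes componentwise:
\[
  d_{X\otimes Y}\bigl((x_1,y_1),(x_2,y_2)\bigr) - r_{X\otimes Y}
  \;=\; \max\bigl\{\,d_X(x_1,x_2) - r_X,\ \ d_Y(y_1,y_2) - r_Y\,\bigr\}.
\]
So when Property~1 yields this max $\le d_{\denot{\tau}}(v,v)=0$, you immediately read off $d_{\denot{\sigma_j}}(k_j,\tilde k_j)\le r_j$ for \emph{each} $j$, and $d_{\denot{\alpha_i}}(p_i,\tilde p_i)\le 0$ for each $i$, which by discreteness forces $\tilde p_i = p_i$. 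The strengthened inductive invariant you propose --- reproving through every typing rule that the backward map satisfies a per-slot estimate --- is therefore unnecessary; it would rebuild by hand exactly the content that the definition of the tensor slack distance was designed to deliver in one step. A related minor misstep: you assert that the backward map ``is the identity on the $\denot{\Phi}$-component'' as a structural fact about $b$, but the paper derives $\tilde p_i = p_i$ as a \emph{consequence} of Property~1 together with the discrete metric taking values in $\{0,\infty\}$; no separate structural claim about $b$ on discrete factors is needed or made.
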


\begin{proof}
Given that $e$ is a well-typed \bea{} term, from \Cref{def:interpL} we have 
\begin{align*}
  \denot{z_1: \alpha_1, \dots, z_n: \alpha_n \mid x_1:_{r_1} \sigma_1, 
  \dots, x_m:_{r_m} \sigma_m \vdash e : \tau} 
  = (f,\tilde{f},b) : \denot{\alpha_1} \otimes
  \dots \otimes  D_{r_m}\denot{\sigma_m} \rightarrow \denot{\tau}.
\end{align*}
If $e[p_1/z_1]\dots[k_m/x_m] \stepap v$ for the well-typed substitutions
$(p_i)_{1 \le i \le n}$ and $(k_j)_{1 \le j \le m}$, then we can guarantee our
desired backward error result if we can witness the existence of a well-typed
substitution $(\tilde{k})_{1 \le i \le m}$ such that
$e[p_1/z_1]\dots[\tilde{k}_m/x_m] \stepid  v$, and
$d_{\denot{\sigma_j}}(k_j,\tilde{k}_j) \le r_j$ for every $j \in 1, \dots, m$.  
The key idea is to use the backward
map $b$ to construct the well-typed substitutions $(\tilde{p}_i)_{1 \le i \le
n}$ and $(\tilde{k}_j)_{1 \le j \le m}$ such that 
\[
((\tilde{p}_1,\dots,\tilde{p}_n),(\tilde{k}_1,\dots,\tilde{k}_m)) =
b(((p_1,\dots,p_n),(k_1,\dots,k_m)),v).
\]
Then, from Property 2 of the lens, we have $f
((\tilde{p}_1,\dots,\tilde{p}_n),(\tilde{k}_1,\dots,\tilde{k}_m)) =
v$.
\ifshort 
Projecting into \LangS{}, we can use this result along with computational
adequacy to show $e[\tilde{p}_1/z_1]\dots[\tilde{k}_m/x_m] \stepid v$. From
the semantic soundness of \LangS{} we then have
\else
We can use this result along with pairing (\Cref{lem:pairing}) and
computational adequacy (\Cref{thm:adequacy}) to show
$e[\tilde{p}_1/z_1]\dots[\tilde{k}_m/x_m] \stepid v$. From semantic soundness
(\Cref{thm:soundid}) and Property 2, we then have
\fi
\[
\tilde{f} (({p}_1,\dots,{p}_n),({k}_1,\dots,{k}_m)) = f
{((\tilde{p}_1,\dots,\tilde{p}_n),(\tilde{k}_1,\dots,\tilde{k}_m))}.
\]

Two things remain to be shown. First, we must show the values of discrete type
carry no backward error, i.e., $\tilde{p}_i = p_i$ for every $i \in 1, \dots,
n$.  Second, we  must show the values of linear type have bounded backward
error. Both follow from Property 1 of the error lens: from the inequality
\[ 
\max\left\{d_{\denot{\alpha_1}}(\tilde{p}_1,{p}_1),\dots,
d_{\denot{\sigma_m}}(\tilde{k}_m,{k}_m)-r_m\right\} \le
d_{\denot{\tau}}\left(v,v\right) = 0
\] 
we can conclude $\tilde{p}_i = p_i$ for every 
$i \in 1, \dots, n$ and $d_{\denot{\sigma_j}}(\tilde{k}_j,{k}_j) \le r_j$ 
for every $j \in 1, \dots, m$.
\end{proof}

\ifshort
\else
The full details of the proof are provided in \Cref{sec:app_soundness}.
\fi

\section{Related Work}\label{sec:rw}

\subsubsection*{Automated Backward Error Analysis} Existing automated methods
for backward error analysis are based on automatic differentiation and
optimization techniques, but unlike \bea{}, they lack soundness guarantee. For
instance, Miller's algorithm \cite{Miller:78:BEA} uses automatic
differentiation to estimate backward error via partial derivatives, and can
support rich program features \cite{Gati:2012:BEA}. 

The first optimization-based tool, introduced by \citet{Fu:2015:BEA}, uses a
generic minimizer to compute a backward error function mapping an input-output
pair to an input that produces the same output under a higher-precision version
of the program. This function is used to estimate the maximal backward error
for a range of inputs using Markov Chain Monte Carlo techniques. 

\bea{} and this optimization-based approach both require the direct
construction of a backward function: both require an ideal function, an
approximating function, and an explicit backward function. However, unlike
\bea{}, the existing optimization method must perform a sometimes costly
analysis to construct the ideal (higher-precision) and backward functions. In
\bea{}, it is not necessary to explicitly construct these functions since they
are built into our semantic model.

\subsubsection*{Residual-Based Methods}
\emph{A posteriori} estimates of backward error can be computed dynamically
using \emph{residual-based methods} \cite[Section
1.4.3]{Corless:2013:Analysis}, which use the computed result to estimate
backward error. These methods are widely used in high-performance solvers for
linear systems of equations such as SuperLU \cite{SUPERLU}, STRUMPACK
\cite{STRUMPACK1, STRUMPACK2}, cuSolver \cite{CUSOLVER}, SPRAL/SSIDS
\cite{SPRAL} and PaStiX \cite{PASTIX}. These solvers compute a solution
$\tilde{x} \in \R^n$ to systems of the form $Ax =b$, where $A \in \R^{n \times
n}$ is a matrix, $b \in \R^n$ is a vector, and $x\in\R^n$ denotes the exact
(ideal) solution. The backward error is then estimated using the residual $r =
b - A \tilde{x}$. This approach is valuable for assessing solver quality on
ill-conditioned problems, as demonstrated in recent work on power grid
optimization \cite{swirydowicz:2022}.

While \bea{} does not yet scale to the full complexity of solvers
like STRUMPACK, it could be integrated into more fully-fledged programming
languages, such as Granule \cite{granule}. Although it remains an open
question whether linearity in such a language would be too restrictive for
implementing a production scale solver, this challenge could be addressed by
selectively incorporating dynamic residual-based methods in parts of the
program where static guarantees are infeasible. 

\subsubsection*{Type Systems and Formal Methods}
Many formal verification tools have been developed for \emph{forward} rounding
error analysis, including both static approaches
\cite{Rosa1,Rosa2,VCFLOAT2,FPTaylor,Gappa,DAISY,ASTREE,Fluctuat,REAL2FLOAT} and
dynamic approaches \cite{SATIRE}. In contrast, the formal methods literature on
\emph{backward error} analysis is sparse. The only effort we are aware of is
the LAProof library developed by \citet{Kellison:Arith:2023}, which uses Coq to
verify backward error bounds for linear algebra programs. However, proof
assistants like Coq are not widely used by numerical analysts. \bea{} lowers
the barrier to entry by maintaining formal guarantees without requiring manual
proofs.  

Other type-based approaches to \emph{forward} rounding error analysis include
NumFuzz \cite{NUMFUZZ} and a system by \citet{Martel:types:2018}. Like \bea{},
NumFuzz uses a linear type system and coeffects---but for \emph{forward error}.
While \bea{} share syntactic similarities, semantically they are entirely
different. First, errors in NumFuzz are tracked through an error monad, which
captures forward error; \bea{} tracks error in the context, through a comonad.
Second, while both NumFuzz and \bea{} use a graded comonad, the graded comonad
in NumFuzz scales the metric to track function sensitivity, while the graded
comonad in \bea{} shifts the metric to track backward error. Finally, similar to
other \emph{Fuzz}-like languages, NumFuzz interprets programs in the category of
metric spaces, which lacks the necessary structure for reasoning about backward
error.

\subsubsection*{Linear type systems and coeffects}
Our type system is most closely related to coeffect-based type systems. We
cannot comprehensively survey this active area of research; the thesis by
\citet{Petricek:2014:coeffects} provides an overview. Type systems in this
area include the Fuzz family of programming languages \citep{FUZZ,DFUZZ} and
the Granule language \citep{DBLP:journals/pacmpl/OrchardLE19}. A notable
difference in our approach is that we enforce strict linearity for graded
variables, with a separate context for reusable variables. 

\subsubsection*{Lenses and bidirectional programming}
Our semantic model is inspired by lenses, introduced by
\citet{Foster:2007:trees} to address the view-update problem in databases.
Lenses have been rediscovered in many contexts, ranging from categorical proof
theory and G\"odel's Dialectica translation \citep{dialectica} to open games
\citep{Ghani:2018:gamelenses} and supervised learning
\citep{DBLP:conf/lics/FongST19}. While our backward error lenses are formally
similar to prior notions, we are not aware of any existing lens framework that
subsumes ours.

\section{Conclusion and Future Directions}\label{sec:conc}
\bea{} is a typed first-order programming language that guarantees backward
error bounds. Its type system combines three elements: a notion of
distances for types, a coeffect system tracking backward error, and a
linear type system controlling the flow of backward error though programs. 
\ifshort
Although the backward error analysis modeled by \bea{} is 
more general than the standard approach, we can capture the standard definition
as a special case, as shown by our main theorem of backward error soundness
(\Cref{thm:main}). 
\fi
A benefit of \bea{} is its compositional structure: when backward error
bounds exist, they are derived compositionally from the backward error bounds
of subprograms. The linear type system correctly rejects programs that are not
backward stable, yet remains flexible enough to express classic backward-stable
algorithms. As the first static analysis framework for backward error, \bea{}
opens several paths for future work. We conclude with two promising directions
for future work. 

\subsubsection*{Higher-order functions.}
\bea{} does not currently support higher-order functions, which limits code reuse.
It is unclear whether \Bel{} supports linear exponentials, which are needed to
interpret function types. While most lens categories do not support higher-order
functions, there are some notable exceptions where the lens category is symmetric
monoidal closed (e.g., \citet{dialectica}). Connecting \bea{} to these lens
categories could suggest ways to support higher-order functions in the future. 
For now, we view a first-order language for backward error analysis as a meaningful 
step, particularly since most tools for forward error analysis are also first-order.

\subsubsection*{Richer backward error.}
This work focuses on a basic form of backward error guarantees, but richer
notions have been studied in the numerical analysis literature. For example, in 
\emph{probabilistic backward error} analysis, forward computations are randomized~(e.g.,
\citep{doi:10.1137/20M1334796}). In \emph{structured backward error}
analysis~(e.g., \citep{doi:10.1137/0613014}), the approximate input must satisfy 
additional structural constraints beyond mere proximity to the exact input---for 
instance, preserving matrix symmetry or rank.
Ideas from effectful and dependent lenses may be useful in extending our
approach to these settings.

\section*{Artifact}
The artifact for the implementation of \bea{} described in \Cref{sec:implementation} 
is available online \cite{Bean:artifact:2025}.

\begin{acks}
  We thank Andrew Appel, Dominic Orchard, Jessica Richards, and the anonymous
  reviewers for their close reading and valuable suggestions. This material is
  based upon work supported by the U.S. Department of Energy, Office of
  Science, Office of Advanced Scientific Computing Research, Department of
  Energy Computational Science Graduate Fellowship under award
  \#DE-SC0021110. This work was also partially supported by the National Science
  Foundation under awards \#1943130 and \#2219758, the Office of Naval
  Research under award \#N00014-23-1-2134, the Wolfson Foundation, and the Royal
  Society.
\end{acks}

\bibliographystyle{ACM-Reference-Format}
\bibliography{bib}

\ifshort
\else
\appendix
\section{The Category of Error Lenses}\label{app:check_bel}
This appendix provides the precise definition of the category \Bel{} of
backward error lenses. 

\begin{definition}[The Category \Bel{} of Backward Error Lenses]
\label{def:lensC} 
The category \Bel{} of \emph{backward error lenses} is the category with the
following data.
\begin{itemize}
  \item Its objects are slack distance spaces: 
    $(M, d : M \times M \to \R_{\geq 0}\cup\{\infty\}, r\in \R_{\geq 0}\cup\{\infty\})$, 
    where $d$ is a function giving distance and $r$ is known as the slack.
    \begin{itemize}
      \item We require that $d(x,x)\leq r$ for all $x\in X$
        (in other words, self-distance is bounded by slack).
    \end{itemize}
\item Its morphisms from $X$ to $Y$ are backward error lenses from $X$ to $Y$:
  triples of maps $(f, \tilde{f}, b)$, satisfying the two properties in
  \Cref{def:error_lens}.
\item The identity morphism on objects $X$ is given by the triple 
$(id_X,id_X,\pi_2)$.
\item The composition 
\[
(f_2,\tilde{f}_2,b_2) \circ (f_1,\tilde{f}_1,b_1)
\]
of error lenses ${(f_1,\tilde{f}_1,b_1) : X \rightarrow Y}$
and ${(f_2,\tilde{f}_2,b_2) : Y \rightarrow Z}$ is the error lens
${(f,\tilde{f},b): X \rightarrow Z}$ defined by
\begin{itemize}
\item the forward map 
  \begin{equation} \label{eq:fcomp}
    f: x \mapsto (f_1;f_2) \ x 
  \end{equation}
\item the approximation map
  \begin{equation}\label{eq:acomp}
    \tilde{f} : x \mapsto (\tilde{f}_1;\tilde{f}_2) \ x 
  \end{equation}
\item the backward map
  \begin{equation} \label{eq:bcomp}
    b: (x,z) \mapsto b_1(x, b_2(\tilde{f}_1(x), z)).
  \end{equation}
\end{itemize}
\end{itemize}
\end{definition}

Now, we verify that 
the composition is well-defined. 

\begin{figure}[h]
\[\begin{tikzcd}
	{X\times Y \times Z} && {X \times Y} \\
	\\
	{X \times Z} && X
	\arrow["{\langle id_X, \tilde{f}_{1} \rangle ~ \times ~ id_Z}", 
    from=3-1, to=1-1]
	\arrow["{id_X \times  ~b_2}", from=1-1, to=1-3]
	\arrow["{b}"', from=3-1, to=3-3]
	\arrow["{b_{1}}", from=1-3, to=3-3]
\end{tikzcd}\]
\caption{The backward map $b$ for the composition 
${{(f_2,\tilde{f}_2,b_2) \circ (f_1,\tilde{f}_1,b_1)}}$.}
\label{fig:compose}
\end{figure}
\noindent The diagram for the backward map for the composition of error lenses
is given in \Cref{fig:compose}.  

Let $L_1 = (f_1,\tilde{f}_1,b_1)$ and let $L_2 = (f_2,\tilde{f}_2,b_2)$.  We
first check the domain of $b$: for all $x \in X$ and $z \in Z$, and assuming
$d_Z(\tilde{f}(x),z)<\infty$, we must show  
\[
  {d_Y(\tilde{f}_1(x),b_2(\tilde{f}_1(x), z))}<\infty.
\] 
This follows from Property 1 for $L_2$:
\begin{align}
  d_Y(\tilde{f}_1(x),b_2(\tilde{f}_1(x), z))-r_Y &\le 
  d_Z(\tilde{f}_2(\tilde{f}_1(x)),z)-r_Z \\ 
  &= d_Z( \tilde{f}(x),z)-r_Z<\infty.
\end{align}
Now, given $x \in X$ and $z \in Z$ where
$d_Z(\tilde{f}(x),z)<\infty$, we can freely use 
Properties 1 and 2 of the lens $L_1$ to show that the 
lens properties hold for the composition:
\begin{enumerate}[align=left]
\item[{Property 1.}] 
  \begin{align*}
  d_X(x,b(x,z))-r_X &= 
      d_X(x,b_1(x, b_2(\tilde{f}_1(x), z)))-r_X 
      &&  \text{\cref{eq:bcomp}} \\
  &\le d_Y(\tilde{f}_1(x),b_2(\tilde{f}_1(x), z))-r_Y 
      && \text{Property 1 for  $L_1$} \\
  &\le d_Z(\tilde{f}_2(\tilde{f}_1(x)),z)-r_Z
      && \text{Property 1 for  $L_2$} \\
  &= d_Z(\tilde{f}(x),z)-r_Z. 
      && \text{\cref{eq:acomp}}
  \end{align*}
\item[{Property 2.}] 
\begin{align*}
f(b(x,z)) &= 
  f_2(f_1(b_1(x, b_2(\tilde{f}_1(x), z)) 
  )) && \text{\cref{eq:bcomp} \&  \cref{eq:fcomp}} \\
&= 
  f_2 ( b_2(\tilde{f}_1(x), z)) &&  
  \text{Property 2 for  $L_1$} \\
&=z. &&   \text{Property 2 for  $L_2$}
\end{align*}
\end{enumerate}

\section{Basic Constructions in \Bel{}}\label{app:check_bel_cons}
This appendix defines basic constructions in \Bel{} and 
verifies that they are well-defined.
Following the description of
\bea{} given in \Cref{sec:language}, we give the constructions below for lenses
corresponding to a tensor product, coproducts, and a \emph{graded comonad} for
interpreting linear typing contexts.

\subsection{Initial and Weak Terminal Objects}\label{app:terminal}
To warm up, let's consider the initial and weak terminal objects of our category.  Let
$0 \in \Bel$ be the empty distance space $(\emptyset, d_\emptyset, 0)$, and $1 \in
\Bel$ be the singleton distance space $(\{\star\}, \underbar{0}, 0)$ with a single
element and a constant distance function $d_1({\star,\star})=0$. Then for any
object $X \in \Bel$, there is a unique morphism $0_X : 0 \to X$ where the
forward, approximate, and backward maps are all the empty map, so $0$ is an
\emph{initial object} for \Bel.

Similarly, for every object $X \in \Bel$ there is a morphism $!_X : X \to 1$ given by
$f_{!} = \tilde{f}_!:= x \mapsto \star$ and $b_{!} := (x,\star) \mapsto x$. To
check that this is indeed a morphism in \Bel, we must check the two backward
error lens conditions in \Cref{def:error_lens}. The first condition boils down
to checking $d_X(x, x)-r_X \leq d_1(\star, \star)= 0$, but this holds by
our condition on objects in \Bel{} that $d_X(x,x)\leq r_X$ for all $x\in X$. 
The second condition is
clear, since there is only one element in $1$. However, this morphism is not 
necessarily unique, so $1$ is a \emph{weak terminal object} for \Bel.

\subsection{Tensor Product}
Next, we turn to products in \Bel. Like most lens categories, \Bel{} does not
support a Cartesian product. In particular, it is not possible to define a
diagonal morphism $\Delta_A : A \to A \times A$, where the space $A \times A$
consists of pairs of elements of $A$. The problem is the second lens condition
in \Cref{def:error_lens}: given an approximate map $\tilde{f} : A \to A \times
A$ and a backward map $b : A \times (A \times A) \to A$, we need to satisfy
\[
  f(b(a_0, (a_1, a_2))) = (a_1, a_2)
\]
for all $(a_0, a_1, a_2) \in A$. But it is not possible to satisfy this
condition when $a_1 \neq a_2$: that backward map can only return $a_0, a_1$, or
$a_2$, and in any case there is not enough information for the ideal map
$f$ to recover $(a_1, a_2)$. More conceptually, this is the technical
realization of the problem we described in \Cref{sec:overview}: if we think of
$a_1$ and $a_2$ as backward error witnesses for two subcomputations that both
use $A$, we may not be able to reconcile these two witnesses into a single
backward error witness.

Although a Cartesian product does not exist, \Bel{} does support a weaker,
monoidal product making it into a symmetric monoidal category.  Given two
objects $X$ and $Y$ in \Bel{} we have the object 
$X\otimes Y \triangleq (X \times Y,d_{X\otimes Y}, r_{X\otimes Y})$ where 
\vspace{0.5em}
\begin{equation}\label{eq:d_tensor_def}
d_{X\otimes Y}((x_1, y_1), (x_2, y_2))\triangleq
\begin{cases}
  \infty & \text{if }d_X(x_1,x_2)=\infty \\
   & \text{or }d_Y(y_1,y_2)=\infty \\
  d_X(x_1,x_2) & \text{if }r_Y=\infty \\
  d_Y(y_1,y_2) & \text{if }r_X=\infty \\
  \max\{d_X(x_1,x_2)+r_Y, d_Y(y_1,y_2) + r_X\} & \text{otherwise}
\end{cases}
\end{equation}
and 
\[
r_{X\otimes Y}\triangleq
\begin{cases}
  r_X & \text{if }r_Y=\infty \\
  r_Y & \text{if }r_X=\infty \\
  r_X+r_Y & \text{otherwise}.
\end{cases}
\] \\
We can see that $d_{X\otimes Y}$ is nonnegative 
and $d_{X\otimes Y}((x,y), (x,y))\leq r_{X\otimes Y}$ 
for all $(x,y)\in X\times Y$ by casework. 

We prove a useful property: for all $(x_1,y_1),(x_2,y_2)\in X\times Y$, 
\begin{equation}\label{eq:d_tensor}
  d_{X\otimes Y}((x_1,y_1), (x_2,y_2))-r_{X\otimes Y} = 
  \max\{d_X(x_1,x_2)-r_X, d_Y(y_1,y_2)-r_Y\}.
\end{equation}
\begin{proof}
  If either $d_X(x_1,x_2)=\infty$ or $d_Y(y_1,y_2)=\infty$, 
  the equality holds. Else, if $r_Y=\infty$, then 
  \begin{align*}
    d_{X\otimes Y}((x_1,y_1), (x_2,y_2))-r_{X\otimes Y}&=d_X(x_1,x_2)-r_X \\
    &=\max\{d_X(x_1,x_2)-r_X, d_Y(y_1,y_2)-r_Y\}.
  \end{align*}
  The equality similarly holds if $r_X=\infty$.
  If both $r_X$ and $r_Y$ are finite, then 
  \begin{align*}
    d_{X\otimes Y}((x_1,y_1), (x_2,y_2))-r_{X\otimes Y}&=  
    \max\{d_X(x_1,x_2)+r_Y, d_Y(y_1,y_2)+r_X\}-(r_X+r_Y) \\
    &=\max\{d_X(x_1,x_2)-r_X,d_Y(y_1,y_2)-r_Y\}.\qedhere
  \end{align*}
\end{proof}

Given any two morphisms ${(f,\tilde{f},b) : A \rightarrow X}$ and
${(g,\tilde{g},b'): B \rightarrow Y}$, we have the morphism
\[
  {(f,\tilde{f},b) \otimes  (g,\tilde{g},b') : 
  A \otimes  B \rightarrow X \otimes  Y}
\]
defined by
\begin{itemize}
\item the forward map
  \begin{equation}\label{eq:tensor_lens1}
     (a_1, a_2) \mapsto (f(a_1),g(a_2)) 
   \end{equation}
\item the approximation map
  \begin{equation}\label{eq:tensor_lens2}  
    (a_1,a_2) \mapsto (\tilde{f}(a_1),\tilde{g}(a_2)) 
  \end{equation}
\item the backward map
  \begin{equation} \label{eq:tensor_lens3}
    ((a_1, a_2),(x_1,x_2)) \mapsto (b(a_1,x_1),b'(a_2,x_2)) 
  \end{equation}
\end{itemize}

The tensor product given in
\Cref{eq:tensor_lens1,eq:tensor_lens2,eq:tensor_lens3} is only well-defined if
the domain of the backward map is well-defined, and if the error lens
properties hold. We check these properties below.

We first check the domain of $b_{\otimes}$: for all ${(a_1,a_2) \in A \times  B}$ and 
${(x_1,x_2) \in X \times  Y}$, we assume 
\begin{equation}\label{eq:prod_assum}
  d_{X\otimes Y}(\tilde{f}_{ \otimes }(a_1,a_2),(x_1,x_2))<\infty
\end{equation} 
and we are required to show  
\begin{equation}\label{eq:domten}
  d_{X}(\tilde{f}(a_1),x_1)<\infty \text{ and } 
  d_{Y}(\tilde{g}(a_2),x_2) <\infty.
\end{equation}
This follows immediately by \Cref{eq:tensor_lens2} and \Cref{eq:d_tensor_def}.

Given that \Cref{eq:domten} holds for all ${(a_1,a_2) \in A \times  B}$ 
and ${(x_1,x_2) \in X \times  Y}$ under the assumption given in 
\Cref{eq:prod_assum}, we can freely use Properties 1 and 2 of the lenses 
$(f,\tilde{f},b)$ and $(g,\tilde{g},b')$ to show that the lens properties 
hold for the product.
\begin{enumerate}[align=left]
  \item[Property 1.]
  \begin{align*}
  d_{A\otimes B}((a_1,a_2), b_{ \otimes } 
    ((a_1,a_2),(x_1,x_2)))-r_{A\otimes B}
  &=\max\{d_A(a_1,b(a_1,x_1))-r_A,d_B(a_2,b'(a_2,x_2))-r_B\}
  \tag*{\cref{eq:d_tensor} and \cref{eq:tensor_lens3}} \\
  &\le \max\{d_X(\tilde{f}(a_1),x_1)-r_X, d_Y(\tilde{g}(a_2),x_2)-r_Y\}
  \tag*{Property 1 of $(f,\tilde{f},b)$ and $(g,\tilde{g},b')$} \\
  &=d_{X\otimes Y}(\tilde{f}_\otimes(a_1,a_2), (x_1,x_2))-r_{X\otimes Y}.
  \tag*{\cref{eq:d_tensor} and \cref{eq:tensor_lens2}}
  \end{align*}
  \item[{Property 2.}] As above, the property follows directly from 
    Property 2 of the components $(f,\tilde{f},b)$ and $(g,\tilde{g},b')$.
\end{enumerate}

\begin{lemma}\label{lem:bifun}
	The tensor product operation on lenses induces a bifunctor on  \Bel.
\end{lemma}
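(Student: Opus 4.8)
The plan is to verify the two functoriality axioms for the candidate functor $\otimes : \Bel \times \Bel \to \Bel$. The well-definedness checks just completed — on objects via \Cref{eq:d_tensor_def} and on morphisms via \Cref{eq:tensor_lens1,eq:tensor_lens2,eq:tensor_lens3} — already establish that $\otimes$ sends a pair of objects of \Bel{} to an object and a pair of morphisms to a morphism, so all that remains is preservation of identities and preservation of composition (the interchange law).

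For identities: the identity morphism on $X$ is $(id_X, id_X, \pi_2)$ by \Cref{def:lensC}. Substituting $(id_X, id_X, \pi_2)$ and $(id_Y, id_Y, \pi_2)$ into \Cref{eq:tensor_lens1,eq:tensor_lens2,eq:tensor_lens3}, the forward and approximate maps of $id_X \otimes id_Y$ both reduce to the identity on $X \times Y$, and the backward map reduces to $((x_1,y_1),(x_2,y_2)) \mapsto (x_2,y_2)$, which is $\pi_2$ on $X \otimes Y$; hence $id_X \otimes id_Y = id_{X \otimes Y}$.

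For composition: given lenses $L_1 : A \to X$, $L_2 : X \to Z$, $L_1' : B \to Y$, $L_2' : Y \to W$, I would show
\[
  (L_2 \circ L_1) \otimes (L_2' \circ L_1') \;=\; (L_2 \otimes L_2') \circ (L_1 \otimes L_1').
\]
The forward and approximate components of both sides send $(a,b)$ to $((f_1;f_2)\,a,\,(g_1;g_2)\,b)$ (resp.\ with tildes), directly from \Cref{eq:fcomp,eq:acomp} and \Cref{eq:tensor_lens1,eq:tensor_lens2}. For the backward components, expanding the right-hand side by \Cref{eq:bcomp} with outer lens $L_2 \otimes L_2'$ and inner lens $L_1 \otimes L_1'$ — using that the approximate map of $L_1 \otimes L_1'$ is $(a,b) \mapsto (\tilde f_1(a), \tilde g_1(b))$ and that the backward maps of the two tensor lenses act componentwise — gives
\[
  ((a,b),(z,w)) \ \mapsto\ \bigl(b_1(a, b_2(\tilde f_1(a), z)),\; b_1'(b, b_2'(\tilde g_1(b), w))\bigr),
\]
and expanding the left-hand side by applying \Cref{eq:tensor_lens3} to the two composite lenses, with the composite backward formula \Cref{eq:bcomp} inserted in each coordinate, yields the identical expression.

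The one point needing care is that backward maps are partial, so the last equality must be read as an equality of partial functions with matching domains. This is not a real obstacle: both sides share the same approximate map, hence the same domain $\{((a,b),(z,w)) : d_{Z\otimes W}(\tilde f(a,b),(z,w)) < \infty\}$, and the fact that each intermediate backward-map application lands inside its prescribed domain is exactly what the well-definedness arguments for lens composition (around \Cref{fig:compose}) and for the tensor product (the derivation of \Cref{eq:domten}) already provide. So this step is bookkeeping over results already established; the algebraic identities are routine substitutions, and I expect the domain tracking to be the only mildly delicate part. Finally, since functoriality on the product category $\Bel \times \Bel$ is precisely preservation of componentwise identities and composites, these two checks suffice to conclude that $\otimes$ is a bifunctor.
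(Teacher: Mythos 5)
Your proposal takes essentially the same route as the paper: verify preservation of identities and of composition by direct substitution into the tensor-lens formulas, with the composition check reduced to comparing the backward components. The paper elides the domain-tracking point you raise; your observation that both sides share the same approximate map (hence the same domain) and that the intermediate applications land in-domain by the earlier well-definedness arguments is a correct and worthwhile addition, but it does not change the structure of the argument.
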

\noindent The proof of \Cref{lem:bifun} requires checking conditions expressing
preservation of composition and identities:

\begin{proof}
The functoriality of the triple given in
\Cref{eq:tensor_lens1,eq:tensor_lens2,eq:tensor_lens3} follows by checking
conditions expressing preservation of composition and identities.
Specifically, for any error lenses 
$h : A \rightarrow B$,
$h': A' \rightarrow B'$,
$g : B \rightarrow C $, and 
$g': B' \rightarrow C'$, 
we must show  
\[
 (g \otimes g') \circ (h \otimes h') = (g\circ h) \otimes  (g' \circ h').
\]
We check the backward map: given any $(a_1,a_2) \in A \otimes A'$ and 
$(c_1,c_2) \in C \otimes C'$ we have 
\begin{align}
  b_{(g \otimes g') \circ (h \otimes h')}((a_1,a_2),(c_1,c_2)) &=
  b_{h \otimes h'}((a_1,a_2),b_{g \otimes g'}
  (\tilde{f}_{h \otimes h'}(a_1,a_2),(c_1,c_2)))\\
  &= (b_h(a_1,b_g(\tilde{f}_h(a_1),c_1)),b_{h'}(a_2,b_{g'}(\tilde{f}_{h'}(a_2),c_2)))\\
  &= b_{(g\circ h) \otimes  (g' \circ h')}((a_1,a_2),(c_1,c_2)).
\end{align}

Moreover, for any objects $X$ and $Y$ in $\Bel$, the identity lenses 
$id_X$ and $id_Y$ clearly satisfy 
\[ 
id_X \otimes  id_Y = id_{X \otimes  Y}. \qedhere
\]
\end{proof}

The bifunctor $\otimes $ on the category \Bel{} gives rise to a symmetric
monoidal category of error lenses. The unit object $I$ is defined to be the
object $(\{\star\}, \underbar{0}, \infty)$ with a single element and a
constant distance function $d_I({\star,\star})=0$ along with natural
isomorphisms for the associator ($\alpha_{X,Y,Z} : X \otimes  (Y \otimes Z)\to (X\otimes Y)\otimes Z)$,
and we can define the usual left-unitor ($\lambda_X : I \otimes  X \rightarrow
X)$, right-unitor ($\rho_X : X \otimes I \rightarrow X )$, and symmetry
($\gamma_{X,Y} : X \otimes   Y \rightarrow Y \otimes  X$) maps.  

\subsubsection{{Associator}} 
We define the associator 
$\alpha_{X,Y,Z} : X \otimes  (Y  \otimes  Z) 
\rightarrow (X \otimes  Y) \otimes  Z$ as the following triple:
\begin{align}
  f_{\alpha}(x,(y,z)) &\triangleq ((x,y),z)  \\
  \tilde{f}_{\alpha}(x,(y,z)) &\triangleq ((x,y),z) 
      \label{eq:aassoc} \\
  b_{\alpha}((x,(y,z)),((a,b),c)) &\triangleq (a,(b,c)) 
      \label{eq:bassoc}.
\end{align}

It is straightforward to check that $\alpha_{X,Y,Z}$ is an error lens
satisfying Properties 1 and 2. To check that the associator is an isomorphism,
we are required to show the existence of the lens
\[
  \alpha' :  (X \otimes Y) \otimes Z \rightarrow 
  X \otimes  (Y \otimes Z) 
\] 
satisfying 
\[
  \alpha' \circ \alpha_{X,Y,Z} = id_{X\otimes (Y\otimes Z)}
\]
and 
\[
  \alpha_{X,Y,Z} \circ \alpha' = id_{(X\otimes Y)\otimes Z}
\]
where $id$ is the identity lens (see \Cref{def:lensC}).  Defining the forward
and approximation maps for $\alpha'$ is straightforward; for the forward map we
have
\begin{align*} 
  f_{\alpha'}((x,y),z) &\triangleq (x,(y,z))
\end{align*}
and the approximation map is defined identically. For the backward map we have
\begin{align*} 
  b_{\alpha'}(((x,y),z),(a,(b,c))) \triangleq ((a,b),c).
\end{align*}
It is straightforward to check that $\alpha'$ satisfies Properties 1 and 2 of 
an error lens. 

The naturality of the associator follows by checking that the following diagram
commutes. 
\[\begin{tikzcd}
	{X \otimes  (Y \otimes  Z)} && {X \otimes  (Y \otimes  Z)} \\ \\
	{(X \otimes  Y) \otimes  Z} && {(X \otimes  Y) \otimes  Z}
	\arrow["{\alpha_{X,Y,Z}}"', from=1-1, to=3-1]
	\arrow["{g_X \otimes   (g_Y  \otimes   g_Z)}", from=1-1, to=1-3]
	\arrow["{\alpha_{X,Y,Z}}", from=1-3, to=3-3]
	\arrow["{(g_X  \otimes   g_Y)  \otimes   g_Z}"', from=3-1, to=3-3]
\end{tikzcd}\]
That is, we check that 
\[
  ((g_X \otimes  g_Y) \otimes   g_Z) \circ \alpha_{X,Y,Z} = 
  \alpha_{X,Y,Z} \circ (g_X \otimes (g_Y \otimes  g_Z))
\] 
for the error lenses 
\begin{align*}
  g_X&: X \rightarrow X \triangleq (f_X,\tilde{f}_X,b_X) \\ 
  g_Y&: Y \rightarrow Y \triangleq (f_Y,\tilde{f}_Y,b_Y) \\
  g_Z&: Z \rightarrow Z \triangleq (f_Z,\tilde{f}_Z,b_Z).
\end{align*}
This follows from the definitions of lens composition (\Cref{def:error_lens})
and the tensor product on lenses
(\cref{eq:tensor_lens1,eq:tensor_lens2,eq:tensor_lens3}). 
We detail here the case of the backward map. 

Using the notation $b_g$ (resp. $\tilde{f}_g$) to refer to both of the 
backward maps (resp. approximation maps) of the tensor product lenses of 
the lenses $g_X$, $g_Y$, and $g_Z$, we are required to show that 
\begin{equation}
b_\alpha(xyz,b_{g}(\tilde{f}_{\alpha}(xyz),x'y'z')) 
  = b_{g}(xyz,b_\alpha(\tilde{f}_g(xyz),x'y'z')) 
\label{eq:verif_assoc1}
\end{equation}
for any $xyz \in X \ \times \ (Y\times  Z)$ and 
$x'y'z' \in (X\times  Y)\times  Z$:
\begin{align*}
  b_\alpha(xyz,b_{g}(\tilde{f}_{\alpha}(xyz),x'y'z'))
  &= b_{\alpha}(xyz,b_{g}(((x,y),z),x'y'z')) 
    & \text{\Cref{eq:aassoc}} \\
  &= b_{\alpha}(xyz,((b_X(x,x'),b_Y(y,y')),b_Z(z,z'))) 
    & \text{\Cref{eq:tensor_lens3}} \\
  &= (b_X(x,x'),(b_Y(y,y'),b_Z(z,z'))) 
    & \text{\Cref{eq:bassoc}} \\
  &= b_{g}(xyz,(x',(y',z'))) 
    & \text{\Cref{eq:tensor_lens3}} \\
  &= b_{g}(xyz,b_\alpha(\tilde{f}_g(xyz),x'y'z')).
    & \text{\Cref{eq:bassoc}} \\
\end{align*}

\subsubsection{{Unitors}} We define the left-unitor 
$\lambda_X: I \otimes  X \rightarrow X$ as
\begin{align*}
  f_{\lambda}(\star,x) &\triangleq x \\
  \tilde{f}_{\lambda}(\star,x) &\triangleq x \\
  b_{\lambda}((\star,x),x') &\triangleq (\star,x').
\end{align*}
The right-unitor is similarly defined. 
The fact that $r_I = \infty$ is essential in order for 
$\lambda_X$ to satisfy the first property of an error lens.
Assuming $d_X(\tilde{f}(\star, x), x')=d_X(x,x')<\infty$,
\begin{align*}
  d_{I\otimes X}((\star,x),b_\lambda((\star,x), x')) - r_{I\otimes X}
  &= d_{I\otimes X}((\star,x),(\star,x')) - r_{I\otimes X} \\
  &= \max\{d_I(\star, \star)-r_I, d_X(x,x')-r_X\} 
    &\text{\Cref{eq:d_tensor}} \\
  &= \max\{-\infty, d_X(x,x')-r_X\} \\
  &= d_X(x,x') - r_X \\
  &= d_X(\tilde{f}_\lambda(\star, x), x')-r_X.
\end{align*}

Checking the naturality of $\lambda_X$ amounts to checking that 
the following diagram commutes for all error lenses 
$g : X \rightarrow Y$. 
\[\begin{tikzcd}
	{I \otimes  X} && {I \otimes  Y} \\ \\
	{X} && {Y}
	\arrow["{\lambda_X}"', from=1-1, to=3-1]
	\arrow["{ id_I \otimes  g }", from=1-1, to=1-3]
	\arrow["{\lambda_Y}", from=1-3, to=3-3]
	\arrow["{g}"', from=3-1, to=3-3]
\end{tikzcd}\]

\subsubsection{{Symmetry}}
We define the symmetry map $\gamma_{X,Y} : X \otimes   Y \rightarrow Y \otimes
X$ as the following triple:
\begin{align*}
  f_{\gamma}(x,y) &\triangleq (y,x) \\
  \tilde{f}_{\gamma}(x,y) &\triangleq (y,x) \\
  b_{\gamma}((x,y),(y',x')) &\triangleq (x',y').
\end{align*} 

It is straightforward to check that  $\gamma_{X,Y}$ is an error 
lens. Checking the naturality of  $\gamma_{X,Y}$ amounts to 
checking that the following diagram commutes for any error 
lenses $g_1 : X \rightarrow Y$ and $g_2 : Y \rightarrow X$.

\[\begin{tikzcd}
	{X \otimes  Y} && {Y \otimes  X} \\ \\
	{Y \otimes  X} && {X \otimes  Y}
	\arrow["{\gamma_{X,Y}}"', from=1-1, to=3-1]
	\arrow["{ g_1 \otimes  g_2 }", from=1-1, to=1-3]
	\arrow["{\gamma_{Y,X}}", from=1-3, to=3-3]
	\arrow["{g_2 \otimes  g_1}"', from=3-1, to=3-3]
\end{tikzcd}\]

\subsection{Projections}
For any two spaces $X$ and $Y$ with zero self-distance and $r_X=r_Y$, 
we can define a projection map
$\pi_1 : X \otimes Y \rightarrow X$ via:
\begin{itemize}
\item the forward map 
  \[
    f: (x, y) \mapsto x
  \]
\item the approximation map
  \[
    \tilde{f} : (x, y) \mapsto x
  \]
\item the backward map
  \[
    b: ((x, y),z) \mapsto (z, y).
  \]
\end{itemize}
The projection $\pi_2 : X \otimes Y \rightarrow Y$ is defined similarly.
We prove the first lens property for $\pi_1$.
Suppose $d_X(\tilde{f}(x,y),z)=d_X(x,z)<\infty$:
\begin{align*}
  d_{X\otimes Y}((x,y),b((x,y),z)) -r_{X\otimes Y} 
  &= \max\{d_X(x,z)-r_X,d_Y(y,y)-r_Y\} 
    &\text{\Cref{eq:d_tensor}} \\
  &= \max\{d_X(x,z)-r_X, -r_Y\} 
    &\text{since $d_Y(y,y)=0$} \\
  &= d_X(x,z)-r_X. 
    &\text{since $r_X=r_Y$}
\end{align*}

\subsection{Coproducts}\label{app:coproducts}
For any two spaces $X$ and $Y$ such that
$r_X$ and $r_Y$ are finite, we have the object $(X + Y,d_{X+Y},r_{X+Y})$,
where the metric $d_{X+Y}$ is defined as 
\begin{equation} 
  d_{X+Y}(x,y) \triangleq   
   \begin{cases} 
      d_X(x_0,y_0) +r_Y & \text{if } x = inl \ x_0  \text{ and } y = 
      inl \ y_0  \\
      d_Y(x_0,y_0) +r_X & \text{if } x = inr \ x_0  \text{ and } y = 
      inr \ y_0 \\
     \infty & \text{otherwise}
    \end{cases}
\label{eq:prod_met}
\end{equation}
and $r_{X+Y}\triangleq r_X+r_Y$. We can easily show that self-distance is bounded by 
$r_{X+Y}$.

We define the morphism for the first projection $in_1: X \rightarrow X + Y$ as
the triple
\begin{align}
  f_{in_1}(x) = \tilde{f}_{in_1}(x) &\triangleq inl \ x \label{eq:inlf} \\
  b_{in_1}(x,z) &\triangleq 
      \begin{cases}  
        x_0 & \text{if } z = inl \ x_0 \\
        x & \text{otherwise.}
      \end{cases} \label{eq:inlb}
\end{align}

The morphism $in_2 : Y \rightarrow X + Y $ for the second projection can be
defined similarly. We now check that the first projection is well-defined:
\begin{enumerate}[align=left]
\item[Property 1.] For any $x \in X$ and $z \in X+Y$, suppose
\[
  d_{X+Y}(\tilde{f}_{in_1}(x),z) = d_{X+Y}(inl \ x,z)<\infty.
\] 
Thus, we know $z=inl \ x_0$ for some $x_0\in X$, and so 
\begin{align*}
  d_{X}(x,b_{in_1}(x,z))-r_X &= d_X(x,x_0)-r_X \\
  &= d_X(x,x_0) + r_Y-r_{X+Y} \\
  &= d_{X+Y}(\tilde{f}_{in_1} (x), z)-r_{X+Y}. &\text{\Cref{eq:prod_met}}
\end{align*}

\item[Property 2.] For any $x \in X$ and $z \in X+Y$, again suppose 
$d_{X+Y}(\tilde{f}_{in_1}(x),z) = d_{X+Y}(inl \ x,z)<\infty$, and thus
$z=inl\ x_0$ for some $x_0\in X$. 
Thus,
\[
  f_{in_1}(b_{in_1}(x,z))=f_{in_1}(x_0)=inl\ x_0=z
\]
by \Cref{eq:inlf}.
\end{enumerate}

Given any two morphisms 
${g : X \rightarrow C \triangleq (f_g,\tilde{f}_g,b_g)}$ and 
${h : Y \rightarrow C \triangleq (f_h,\tilde{f}_h,b_h)}$ 
where $r_X$ and $r_Y$ are finite,
we can define the unique copairing morphism ${[g,h] : X + Y \rightarrow C}$ 
such that $[g,h] \circ in_1 = g$ and $[g,h]  \circ in_2= h$:
\begin{align}
  f_{[g,h]}(z) &\triangleq       
      \begin{cases}  
        f_g(x) & \text{if } z = inl \ x \\
        f_h(y) & \text{if } z = inr \ y
      \end{cases}  \\
  \tilde{f}_{[g,h]}(z) &\triangleq 
      \begin{cases}  
        \tilde{f}_g(x) & \text{if } z = inl \ x \\
        \tilde{f}_h(y) & \text{if } z = inr \ y
      \end{cases} \label{eq:approx_copairing}\\
  b_{[g,h]}(z,c) &\triangleq 
      \begin{cases}  
        inl \ (b_g(x,c)) & \text{if } z = inl \ x \\
        inr \ (b_h(y,c)) & \text{if } z = inr \ y.
      \end{cases} 
      \label{eq:b_copairing}
\end{align}
We now check that copairing is well-defined:

\begin{enumerate}[align=left]
\item[Property 1.] For some $z \in X  + Y$ and $c \in C$, 
suppose $d_C(\tilde{f}_{[g,h]}(z),c)<\infty$.
If $z=inl\ x$ for some $x\in X$, then $d_C(\tilde{f}_g(x),c)<\infty$ and 
we can use Property 1 for $g$:
\begin{align*}
  d_{X+Y}(z,b_{[g,h]}(z,c)) -r_{X+Y} 
  &= d_X(x, b_g(x,c)) -r_X &\text{\Cref{eq:prod_met} and \Cref{eq:b_copairing}} \\
  &\leq d_C(\tilde{f}_g(x),c)-r_C &\text{Property 1 for $g$} \\
  &= d_C(\tilde{f}_{[g,h]}(z), c)-r_C. &\text{\Cref{eq:approx_copairing}}
\end{align*}
Otherwise, $z=inr \ y$ for some $y \in Y$ then $d_C(\tilde{f}_{h}(y), c)<\infty$
 and we use Property 1 for $h$.

\item[Property 2.] For some $z \in  X+ Y$ and $c \in C$, 
$f_{[g,h]}(b_{[g,h]}(z,c)) = c$
supposing $d_C (\tilde{f}_{[g,h]}(z),c)<\infty$.
If $z=inl \ x$ for some $x \in X$ then 
$d_C(\tilde{f}_{g}(x), c)<\infty$ 
and we use Property 2 for $g$. Otherwise, $z=inr \ y$ for some $y \in Y$ 
then $d_C(\tilde{f}_{h}(y), c)<\infty$ and we use 
Property 2 for $h$.
\end{enumerate}

To show that $[g,h] \circ in_1 = g$ (resp. $[g,h]  \circ in_2= h$), 
we observe that the following diagrams, by definition, commute.

\begin{figure}[!htb]
\minipage{0.32\textwidth}
\[\begin{tikzcd}
	{X + Y} && C \\
	\\
	X
	\arrow["{f_{in_1}}", from=3-1, to=1-1]
	\arrow["{f_{[g,h]}}", from=1-1, to=1-3]
	\arrow["{f_g}"', from=3-1, to=1-3]
\end{tikzcd}\]
\endminipage\hfill
\minipage{0.32\textwidth}
\[\begin{tikzcd}
	{X + Y} && C \\
	\\
	X
	\arrow["{\tilde{f}_{in_1}}", from=3-1, to=1-1]
	\arrow["{\tilde{f}_{[g,h]}}", from=1-1, to=1-3]
	\arrow["{\tilde{f}_g}"', from=3-1, to=1-3]
\end{tikzcd}\]
\endminipage\hfill
\minipage{0.32\textwidth}%
\[\begin{tikzcd}
	{(X+Y)\times C} && {X + Y} \\
	\\
	{X \times C} && X
	\arrow["{\tilde{f}_{in_1}\times id_C}", from=3-1, to=1-1]
	\arrow["{b_{[g,h]}}", from=1-1, to=1-3]
	\arrow["{b_g}"', from=3-1, to=3-3]
	\arrow["{b_{in_1}}", from=1-3, to=3-3]
\end{tikzcd}\]
\endminipage
\end{figure}

\subsubsection{Uniqueness of the copairing}
We check the uniqueness of copairing by showing that for any two morphisms 
$g_1 : X \rightarrow C$ and $g_2 : Y \rightarrow C$, if 
$h \circ in_1 = g_1$ and $h \circ in_2 = g_2$ for any 
$h : X + Y \rightarrow C$, then $h = [g_1,g_2]$. 

We detail the cases for the forward and backward map; the case for the 
approximation map is identical to that of the forward map. 

\begin{enumerate}[align=left]
\item[\textbf{forward map}] We are required to show that 
$f_h(z) = f_{[g_1,g_2]}(z)$ for any ${z \in X + Y}$ assuming that 
$f_{in_1};f_{h} = f_{g_1}$ and $f_{in_2};f_{h} = f_{g_2}$. 
The desired conclusion follows by cases on $z$; i.e., $z = inl \ x$ 
for some $x \in X$ or $z = inr \ y$ for some $y \in Y$.

\item[\textbf{backward map}] We are required to show that 
$b_{h}(z,c) = b_{[g_1,g_2]}(z,c)$ for any $z \in X+Y$ and $c \in C$
such that 
\[d_C(\tilde{f}_h(z),c)=d_C(\tilde{f}_{[g1,g2]}(z),c)<\infty.\]
If $z = inl \ x$ for some $x \in X$ then we are required to show that 
\begin{align*}
  b_{h}(inl \ x,c) &= inl \ (b_{g_1}(x,c)). & \text{\Cref{eq:b_copairing}}
\end{align*}

By definition of lens composition, since $d_C(\tilde{f}_{h}(z),c)=d_C(\tilde{f}_{h\circ in_1}(x),c)<\infty$, 
we have that 
\[d_{X+Y}( inl \ x, b_h(inl \ x, c)) <\infty.\]
Thus, $b_h(inl \ x, c) = inl \ x_0$ for some $x_0 \in X$.
Unfolding definitions in the assumption $b_{h \circ in_1} = b_{g_1}$, 
we have that 
$b_{in_1}(x,b_h( \tilde{f}_{in_1}(x),c)) = b_{g_1}(x,c)$.
Hence, $b_{g_1}(x,c) = b_{in_1}(x,inl \ x_0) = x_0$, from which 
the desired conclusion follows. 
The case of $z = inr \ y$ for some $y \in Y$ is analogous.
\end{enumerate}

\subsection{A Graded Comonad}\label{app:comonad}
Next, we turn to the key construction in \Bel{} that enables our semantics to
capture morphisms with non-zero backward error. The rough idea is to use a
graded comonad to shift the distance by a numeric constant; this change then
introduces slack into the lens conditions in \Cref{def:error_lens} to support
backward error.

More precisely, construct a comonad graded by the real numbers.  Let the
pre-ordered monoid $\mathcal{R}$ be the non-negative real numbers ${\R^{\ge 0}}$
with the usual order and addition.  We define a graded comonad on \Bel{} by the
family  of functors 
\begin{align*}
\{D_r :
  \Bel\rightarrow \Bel \ | \ r \in \mathcal{R}\}
\end{align*}
as follows.
\begin{itemize}
\item The object-map $D_r: \mathbf{Bel} \rightarrow \mathbf{Bel}$ takes
  $(X,d_X,r_X)$ to $(X,d_X, r_X+r)$, where $\infty + r$ is defined to be $\infty$.
\item The arrow-map $D_r$ takes an error lens 
$(f,\tilde{f},b) : A \rightarrow X$ to an error lens 
\begin{align}
	(D_rf,D_r\tilde{f},D_rb) : D_rA \rightarrow D_rX
\end{align}
 where
\begin{align}\label{eq:monad1}
	(D_rg)x \triangleq g(x). 
\end{align}
\item The \emph{counit} map $\varepsilon_X: D_0 X \rightarrow X$ is 
the identity lens.
\item The \emph{comultiplication} map 
$\delta_{q,r,X}: D_{q+r}X \rightarrow D_q(D_rX)$ is the identity lens.
\item The \emph{2-monoidality} map 
$m_{r,X,Y} : D_rX \otimes D_rY \iso D_r(X \otimes Y)$ is the identity
lens. 
\item The map
$m_{q\le r,X} : D_rX \rightarrow D_qX$ is the identity lens.
\end{itemize}
Unlike similar graded comonads considered in the literature on coeffect
systems, our graded comonad does not support a graded contraction map: there is
no lens morphism $c_{r, s, X} : D_{r + s} X \to D_r X \otimes D_s X$. This is
for the same reason that our category does not support diagonal maps: it is not
possible to satisfy the second lens condition in \Cref{def:error_lens}. Thus,
we have a graded comonad, rather than a graded exponential comonad
\citep{Brunel:2014:coeffects}.

\subsection{Discrete Objects}
While there is no morphism $X \to X \otimes X$ in general, graded or not, there
is a special class of objects where we do have contraction and thus a diagonal map: the
\emph{discrete} spaces.

\begin{definition}[Discrete space]
  We say a slack distance space $(X, d_X, r)$ is \emph{discrete} if its
  distance function $d_X(x_1, x_2)=\infty$ for all $x_1 \neq x_2$, and 
  $d_X(x,x)=0$ for all $x$. 
\end{definition}
We write \Del{} for category of the discrete spaces and backward error lenses;
this forms a full subcategory of \Bel.  Discrete objects are closed under the
monoidal product in \Bel, and the unit object $I$ is discrete. We define a monad
which promotes some objects in \Bel{} to discrete objects.

\begin{definition}[Discrete monad]
  We define a monad, $M$, on the subcategory of \Bel{} of objects $(X,d_X,0)$ 
  whose metrics satisfy \emph{reflexivity}, i.e. $d_X(x,y)=0$ if and only
  if $x=y$. The monad acts on objects as follows:
  \begin{equation*}
    M(X,d_X,0)\triangleq (X,d_\alpha,0)
  \end{equation*}
  where $d_\alpha$ is the discrete metric on $X$. On morphisms, 
  $M(f,\tilde{f},b)\triangleq (f,\tilde{f},b)$.
  The unit $(\eta)$ and multiplication $(\mu)$ maps are the identity.
\end{definition}
We verify for all morphisms $(f,\tilde{f},b):(X,d_X,0)\to (Y,d_Y,0)$, where $d_X$ and $d_Y$
satisfy reflexivity, that $M(f,\tilde{f},b):(X,d_\alpha,0)\to (Y,d_\alpha,0)$ is a valid lens.
\begin{proof}
  Take $x\in X,y\in Y$ such that $d_\alpha(\tilde{f}(x),y)<\infty$. As $d_\alpha$ 
  is the discrete metric and thus infinite on distinct objects, we must have 
  $\tilde{f}(x)=y$. As $d_Y$ satisfies reflexivity by assumption, we 
  know $d_Y(\tilde{f}(x),y)=0$. Hence, the first lens condition for 
  $(f,\tilde{f},b)$ asserts:
  \begin{equation*}
    d_X(x,b(x,y))\leq d_Y(\tilde{f}(x),y)=0.
  \end{equation*}
  By reflexivity, as $d_X(x,b(x,y))=0$, we know $x=b(x,y)$. Now, we can prove
  the first lens condition for $M(f,\tilde{f},b)$:
  \begin{equation*}
    d_\alpha(x,b(x,y))=d_\alpha(\tilde{f}(x),y)=0.
  \end{equation*}
  The second lens condition follows immediately from the second lens condition 
  on $(f,\tilde{f},b)$.
\end{proof}

There are two other key facts about discrete objects. 
First, the graded comonad $D_r$ restricts to a graded comonad on $\Del$. In
particular, if $X$ is a discrete object, then $D_r X$ is also a discrete object.
Second, it is possible to define a contraction map.
\begin{lemma}[Discrete contraction]
  For any discrete object $X \in \Del$ and $r,s\in\mathcal{R}$, 
  there is a lens morphism $c_{r,s,X} : D_{r+s}X\to D_rX\otimes D_sX$ defined via
  \begin{align*}
    f_t = \tilde{f}_t &\triangleq x \mapsto (x,x) \\
    b_t &\triangleq (x,(x_1,x_2)) \mapsto x.
  \end{align*}
\end{lemma}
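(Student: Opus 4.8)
The plan is to verify that the triple $(f_t, \tilde{f}_t, b_t)$ given in the statement defines a valid backward error lens from $D_{r+s}X$ to $D_rX \otimes D_sX$; that is, I need to check the domain condition on $b_t$ and Properties 1 and 2 from \Cref{def:error_lens}. Throughout I will use that $X = (X, d_X, 0)$ is discrete, so $d_X(x_1,x_2) = \infty$ whenever $x_1 \neq x_2$ and $d_X(x,x) = 0$. Applying the graded comonad, $D_{r+s}X$ is the space $(X, d_X, r+s)$ and $D_rX \otimes D_sX$ is $(X \times X, d_{X \otimes X}, r+s)$, where the slack is $r + s$ since both component slacks are finite, and by \Cref{eq:d_tensor_def} the distance $d_{X \otimes X}((x_1,x_2),(x_1',x_2'))$ equals $\max\{d_X(x_1,x_1') + s, d_X(x_2,x_2') + r\}$ when both component distances are finite and $\infty$ otherwise.

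First I would settle the domain of $b_t$. Given $x \in X$ and $(x_1,x_2) \in X \times X$ with $d_{X \otimes X}(\tilde{f}_t(x),(x_1,x_2)) = d_{X \otimes X}((x,x),(x_1,x_2)) < \infty$, the casework above forces both $d_X(x,x_1) < \infty$ and $d_X(x,x_2) < \infty$, and since $X$ is discrete this means $x = x_1 = x_2$. So on its domain $b_t$ is really only ever applied to triples of the form $(x,(x,x))$, and returns $x$. Property 2 is then immediate: $f_t(b_t(x,(x_1,x_2))) = f_t(x) = (x,x) = (x_1,x_2)$, using $x = x_1 = x_2$. For Property 1, I compute
\begin{align*}
  d_X(x, b_t(x,(x_1,x_2))) - (r+s) &= d_X(x,x) - (r+s) = -(r+s) \\
  &= \max\{-(r+s), -(r+s)\} \\
  &= \max\{d_X(x,x_1) + s, d_X(x,x_2) + r\} - (r+s) \\
  &= d_{X \otimes X}(\tilde{f}_t(x),(x_1,x_2)) - (r+s),
\end{align*}
where the third equality again uses $x = x_1 = x_2$ so that each $d_X$ term is $0$. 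Hence Property 1 holds with equality, which is more than required.

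Honestly, there is no real obstacle here: the construction is essentially forced by discreteness, which collapses the finite-distance domain of the backward map to the diagonal and makes both lens properties degenerate into trivial identities of real numbers. The only point demanding any care is bookkeeping with the slacks and the tensor distance formula — making sure to invoke the right branch of \Cref{eq:d_tensor_def} (both component distances finite) and to note that $r + s$ is finite so the slack of $D_rX \otimes D_sX$ is genuinely $r+s$ and matches the slack of $D_{r+s}X$. One could alternatively phrase the whole verification more slickly by appealing to \Cref{eq:d_tensor}, which rewrites $d_{X\otimes Y} - r_{X\otimes Y}$ directly as $\max\{d_X - r_X, d_Y - r_Y\}$; with that in hand Property 1 for $c_{r,s,X}$ reduces in one line to $\max\{d_X(x,x_1), d_X(x,x_2)\}$ appearing on both sides. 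I would close by remarking (as the paper does for the comonad) that $c_{r,s,X}$ does not extend to non-discrete objects, since without discreteness the backward map cannot recover an off-diagonal pair $(x_1,x_2)$ from a single element, and that this contraction is exactly what makes discrete variables duplicable in \bea{}.
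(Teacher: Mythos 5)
Your overall structure---using discreteness to force $x=x_1=x_2$ on the finite-distance domain, then checking the two lens properties---matches the paper's proof, and your domain analysis and Property~2 are fine. But the Property~1 chain contains a false equality. You write
\[
-(r+s) \;=\; \max\{-(r+s),-(r+s)\} \;=\; \max\{d_X(x,x_1)+s,\,d_X(x,x_2)+r\}-(r+s),
\]
yet substituting $d_X(x,x_1)=d_X(x,x_2)=0$ turns the right-hand side into $\max\{s,r\}-(r+s)=-\min\{r,s\}$, which equals $-(r+s)$ only when one of $r,s$ is zero. Take $r=s=1$: the left side is $-2$, the right side is $-1$. So Property~1 does \emph{not} hold with equality; what is true, and what the lens condition actually needs, is the inequality $-(r+s)\le-\min\{r,s\}$, i.e.\ $\min\{r,s\}\le r+s$. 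Your closing remark that the two sides reduce to the same expression inherits this mistake and should be withdrawn.

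Separately, you tacitly take the slack of $X$ to be zero. The lemma quantifies over arbitrary discrete objects, and discrete objects may carry nonzero slack $r_X$ (for instance $D_qX$ is again discrete, with slack $r_X+q$). The paper's proof tracks $r_X$ explicitly: the left side of Property~1 becomes $-r_X-(r+s)$, and via \Cref{eq:d_tensor} the right side is $\max\{-(r_X+r),-(r_X+s)\}=-r_X-\min\{r,s\}$, so the same inequality $\min\{r,s\}\le r+s$ closes the argument. Since \bea{}'s interpretation only produces discrete objects with slack zero, your simplification is harmless for the intended application, but a proof of the lemma as stated should not assume it.
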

\begin{proof}
  For $x,x_1,x_2\in X$, suppose $d_{X\otimes X}((x,x),(x_1,x_2))<\infty$.
  By \Cref{eq:d_tensor_def}, we have $d_X(x,x_1)<\infty$ and $d_X(x,x_2)<\infty$;
  since $d_X$ is a discrete metric, we know $x=x_1=x_2$. Now, using the fact that
  self-distance is bounded by slack,
  \begin{enumerate}[align=left]
    \item[Property 1.] \begin{align*}
      d_X(x,b(x,(x_1,x_2)))-r_{D_{r+s}X} &=d_X(x,x)-r_X -(r+s) \\
      &\leq -(r+s) \\
      &\leq \max\{-r,-s\} \\
      &\leq\max\{d_X(x,x_1)-(r_X+r),d_X(x,x_2)-(r_X+s)\} \\
      &=d_{X\otimes X}((x,x),(x_1,x_2))-r_{D_rX\otimes D_sX}.
    \end{align*}

    \item[Property 2.] \begin{align*}
      f_t(b_t(x,(x_1,x_2)))= (x,x)=(x_1,x_2).
    \end{align*}\qedhere
  \end{enumerate}
\end{proof}

More conceptually, we can think of a discrete object as a space that can't have any
backward error pushed onto it. Thus, the backward error witnesses $x_1$ and
$x_2$ are always equal to the input, and can always be reconciled.

Using the contraction map, we can define the diagonal lens $t_X$ and another useful map $t_X^r$, 
so long as $X$ is a discrete object:
\begin{align}
  t_X:X\to X\otimes X &\triangleq D_0;\ c_{0,0,X};\ \varepsilon_X\otimes \varepsilon_X \label{eq:discr-diag} \\
  t^r_X:X\to D_rX\otimes X &\triangleq D_r;\ c_{r,0,X};\ id\otimes \varepsilon_X. \label{eq:disc-diag-2}
\end{align}

\section{Interpreting \bea{} Terms}\label{app:interp_bea}
In this section of the appendix, we detail the constructions for 
interpreting \bea{} terms. 
Applications of the associativity map, symmetry map
$\gamma_{X,Y} : X \otimes Y \rightarrow Y \otimes X$,
and 2-monoidality map $m_{r,A,B} : D_rA \otimes D_rB \iso D_r(A \otimes B)$ are omitted for clarity 
of the presentation. 

Given a type $\tau$, we define a metric space $\denot{\tau}$ with the rules
\begin{gather*}
	\denot{m(\sigma)} \triangleq M\denot{\sigma} \qquad
	\denot{\num}  \triangleq (\R,d_\R, 0) \qquad
	\denot{\sigma \otimes  \tau} \triangleq 
	\denot{\sigma} \otimes \denot{\tau} \\
	\denot{\sigma + \tau} \triangleq \denot{\sigma} + \denot{\tau} \qquad
  \denot{\textbf{unit}} \triangleq 1 = (\{\star\}, \underbar{$0$}, 0)
\end{gather*}
where the distance function $d_\alpha$ is the discrete metric on $\R$, and the
distance function $d_\R$ is defined following the error arithmetic proposed by
~\citet{Olver:78:error}, as given in \Cref{{eq:olver}}. By definition, if $d_\R$
is a standard distance function, then $\denot{\tau}$ is a standard metric space. 
Moreover, we can apply the discrete monad $M$ to any type interpretation $\denot\tau$ 
as it will satisfy positive definiteness and slack zero.

The interpretation of typing contexts is defined inductively as follows:
\begin{alignat*}{3}
  \denot{\emptyset \mid \emptyset}  &\triangleq I = (\{ \star \},
  \underbar{$0$}, \infty) & \qquad
  \denot{\emptyset \mid  \Gamma, x:_r\sigma}  &\triangleq 
  \denot{\Gamma} \otimes D_r\denot{\sigma} \\
  \denot{\Phi, z : \alpha \mid \emptyset}  &\triangleq \denot{\Phi} \otimes 
  \denot{\alpha} & \qquad
  \denot{\Phi, z : \alpha \mid  \Gamma, x:_r\sigma}  &\triangleq 
  \denot{\Phi} \otimes \denot{\alpha} \otimes \denot{\Gamma} \otimes
  D_r\denot{\sigma}
\end{alignat*}
where the graded comonad $D_r$ is used to interpret the linear variable
assignment $x:_r \sigma$.

Given the above interpretations of types and typing environments, we can
interpret each well-typed \bea{} term
$\Phi \mid \Gamma \vdash e : \tau$ as an
error lens $\denot{e} : \denot{\Phi} \otimes \denot{\Gamma} \rightarrow
\denot{\tau}$ in \Bel, by structural induction on the typing derivation.

\begin{description}
\item[Case (Var).] Suppose that $\Gamma = x_{0} :_{q_0} \sigma_0, \dots, x_{i-1}
  :_{q_{i-1}} \sigma_{i-1}$ and $\Phi$ has length $j$. Define the map   
$\denot{\Phi\mid \Gamma, x:_r \sigma \vdash x:\sigma}$ 
as the composition
\[
  \pi_{i+j}\circ 
  (id_{\denot{\Phi}} \otimes \varepsilon_{\denot{\Gamma}} \otimes 
    \varepsilon_{\denot{\sigma}}) \circ
  (id_{\denot{\Phi}} \otimes
    m_{0\leq q_0,\denot{\sigma_0}}\otimes\dots\otimes
    m_{0\leq q_{i-1},\denot{\sigma_{i-1}}} \otimes
    m_{0\leq r,\denot{\sigma}})
\]
where the lens $\pi_{i+j}$ is the $(i+j)$th projection, 
and we can apply it since all types $\sigma$ 
are interpreted as metric spaces, i.e., satisfying self-distance zero
and with zero slack.

\item[Case (DVar).] Define the map   
$\denot{\Phi,z : \alpha \mid\Gamma \vdash z:\alpha}$ as 
\[
  \pi_i\circ (id_{\denot\Phi\otimes \denot\alpha}\otimes \varepsilon_{\denot\Gamma})
  \circ (id_{\denot\Phi\otimes \denot\alpha} 
    \otimes \overline{m_{0\leq q_j,\denot{\sigma_j}}})
\]
where the map $\overline{m_{0 \leq q_j, \denot{\sigma_j}}}$ applies the map $m_{0
\leq q, X} : D_q X \to D_0 X$ to each binding $\denot{x :_q \sigma}$ in the
context $\denot{\Gamma}$.

\item[Case (Unit).] Define the map 
$
\denot{\Phi\mid \Gamma \vdash ( ) : \unit }
$
as the lens given in \Cref{app:terminal} from a tuple 
$\bar{x} \in \denot{\Phi\mid \Gamma}$ to the weak terminal object
 $1 = (\{\star\},\underline{0}, 0)$.

\item[Case ($\otimes $ I).] 
Given the maps  
\begin{align*}
  h_1 &= \denot{\Phi\mid \Gamma \vdash e : \sigma}: \denot{\Phi}\otimes \denot{\Gamma}
  \rightarrow \denot{\sigma}\\ 
  h_2 &=\denot{\Phi\mid \Delta \vdash f : \tau} : \denot{\Phi}\otimes \denot{\Delta}
  \rightarrow \denot{\tau}
\end{align*}
define the map 
$\denot{\Phi\mid \Gamma,\Delta \vdash (e,f) : \sigma \otimes \tau}$
as the composition
\[ 
  (h_1 \otimes h_2) \circ (t_{\denot{\Phi}} \otimes id_{\denot{\Gamma,\Delta}}),
\]
where the map $t_{\denot{\Phi}}: \denot{\Phi} 
  \rightarrow \denot{\Phi} \otimes \denot{\Phi}$ is 
  the diagonal lens on discrete metric spaces (\Cref{eq:discr-diag}).
\item[Case ($\otimes $ E$_\sigma$).] Given the maps
\begin{align}
h_1 &= \denot{\Phi\mid \Gamma \vdash e : \tau_1 \otimes \tau_2} : 
  \denot{\Phi} \otimes \denot{\Gamma} \rightarrow {\denot{\tau_1}} \otimes\denot{\tau_2} \\
h_2 &= \denot{\Phi\mid \Delta, x :_r \tau_1, y :_r \tau_2 \vdash 
  f : \sigma} : \denot{\Phi} \otimes \denot{\Delta} \otimes 
  D_r\denot{\tau_1} \otimes D_r\denot{\tau_2} \rightarrow {\denot{\sigma}}
\end{align}
we define the map
$\denot{\Phi\mid r + \Gamma, \Delta \vdash \slet {(x,y)} e f : \sigma}$ as the composition
\[
  h_2 \circ (
  (m^{-1}_{r,\denot{\tau_1},\denot{\tau_2}} \circ 
  D_r(h_1)) \otimes id_{\denot{\Phi}\otimes\denot{\Delta}})
 \circ (t^r_{\denot{\Phi}} \otimes id_{D_r\denot{\Gamma} \otimes  \denot{\Delta}})
\]
where $t^r_{\denot{\Phi}}:\denot{\Phi}\rightarrow D_r\denot{\Phi} \otimes \denot{\Phi}$
is a variation of the diagonal lens on discrete metric spaces (\Cref{eq:disc-diag-2}).

\item[Case ($\otimes $ E$_\alpha$).] Given the maps
\begin{align}
h_1 &= \denot{\Phi\mid \Gamma \vdash e : \alpha_1 \otimes \alpha_2} : 
  \denot{\Phi} \otimes \denot{\Gamma} \rightarrow 
  {\denot{\alpha_1}}\otimes\denot{\alpha_2} \\
h_2 &= \denot{\Phi, x : \alpha_1, y : \alpha_2; \Delta \vdash 
  f : \sigma} : \denot{\Phi} \otimes \denot{\alpha_1} \otimes \denot{\alpha_2} 
    \otimes \denot{\Delta} \rightarrow {\denot{\sigma}}
\end{align}
define the map 
$\denot{\Phi\mid \Gamma, \Delta \vdash \dlet {(x,y)} e f : \sigma}$
as the composition
\[
h_2 \circ (h_1 \otimes id_{\denot{\Phi} \otimes  \denot{\Delta}})
 \circ (t_{\denot{\Phi}} \otimes id_{\denot{\Gamma} \otimes  \denot{\Delta}}).
\]
\item[Case ($+$ E).] Given the maps
\begin{align*}
h_1 &= \denot{\Phi\mid  \Gamma \vdash e' : \sigma + \tau} : 
  \denot{\Phi} \otimes \denot{\Gamma} \rightarrow 
  \denot{\sigma + \tau}\\
h_2 &= \denot{\Phi\mid \Delta,x :_q \sigma \vdash e : \rho} : 
  \denot{\Phi} \otimes \denot{\Delta} \otimes D_q\denot{\sigma} \rightarrow 
  \denot{\rho}\\
h_3 &= \denot{\Phi\mid \Delta, y :_q \tau \vdash f : \rho} : 
  \denot{\Phi} \otimes \denot{\Delta} \otimes D_q\denot{\tau}\rightarrow 
  \denot{\rho} 
\end{align*}
we require a lens
$\denot{\Phi\mid q+\Gamma,\Delta \vdash \case {e'} {\inl x.e} {\inr y.f}: \rho}$. 
We define it as the composition
\[
  [h_2,h_3] \circ \Theta \circ
  (id_{\denot{\Phi}\otimes\denot{\Delta}}\otimes (\varphi \circ D_q(h_1))) \circ 
  (t^q_{\denot{\Phi}} \otimes id_{D_q\denot{\Gamma} \otimes  \denot{\Delta}}).
\]
Above, the map 
$\varphi : D_q \denot{\sigma + \tau} \rightarrow D_q\denot{\sigma} + D_q\denot{\tau}$
is the identity lens, and the map 
$
\Theta_{X,Y,Z} : X \otimes (Y + Z) \rightarrow (X \otimes Y) + (X \otimes Z)
$ 
is given by the triple 
\begin{align*}
    f_{\Theta}(x,w) &\triangleq 
        \begin{cases}
          inl \ (x,y) & \text{if } w = inl \ y \\ 
          inr \ (x,z) & \text{if } w = inr \ z
        \end{cases}\\
    \tilde{f}_{\Theta}(x,w) &\triangleq  f_{\Theta}(x,w) \\
    b_{\Theta}((x,w),u) &\triangleq  
        \begin{cases}
          (a, inl \ b) & 
                \text{if } u = inl \ (a,b) \\ 
          (a, inr \ c) & 
                \text{if } u = inr \ (a,c).
        \end{cases} 
\end{align*}
We check that the triple 
\[
\Theta_{X,Y,Z} : X \otimes  (Y + Z) \rightarrow (X \otimes  Y) + (X \otimes  Z) 
     \triangleq (f_\Theta,\tilde{f}_{\Theta},b_\Theta)
\] is well-defined when $r_Y,r_Z\neq\infty$.  
\begin{enumerate}[align=left]
\item[Property 1.] For any $x \in X$, $w \in Y+Z$, and
     $u \in (X \otimes  Y) + (X \otimes  Z)$ we are required to show    
\begin{align}
d_{X \otimes  (Y+Z)}\left((x,w),b_{\Theta}((x,w),u)\right)-r_{X\otimes (Y+Z)}\notag\\
\le d_{(X\otimes Y) + (X \otimes Z)}(\tilde{f}_{\Theta}(x,w),u)-r_{(X\otimes Y)+(X\otimes Z)}
\label{eq:prod_req1}
\end{align}
supposing 
\begin{equation}
d_{(X\otimes Y) + (X \otimes Z)}(\tilde{f}_{\Theta}(x,w),u) <\infty.
\label{eq:prop1_theta}
\end{equation}
From \Cref{eq:prop1_theta}, and by unfolding definitions, we have
\begin{enumerate}
\item 
if $w = inl \ y$ for some $y \in Y$, then $u = inl \ (x_1,y_1)$ 
     for some $(x_1,y_1) \in X \otimes  Y$
\item 
if $w = inr \ z$ for some $z \in Z$, then $u = inr \ (x_1,z_1)$ 
     for some $(x_1,z_1) \in X \otimes  Z$.
\end{enumerate}
In both cases, the \Cref{eq:prod_req1} is an equality.
\item[Property 2.] 
For any $x \in X$, $w \in Y+Z$, and
     $u \in (X \otimes  Y) + (X \otimes  Z)$ we are required to show 
\begin{equation}
{f_{\Theta}(b_{\Theta}((x,w),u)) = u} \label{eq:prop2_theta}
\end{equation} 
supposing \Cref{eq:prop1_theta} holds.

We consider the cases when $u = inl \ (x_1,y_1)$ for some 
$(x_1,y_1) \in X \otimes  Y$ and when 
$u = inr \  (x_1,z_1)$ for some $(x_1,z_1) \in X \otimes  Z$ 
as we did for Property 1.

In the first case, we have 
\begin{align*}
f_{\Theta}(b_{\Theta}((x,w),u)) &= f_{\Theta}(x_1, inl \ y_1) \\
&= inl \ (x_1,y_1).
\end{align*}
In the second case we have 
\begin{align*}
f_{\Theta}(b_{\Theta}((x,w),u))&= f_{\Theta}(x_1, inr \ z_1) \\
&= inr \ (x_1,z_1).
\end{align*}
\end{enumerate}
\item[Case ($+ \ \text{I}_{L,R}$).] Given the maps
\begin{align*}
h_l &= \denot{\Phi\mid \Gamma \vdash e : \sigma} : 
  \denot{\Phi} \otimes \denot{\Gamma} \rightarrow \denot{\sigma} \\
h_r &= \denot{\Phi\mid \Gamma \vdash e : \sigma} : 
  \denot{\Phi} \otimes \denot{\Gamma} \rightarrow \denot{\tau}
\end{align*}
define the maps 
\begin{align*}
\denot{\Phi\mid \Gamma \vdash \inl e  : \sigma + \tau} &\triangleq in_1\circ h_l \\
\denot{\Phi\mid \Gamma \vdash \inr e  : \sigma + \tau} &\triangleq in_2\circ h_r .
\end{align*}
\item[Case (Let).] Given the maps
\begin{align*}
h_1 &= \denot{\Phi\mid \Gamma \vdash e : \tau} : 
  \denot{\Phi} \otimes \denot{\Gamma} \rightarrow {\denot{\tau}}\\
h_2 &= \denot{\Phi\mid \Delta,x :_r \tau \vdash f : \sigma} : 
  \denot{\Phi} \otimes \denot{\Delta} \otimes D_r\denot{\tau} \rightarrow 
  {\denot{\sigma}}
\end{align*}
we define the map 
$\denot{\Phi\mid r + \Gamma,\Delta \vdash \slet x e f  : \sigma } $
as the following composition:
\[  h_2 \circ  
  (D_r(h_1) \otimes id_{\denot{\Phi}\otimes\denot{\Delta}}) \circ 
  (m_{r,\denot{\Phi},\denot{\Gamma}} \otimes id_{\denot{\Phi}\otimes\denot{\Delta}}) \circ
  (t^r_{\denot{\Phi}} \otimes id_{D_r\denot{\Gamma} \otimes \denot{\Delta}}) .
\] 
\item[Case (Disc).] Given the lens $h = \denot{\Phi \mid \Gamma \vdash e :
  \sigma}$ from the premise, we can define the map $\denot{\Phi \mid \Gamma \vdash
  e : m(\sigma)}$ as the composition $\eta\circ h$, where $\eta$ is the unit of
  the discrete monad $M$.
\item[Case (DLet).] 
Given the maps
\begin{align*}
h_1 &= \denot{\Phi\mid \Gamma \vdash e : \alpha} : 
  \denot{\Phi} \otimes \denot{\Gamma} \rightarrow {\denot\alpha}\\
h_2 &= \denot{\Phi,z :\alpha \mid \Delta \vdash f : \sigma} : 
  \denot{\Phi} \otimes \denot{\alpha} \otimes \denot{\Delta} \rightarrow 
  {\denot{\sigma}}
\end{align*}
define the map 
$\denot{\Phi \mid \Gamma,\Delta \vdash \dlet z e f  : \sigma } $ as the composition
\[
h_2 \circ (h_1 \otimes id_{\denot{\Phi} \otimes \denot{\Delta}} )\circ 
(t_{\denot{\Phi}} \otimes id_{\denot{\Gamma} \otimes \denot{\Delta}}).
\]
\item[Case (Add).] Suppose the contexts $\Phi$ and $\Gamma$ have total length
  $i$. We define the map
  \[
    \denot{\Phi\mid \Gamma, x :_{\varepsilon + q} \num, y :_{\varepsilon + r} \num \vdash
    \add x y : \num }
  \]
as the composition
\[ 
  \pi_{i + 1}
  \circ
  (id_{\denot{\Phi}}
  \otimes (\varepsilon_{\denot{\Gamma}}
  \circ \overline{m_{0 \leq q_j, \denot{\sigma_j}}})
  \otimes id_{\denot{\num}})
  \circ
  (id_{\denot{\Phi} \otimes \denot{\Gamma}} \otimes (
  \mathcal{L}_{add}\circ
  (m_{\varepsilon \le\varepsilon + q, \denot{\num}}
  \otimes m_{\varepsilon \le \varepsilon + r, \denot{\num}}))) ,
\]
where the map $\overline{m_{0 \leq q_j, \denot{\sigma_j}}}$ applies the map $m_{0
\leq q, X} : D_q X \to D_0 X$ to each binding $\denot{x :_q \sigma}$ in the
context $\denot{\Gamma}$.

The lens $\mathcal{L}_{add} : D_\varepsilon(\R) \otimes D_\varepsilon(\R) \rightarrow \R$ 
is given by the triple
\begin{align}
  f_{add}(x_1,x_2) &\triangleq x_1 + x_2 \label{eq:f_add} \\
  \tilde{f}_{add}(x_1,x_2)&\triangleq  (x_1 + x_2)e^\delta; \quad |\delta|
  \le \varepsilon \label{eq:f_app_add}\\
  b_{add}((x_1,x_2),x_3) &\triangleq  
  \begin{cases}
    (x_1,x_2) & \text{if $x_1+x_2=x_3=0$} \\
    \left(\frac{x_3x_1}{x_1+x_2},
    \frac{x_3x_2}{x_1+x_2}\right)\label{eq:b_add} 
    & \text{otherwise}
  \end{cases}
\end{align}
where $\varepsilon = u/(1-u)$ and $u$ is the unit roundoff.
We now show $\mathcal{L}_{add} : D_\varepsilon(\R) \otimes D_\varepsilon(\R) \rightarrow
\R$ is well-defined. It is clear that $\mathcal{L}_{add}$ satisfies Property 2 of a
backward error lens, and so we are left with checking Property 1:
Assuming, for any $x_1,x_2,x_3 \in \R$,
\begin{equation}
  d_{\R}(\tilde{f}_{add}(x_1,x_2),x_3)<\infty,
  \label{eq:add_asum}
\end{equation}
we are required to show
\begin{align*}
	d_{\R \otimes \R}((x_1,x_2),b_{add}((x_1,x_2),x_3)) - r_{\R\otimes\R}
	&\le d_{\R}(\tilde{f}_{add}(x_1,x_2),x_3) \\
  &= d_{\R}((x_1+x_2)e^\delta,x_3). & \text{\Cref{eq:f_app_add}}
\end{align*}

\Cref{eq:add_asum} implies $d_\R((x_1+x_2)e^\delta, x_3)<\infty$;
by \Cref{eq:olver}, we have that $(x_1+x_2)$ and $x_3$ are either both zero, or
both non-zero and of the same sign. 

First, suppose $x_1+x_2=x_3=0$. Thus, as self-distance is bounded by slack,
\begin{align*}
  d_{\R \otimes \R}((x_1,x_2),b_{add}((x_1,x_2),x_3)) - r_{\R\otimes\R} 
  &= d_{\R \otimes \R}((x_1,x_2),(x_1,x_2)) - r_{\R\otimes\R} & \text{\Cref{eq:b_add}}\\
  &\leq 0 = d_\R((x_1+x_2)e^\delta,x_3), & \text{\Cref{eq:olver}}
\end{align*}
as desired.
Second, suppose $(x_1+x_2)$ and $x_3$ are both non-zero and of the same sign.
Thus, by \Cref{eq:b_add} and \Cref{eq:d_tensor},
\begin{equation}
  d_{\R \otimes \R}((x_1,x_2),b_{add}((x_1,x_2),x_3)) - r_{\R\otimes\R}
  =\max\left\{d_\R \left(x_1,\frac{x_3x_1}{x_1+x_2}\right)
    -\varepsilon,d_\R\left(x_2,\frac{x_3x_2}{x_1+x_2}\right)-\varepsilon\right\}.
    \label{eq:add_max}
\end{equation}
If $x_1=0$ and $x_2\neq 0$, then $x_1=x_3x_1/(x_1+x_2)=0$, and $x_2$ and $x_3x_2/(x_1+x_2)$ 
have the same sign, so both components of the max are finite. 
The same holds if $x_1\neq 0$ and $x_2=0$, or both $x_1,x_2\neq 0$.
Now, suppose that 
\begin{equation}
  d_\R\left(x_2,\frac{x_3x_2}{x_1+x_2}\right)  \le 
  d_\R\left(x_1,\frac{x_3x_1}{x_1+x_2}\right);
\end{equation}
this implies that $x_1\neq 0$, and the other case is identical. 
Under this assumption, by \Cref{eq:add_max}, we have
\begin{equation}
  d_{\R \otimes  \R}\left((x_1,x_2),b_{add}((x_1,x_2),x_3)\right)-r_{\R\otimes\R} =
  d_\R\left(x_1,\frac{x_3x_1}{x_1+x_2}\right)-\varepsilon
\end{equation}
and we are now required to show
\begin{align}
  d_\R\left(x_1,\frac{x_3x_1}{x_1+x_2}\right) &\le
  d_{\R}((x_1+x_2)e^\delta,x_3) + \varepsilon.
\end{align}

Using the distance function given in \Cref{eq:olver}, this inequality becomes
\begin{align}
  \left|{ln\left(\frac{x_1+x_2}{x_3}\right)}\right| \le
  \left|{ln\left(\frac{x_1+x_2}{x_3}\right)} + \delta\right| + \varepsilon,
  \label{eq:p1_add_end}
\end{align}
which holds under the assumptions of $|\delta| \le \varepsilon$ and $0 <
\varepsilon$.
Set $\alpha = ln\left((x_1+x_2)/{x_3}\right)$,
and we detail the case that $\alpha < 0$. If
$\alpha + \delta < 0$ then $|\alpha| = -\alpha$ and
$|\alpha + \delta| = -(\alpha + \delta)$; the inequality in \Cref{eq:p1_add_end}
reduces to $\delta \le \varepsilon$, which follows by assumption.
Otherwise, if $0 \le \alpha + \delta$, then $-\alpha \le \delta \le \varepsilon$
and it suffices to show that $\varepsilon \le \alpha + \delta + \varepsilon$.


\item[Case (Sub).] We proceed the same as the case for (Add). We define a lens
  $\mathcal{L}_{sub} : D_\varepsilon(\R) \otimes D_\varepsilon(\R) \rightarrow \R$
  given by the triple
\begin{align*}
    f_{sub}(x_1,x_2) &\triangleq x_1 - x_2 \\
    \tilde{f}_{sub}(x_1,x_2)&\triangleq  (x_1 - x_2)e^\delta; \quad |\delta| 
        \le \varepsilon \\
    b_{sub}((x_1,x_2),x_3) &\triangleq  
    \begin{cases}
      (x_1,x_2) & \text{if $x_1-x_2=x_3=0$} \\
      \left(\frac{x_3x_1}{x_1-x_2}, \frac{x_3x_2}{x_1-x_2}\right) &\text{otherwise.}
    \end{cases}
\end{align*}

We check that $\mathcal{L}_{sub} : D_\varepsilon(\R) \otimes D_\varepsilon(\R)
\rightarrow \R$ is well-defined. 
For any $x_1,x_2,x_3 \in \R$ such that 
\begin{equation}
d_{\R}(\tilde{f}_{sub}(x_1,x_2),x_3)<\infty,
\label{eq:sub_asum}
\end{equation}
we need to check that $\mathcal{L}_{sub}$ satisfies the properties of an 
error lens. We take the distance function $d_\R$ as the metric given in 
\Cref{eq:olver}, so \Cref{eq:sub_asum} implies that $(x_1-x_2)$ and $x_3$ 
are either both zero or are both non-zero and of the same sign. 
The case where they are both zero is straightforward; we detail the other case.

\begin{enumerate}[align=left]
\item[Property 1.] We are required to show that 
\begin{align*}
d_{\R \otimes  \R}\left((x_1,x_2),b_{sub}((x_1,x_2),x_3)\right) - r_{\R\otimes\R} &\le 
     d_{\R}(\tilde{f}_{sub}(x_1,x_2),x_3) \\
    &\le d_{\R}((x_1-x_2)e^\delta,x_3).
\end{align*}
As in the case of the Add rule, we can easily show that the left-hand side is finite.
We detail now the case when 
\[
d_{\R \otimes  \R}\left((x_1,x_2),b_{sub}((x_1,x_2),x_3)\right) -r_{\R\otimes \R}=
    d_\R\left(x_1,\frac{x_3x_1}{x_1-x_2}\right)-\varepsilon; 
\]
that is, 
\[
d_\R\left(x_2,\frac{x_3x_2}{x_1-x_2}\right)  \le 
  d_\R\left(x_1,\frac{x_3x_1}{x_1-x_2}\right).
\]
Unfolding the definition of the distance function given in \Cref{eq:olver}, 
we are required to show 
\begin{equation}
\left|{ln\left(\frac{x_1-x_2}{x_3}\right)} \right| \le 
\left|{ln\left(\frac{x_1-x_2}{x_3}\right)} + \delta\right| + \varepsilon
\end{equation}
which holds under the assumptions of $|\delta| \le \varepsilon$ and 
$0 < \varepsilon$; the proof is identical to that given for the case of the 
Add rule. 
\item[Property 2.] 
\begin{align*}
f_{sub}\left(b_{sub}((x_1,x_2),x_3)\right) = 
     f_{sub}\left(\frac{x_3x_1}{x_1-x_2},\frac{x_3x_2}{x_1-x_2}\right) 
    = x_3.
\end{align*} 
\end{enumerate}
\item[Case (Mul).]
  We proceed the same as the case for Add, with slightly different indices.
  We define a
  lens $\mathcal{L}_{mul} : D_{\varepsilon/2} (\R) \otimes D_{\varepsilon/2}(\R) \rightarrow \R$ 
  by the triple
\begin{align*}
    f_{mul}(x_1,x_2) &\triangleq x_1  x_2 \\
    \tilde{f}_{mul}(x_1,x_2)&\triangleq  x_1 x_2e^{\delta}; \quad |\delta| 
  \le \varepsilon \\
    b_{mul}((x_1,x_2),x_3) &\triangleq  
    \begin{cases}
      (x_1,x_2) & \text{if $x_1x_2=x_3=0$} \\
      \left(x_1\sqrt{\frac{x_3}{x_1x_2}},
      x_2\sqrt{\frac{x_3}{x_1x_2}}\right) & \text{otherwise.}
    \end{cases}
\end{align*}

We check that $\mathcal{L}_{mul} : D_{\varepsilon/2}(\R) \otimes
D_{\varepsilon/2}(\R) \rightarrow  \R$ is well-defined.
For any $x_1,x_2,x_3 \in \R$ such that 
\begin{equation} 
d_{\R}(\tilde{f}_{mul}(x_1,x_2),x_3)<\infty,
\label{eq:mul_asum}
\end{equation}
we need to check the that $\mathcal{L}_{mul}$ satisfies the properties 
of an error lens. We again take the distance function $d_\R$ as the metric 
given in \Cref{eq:olver}, so \Cref{eq:mul_asum} implies that $(x_1x_2)$ 
and $x_3$ are either both zero or are both non-zero and of the same sign; 
this guarantees that the backward map (containing square roots) is indeed 
well-defined. We detail the case they are both non-zero and of the same sign.
\begin{enumerate}[align=left]
\item[Property 1.] We are required to show
\begin{align*} \label{eq:p1_mul}
d_{\R \otimes  \R}\left((x_1,x_2),b_{mul}((x_1,x_2),x_3)\right) - r_{\R\otimes \R}&\le 
     d_{\R}(\tilde{f}_{mul}(x_1,x_2),x_3) \\
    &= d_{\R}(x_1 x_2e^{\delta},x_3).
\end{align*}
Unfolding the definition of the distance function (\Cref{eq:olver}), 
and noting that $x_1,x_2\neq 0$, we have
\begin{align*}
d_{\R \otimes \R}\left((x_1,x_2),b_{mul}((x_1,x_2),x_3)\right)-r_{\R\otimes\R}
&= d_\R\left(x_1,x_1\sqrt{\frac{x_3}{x_1x_2}}\right)-\frac{\varepsilon}{2} \\
&= d_\R\left(x_2,x_2\sqrt{\frac{x_3}{x_1x_2}}\right)-\frac{\varepsilon}{2} \\
&= \frac{1}{2}\left|ln \left( \frac{x_1x_2}{x_3}\right)\right|-\frac{\varepsilon}{2},
\end{align*}
and so we are required to show 
\begin{equation}  \label{eq:p1_mul_end}
\frac{1}{2}\left|ln \left( \frac{x_1x_2}{x_3}\right)\right| \le
\left|ln \left( \frac{x_1x_2}{x_3}\right) + \delta \right| + 
\frac{\varepsilon}{2}
\end{equation}
which holds under the assumptions of $|\delta| \le \varepsilon$ and $0 < \varepsilon$. 
Setting $\alpha = ln(x_1x_2/x_3)$, we detail the case
that $\alpha < 0$. If $\alpha + \delta < 0 $ then $\alpha < - \delta$ and 
it suffices to show that $-\frac{1}{2} \delta \le - \delta + \frac{1}{2} \varepsilon$, 
which follows by assumption. Otherwise, if $0 \le \alpha + \delta $ then 
$-\alpha \le \delta$ and it suffices to show that 
$\frac{1}{2} \delta \le \alpha + \delta + \frac{1}{2} \varepsilon$,
which follows by assumption.

\item[Property 2.] 
\begin{align*}
f_{mul}\left(b_{mul}((x_1,x_2),x_3)\right) = 
     f_{mul}\left(x_1\sqrt{\frac{x_3}{x_1x_2}},x_2\sqrt{\frac{x_3}{x_1x_2}}\right)
    = x_3.
\end{align*} 
\end{enumerate}
\item[Case (Div).]
  We proceed the same as the case for (Add), with slightly different indices.
  We define a lens $\mathcal{L}_{div} : D_{\varepsilon/2} (\R) \otimes
  D_{\varepsilon/2}(\R) \rightarrow (\R + 1)$ (where $1=(\{\star\},\underline{0},0)$) by the triple
\begin{align*}
    f_{div}(x_1,x_2) &\triangleq
      \begin{cases}
        inl\ {x_1}/{x_2} & \text{ if } x_2 \neq 0 \\
        inr\ \star    & \text{ otherwise }
      \end{cases}  \\
    \tilde{f}_{div}(x_1,x_2)&\triangleq        
        \begin{cases}
        inl\ {x_1}e^\delta/{x_2} & \text{ if } x_2 \neq 0 ; \quad |\delta| \le \varepsilon \\
        inr\ \star    & \text{ otherwise }
      \end{cases}  \\
    b_{div}((x_1,x_2),x) &\triangleq 
        \begin{cases}
         \left(\sqrt{x_1x_2x_3},\sqrt{{x_1x_2}/x_3}\right) & \text{ if } x = inl \ x_3 \text{ and }x_3\neq 0\\
         (x_1,x_2)    & \text{ otherwise. }
      \end{cases}
\end{align*}
We check that $\mathcal{L}_{div} : D_{\varepsilon/2} (\R) \otimes
D_{\varepsilon/2}(\R) \rightarrow (\R + 1)$ is well-defined.
For any $x_1,x_2 \in \R$ and $x \in \R + 1$ such that  
\begin{equation}\label{eq:div_asum}
d_{\R+ 1}(\tilde{f}_{div}(x_1,x_2),x) <\infty,
\end{equation}
we are required to show that $\mathcal{L}_{div}$ satisfies the 
properties of an error lens. From \Cref{eq:div_asum} and again assuming the 
distance function is given by \Cref{eq:olver}, 
we have three cases:
\begin{enumerate}
  \item $x_2=0$ and $x=inr\ \star$,
  \item $x_2\neq 0$, $x=inl\ x_3$ for some $x_3\in\R$, and $x_1/x_2=x_3=0$,
  \item $x_2\neq 0$, $x=inl\ x_3$ for some $x_3\in\R$, and $x_1/x_2$ and $x_3$ 
    are both non-zero and of the same sign.
\end{enumerate}
It is straightforward to prove the two properties in the first two cases.
We detail here the third case.
\begin{enumerate}[align=left]
\item[Property 1.] We need to show 
\begin{align}\label{eq:p1_div}
d_{\R \otimes \R}\left((x_1,x_2),b_{div}((x_1,x_2),x)\right) -r_{\R\otimes\R} &\le 
     d_{\R + 1}(\tilde{f}_{div}(x_1,x_2),x) \nonumber-\varepsilon/2 \\
    &\le d_{\R}\left(\frac{x_1}{x_2}e^\delta,x_3\right)-\varepsilon/2.
\end{align}
Unfolding the definition of the distance function (\Cref{eq:olver}), we have
\begin{align*}
d_{\R \otimes \R}\left((x_1,x_2),b_{div}((x_1,x_2),x_3)\right)-r_{\R\otimes\R}  
&= d_\R(x_1,\sqrt{{x_1}{x_2x_3}})-\varepsilon/2 \\
&= d_\R(x_2,\sqrt{{x_1x_2}/{x_3}})-\varepsilon/2 \\
&= \frac{1}{2}\left|ln \left( \frac{x_1}{x_2x_3}\right)\right|-\frac{\varepsilon}{2},
\end{align*}
and so we are required to show 
\begin{align*}
\frac{1}{2}\left|ln \left( \frac{x_1}{x_2x_3}\right)\right|\le
\left|ln \left( \frac{x_1}{x_2x_3}\right) + \delta \right| + 
  \frac{\varepsilon}{2},
\end{align*}
which holds under the assumptions of $|\delta| \le \varepsilon$ and $0 < \varepsilon$; 
the proof is identical to that given for the case of the Mul rule. 
\item[Property 2.] 
\begin{align*}
f_{div}(b_{div}((x_1,x_2),x_3)) = 
     f_{div}(\sqrt{x_1x_2x_3},\sqrt{{x_1x_2}/x_3}) 
    = x_3.
\end{align*} 
\end{enumerate}
\item[Case (DMul).]
  We proceed similarly as for (Add). We define a
  lens $\mathcal{L}_{dmul} : M(\R) \otimes  D_{\varepsilon}(\R) \rightarrow \R$
  by the triple
\begin{align*}
    f_{dmul}(x_1,x_2) &\triangleq x_1  x_2 \\
    \tilde{f}_{dmul}(x_1,x_2)&\triangleq  x_1 x_2e^{\delta}; \quad |\delta| 
  \le \varepsilon \\
    b_{dmul}((x_1,x_2),x_3) &\triangleq 
    \begin{cases}
      (x_1,x_2) & \text{if $x_1x_2=x_3=0$} \\
      (x_1,x_3/x_1) & \text{otherwise.}
    \end{cases}
\end{align*}
We check that 
$\mathcal{L}_{dmul} : M(\R) \otimes  D_{\varepsilon}(\R) \rightarrow \R$ is well-defined.
For any $x_1,x_2,x_3 \in \R$ such that 
\begin{equation} 
d_{\R}(\tilde{f}_{dmul}(x_1,x_2),x_3) <\infty,
\label{eq:mul_asum}
\end{equation}
we need to check that $\mathcal{L}_{dmul}$ satisfies the properties 
of an error lens. We again take the distance function $d_\R$ as the metric 
given in \Cref{eq:olver}, so \Cref{eq:mul_asum} implies that $(x_1x_2)$ 
and $x_3$ are either both zero or are both non-zero and of the same sign.
The case where they are both zero is straightforward; we detail the other case.
\begin{enumerate}[align=left]
\item[Property 1.] We are required to show
\begin{align*} \label{eq:p1_mul}
d_{M(\R) \otimes  \R}\left((x_1,x_2),b_{dmul}((x_1,x_2),x_3)\right) -r_{M(\R)\otimes \R} 
  &\le d_\R(\tilde{f}_{dmul}(x_1,x_2),x_3) \\
    &\le d_{\R}(x_1 x_2e^{\delta},x_3).
\end{align*}
Unfolding the definition of the distance function (\Cref{eq:olver}), we have
\begin{align*}
d_{M(\R) \otimes  \R}
  \left((x_1,x_2),b_{dmul}((x_1,x_2),x_3)\right) -r_{M(\R)\otimes \R} 
&= \max\{d_\alpha(x_1,x_1),d_\R(x_2,x_3/x_1) -\varepsilon \} \\
&= \max\left\{0,\left|ln \left( \frac{x_1x_2}{x_3}\right)\right| -\varepsilon\right\},
\end{align*}
and so we are required to show 
\begin{equation}  \label{eq:p1_mul_end}
\left|ln \left( \frac{x_1x_2}{x_3}\right)\right| \le
\left|ln \left( \frac{x_1x_2}{x_3}\right) + \delta \right| + 
  \varepsilon
\end{equation}
which holds under the assumptions of $|\delta| \le \varepsilon$ and $0 < \varepsilon$;
the proof is identical to that given in the Add rule. 

\item[Property 2.] 
\begin{align*}
f_{dmul}\left(b_{dmul}((x_1,x_2),x_3)\right) = 
     f_{dmul}\left(x_1,x_3/x_1\right)
    = x_3.
\end{align*} 
\end{enumerate}

\end{description}

\section{\LangS{}: A Language for Projecting \bea{} into \Set \ }
\label{app:LangS}
\begin{figure}
\begin{center}
\AXC{}
\RightLabel{(Var)}
\UIC{$\Gamma, x: \sigma, \Delta \vdash x : \sigma$}
\bottomAlignProof
\DisplayProof
\hskip 0.5em
\AXC{}
\RightLabel{(Unit)}
\UIC{$\Gamma \vdash (): \mathbf{unit}$}
\bottomAlignProof
\DisplayProof
\hskip 0.5em
\AXC{$k \in R$}
\RightLabel{(Const)}
\UIC{$\Gamma \vdash k : \textbf{num}$}
\bottomAlignProof
\DisplayProof
\hskip 0.5em
\vskip 1em
\AXC{$\Phi,\Gamma \vdash e : \sigma$}
\AXC{$\Phi,\Delta \vdash f : \tau$}
\RightLabel{($\otimes $ I)}
\BinaryInfC{$\Phi,\Gamma, \Delta \vdash ( e, f ): \sigma \otimes   \tau $}
\bottomAlignProof
\DisplayProof
\vskip 1em
\AXC{$\Phi,\Gamma \vdash e : \tau_1 \otimes  \tau_2$}
\AXC{$\Phi,\Delta, x : \tau_1, y : \tau_2 \vdash f : \sigma $}
\RightLabel{($\otimes $ E)}
\BIC{$\Phi,\Gamma, \Delta  \vdash \slet {(x,y)} e f : \sigma$}
\bottomAlignProof
\DisplayProof
\vskip 1em
\AXC{$\Phi,\Gamma \vdash e' : \sigma+\tau$}
\AXC{$\Phi,\Delta, x: \sigma \vdash e : \rho$}
\AXC{$\Phi,\Delta, y: \tau \vdash f: \rho$}
\RightLabel{($+$ E)}
\TIC{$\Phi,\Gamma,\Delta \vdash \mathbf{case} \ e' 
    \ \mathbf{of} \ (\inl x.e \ | \ \inr y.f) : \rho$}
\bottomAlignProof
\DisplayProof
\vskip 1em
\AXC{$\Phi,\Gamma \vdash e : \sigma$ }
\RightLabel{($+$ $\text{I}_L$)}
\UIC{$\Phi,\Gamma \vdash \inl \ e : \sigma + \tau$}
\bottomAlignProof
\DisplayProof
\hskip 0.5em
\AXC{$\Phi,\Gamma \vdash e : \tau$ }
\RightLabel{($+$ $\text{I}_R$)}
\UIC{$\Phi,\Gamma \vdash \inr \ e : \sigma + \tau$}
\bottomAlignProof
\DisplayProof
\vskip 1em
\AXC{$\Phi,\Gamma \vdash e :  \tau$}
\AXC{$\Phi,\Delta, x : \tau \vdash f : \sigma$}
\RightLabel{(Let)}
\BIC{$\Phi,\Gamma,\Delta \vdash \slet x e f : \sigma$}
\bottomAlignProof
\DisplayProof
\vskip 1em
\AXC{$\Phi,\Gamma \vdash e : \num$}
\AXC{$\Phi,\Delta \vdash f : \num $}
\AXC{$ \mathbf{Op} \in \{\mathbf{add}, \mathbf{sub}, 
    \mathbf{mul}\}$}
\RightLabel{(Op)}
\TIC{$\Phi,\Gamma, \Delta \vdash \mathbf{Op} \ e \ f : \num$}
\bottomAlignProof
\DisplayProof
\vskip 1em
\AXC{$\Phi,\Gamma \vdash e : \num$}
\AXC{$\Phi,\Delta \vdash f : \num $}
\RightLabel{(Div)}
\BIC{$\Phi,\Gamma, \Delta \vdash \fdiv e f : \num + \unit$}
\bottomAlignProof
\DisplayProof

\end{center}
    \caption{Full typing rules for \LangS. }
    \label{fig:typing_rules_2_full}
\end{figure}

This section of the appendix provides the details of the intermediate 
language \LangS{} briefly described in \Cref{sec:metatheory}.
The type system of \LangS{} corresponds closely to \bea's, with one important
difference: there is no discrete type operator in \LangS{}.
\begin{align}
  \sigma,\tau \ ::=~ \num \mid \unit \mid \sigma \otimes \sigma \mid 
                \sigma + \sigma \tag*{(\LangS{} types)}
\end{align}
We project types from \bea{} into \LangS{} using an operator, $\Lambda$,
which inductively takes discrete types $m(\sigma)$ to $\sigma$.

Terms are typed
with judgments of the form $\Phi, \Gamma \vdash e : \tau$, where the typing
context $\Gamma$ corresponds to a linear typing context of \bea{} with all
of the grade information erased, and the typing context $\Phi$ corresponds to
a discrete typing context of \bea{} using \LangS{} types.
We will denote the erasure of grade information and conversion to \LangS{} types 
from a \bea{} typing environment $\Delta$ as $\Delta^\circ$.
Thus, the disjoint union of the contexts $\Phi^\circ, \Delta^\circ$ is well-defined.

The grammar of terms in \LangS{} is mostly unchanged from the grammar of
\bea{}, excluding the DLet and Disc rules, except that \LangS{} extends 
$\bea{}$ to include primitive constants drawn from a signature $R$:
\begin{align*}
  e,f \ ::=~ \dots \mid k \in R \tag*{(\LangS{} terms)}
\end{align*}
The typing relation of \LangS{} is entirely standard for a first-order simply
typed language; the full rules are given in \Cref{fig:typing_rules_2_full}.  
We also use $\Lambda$ to project \bea{} terms into \LangS{} terms, essentially stripping away 
constructs specific to linear and discrete variables.
\begin{definition}{(Projection from \bea{} terms to \LangS{} terms.)}
  \begin{gather*}
    \Lambda(x)\triangleq x \qquad \Lambda(())\triangleq() \qquad \Lambda(!e)\triangleq \Lambda(e) 
    \qquad \Lambda((e,f))\triangleq(\Lambda(e), \Lambda(f)) \\
    \Lambda(\inl e)\triangleq\inl(\Lambda(e)) \qquad \Lambda(\inr e)\triangleq\inr (\Lambda(e)) \\
    \Lambda(\slet{x}{e}{f})\triangleq\Lambda(\dlet{x}{e}{f})\triangleq\slet{x}{\Lambda(e)}{\Lambda(f)} \\
    \Lambda(\slet{(x,y)}{e}{f})\triangleq\Lambda(\dlet{(x,y)}{e}{f})\triangleq\slet{(x,y)}{\Lambda(e)}{\Lambda(f)} \\
    \Lambda(\case{e'}{\inl x.e}{\inr y.f})\triangleq \case{\Lambda(e')}{\inl x.\Lambda(e)}{\inr y.\Lambda(f)} \\
    \Lambda(\mathbf{Op}\ e\ f)\triangleq\mathbf{Op}\ \Lambda(e)\ \Lambda(f),\ \mathbf{Op}\in\{\mathbf{add}, \mathbf{sub}, 
    \mathbf{mul}, \mathbf{div}\} \qquad
    \Lambda(\mathbf{dmul}\ e\ f)\triangleq \mathbf{mul}\ \Lambda(e)\ \Lambda(f)
  \end{gather*}
\end{definition}

With $\Lambda$, the close correspondence between derivations in
$\bea{}$ and derivations in \LangS{} is summarized in the following lemma.  

\begin{lemma} \label{lem:derive_ls} Let 
  $\Phi \mid \Gamma \vdash e : \tau$ be a well-typed term in $\bea{}$.  Then
  there is a derivation of $\Phi^\circ,\Gamma^\circ \vdash \Lambda(e) : \Lambda(\tau)$ in \LangS.
\end{lemma}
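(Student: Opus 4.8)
The plan is to prove the lemma by structural induction on the derivation of $\Phi \mid \Gamma \vdash e : \tau$ in \bea{}, producing in each case a derivation of $\Phi^\circ, \Gamma^\circ \vdash \Lambda(e) : \Lambda(\tau)$ in \LangS{} by reassembling, with the matching \LangS{} rule from \Cref{fig:typing_rules_2_full}, the derivations of the projected premises supplied by the induction hypothesis. The first step is to record the elementary commutation facts that make the contexts and types line up across the two systems: $(-)^\circ$ erases all grade annotations and applies the type-level conversion, so $(q + \Gamma)^\circ = \Gamma^\circ$ for every grade $q$, and $(\Gamma, \Delta)^\circ = \Gamma^\circ, \Delta^\circ$ with the latter a legal \LangS{} context because erasure leaves variable names untouched and hence preserves disjointness of domains; on types, $\Lambda(\sigma \otimes \tau) = \Lambda(\sigma) \otimes \Lambda(\tau)$, $\Lambda(\sigma + \tau) = \Lambda(\sigma) + \Lambda(\tau)$, $\Lambda(m(\sigma)) = \Lambda(\sigma)$, and $\Lambda$ agrees with the type-conversion part of $(-)^\circ$. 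With these in hand each inductive step is a mechanical re-derivation.

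Going through the cases: (Var) and (DVar) both project to the \LangS{} (Var) rule, since \LangS{} draws no linear/discrete distinction among variables and $\Lambda$ fixes variables; (Unit) to (Unit); ($\otimes$ I) and the left and right sum-introductions to their evident \LangS{} counterparts; both pair-eliminations ($\otimes$ E$_\sigma$ and $\otimes$ E$_\alpha$) project to the single \LangS{} ($\otimes$ E) rule, using $\Lambda(\slet{(x,y)}{e}{f}) = \Lambda(\dlet{(x,y)}{e}{f}) = \slet{(x,y)}{\Lambda(e)}{\Lambda(f)}$ and $(r + \Gamma)^\circ = \Gamma^\circ$; ($+$ E) to ($+$ E), again via $(q + \Gamma)^\circ = \Gamma^\circ$; (Let) to (Let); (Add), (Sub), (Mul) to the \LangS{} (Op) rule, discarding the grades $\varepsilon + r_i$; and (Div) to the \LangS{} (Div) rule, using $\Lambda(\num + \unit) = \num + \unit$. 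The three constructs peculiar to \bea{}'s dual-context discipline, which are exactly those absent from \LangS{}, are collapsed by $\Lambda$ onto ordinary \LangS{} constructs: for (Disc), $\Lambda(!e) = \Lambda(e)$ and $\Lambda(m(\sigma)) = \Lambda(\sigma)$, so the projected premise derivation of $\Phi^\circ, \Gamma^\circ \vdash \Lambda(e) : \Lambda(\sigma)$ already witnesses the conclusion and no \LangS{} rule is needed; for (DLet), $\dlet z e f$ projects to $\slet z {\Lambda(e)} {\Lambda(f)}$, typed by the \LangS{} (Let) rule with $z : \Lambda(\alpha)$ now an ordinary variable; and for (DMul), $\dmul z x$ projects to $\mul z x$, typed by the \LangS{} (Op) rule using that $z$ lives in $\Phi^\circ$ at type $\Lambda(\num) = \num$.

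I do not expect a genuine obstacle: \LangS{} was designed precisely so that every \bea{} rule has an obvious image, and the lemma is essentially a re-derivation modulo grade-erasure and elimination of the $m$ modality. The only point that needs a little care, and it is bookkeeping rather than mathematics, is context management: whenever a \bea{} rule splits the linear context (($\otimes$ I), the two pair-eliminations, ($+$ E), (Let), (DLet), and the arithmetic rules) one must check that the projected halves still have disjoint domains so that the correspondingly context-splitting \LangS{} rule applies, and one must confirm that the discrete context $\Phi$, threaded identically through premises in both systems, projects consistently. Both are immediate from the fact that $(-)^\circ$ touches only grades and types, never variable names.
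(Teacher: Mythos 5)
Your proposal is correct and takes essentially the same approach as the paper: structural induction on the \bea{} typing derivation, re-deriving each conclusion with the matching \LangS{} rule after erasing grades and the $m$ modality, noting that $\Lambda$ collapses the discrete-variable constructs onto their linear counterparts. The one small point the paper makes explicit that you glide over is that the \bea{} arithmetic rules are axioms while the \LangS{} (Op) and (Div) rules are not, so those cases are not purely a matter of pushing the IH through---you must first derive the two $\num$-typed premises via the \LangS{} (Var) rule before applying (Op).
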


\begin{proof}
The proof of \Cref{lem:derive_ls} follows by induction on the \bea{} derivation
$\Phi \mid \Gamma \vdash e : \tau$.  Most cases are immediate by application of
the corresponding \LangS{} rule.  The rules for primitive operations require
application of the \LangS{} (Var) rule.  We demonstrate the derivation for the
case of the (Add) rule:
\begin{description}
\item[\textbf{Case (Add).}]
Given a \bea{} derivation of 
\[
  \Phi \mid \Gamma, x :_{\varepsilon + r_1} \num,
    y:_{\varepsilon + r_2}\num \vdash \add x y : \num
\] 
we are required to show a \LangS{} derivation of 
\[
  \Phi^\circ,\Gamma^\circ, x:\num, y:\num \vdash \add x y : \num
\]
which follows by application of the Var rule for \LangS : \\
  \begin{center}
  \AXC{}
  \RightLabel{(Var)}
  \UIC{$\Phi^\circ,\Gamma^\circ, x: \num \vdash x :  \num$}
  \AXC{}
  \RightLabel{(Var)}
  \UIC{$y : \num \vdash y : \num$}
  \RightLabel{(Add)}
  \BIC{ $\Phi^\circ,\Gamma^\circ, x : \num, y:\num \vdash \add x y : \num$ }
  \bottomAlignProof
  \DisplayProof
  \end{center}
\end{description}
\end{proof}

\LangS{} satisfies the basic properties of weakening and substitution:

\begin{lemma}[Weakening] 
  Let $\Gamma \vdash e : \tau$ be a well-typed \LangS{} term. Then for any
  typing environment $\Delta$ disjoint with $\Gamma$, there is a derivation of
  $\Gamma, \Delta \vdash e : \tau$.  \label{lem:weakening}
\end{lemma}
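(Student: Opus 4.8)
The plan is to proceed by structural induction on the derivation of $\Gamma \vdash e : \tau$ given in \Cref{fig:typing_rules_2_full}, threading the extra bindings $\Delta$ through exactly one subderivation at each step. Throughout I adopt the usual variable convention, choosing every variable bound in $e$ (by a $\mathbf{let}$, a pattern match, or a $\mathbf{case}$) to be distinct from $\dom(\Delta)$; this is legitimate since \LangS{} terms are identified up to $\alpha$-equivalence, and it guarantees that $\Delta$ never clashes with a binding introduced by an elimination rule. Recall also that typing contexts are disjoint unions of bindings, so contexts may be freely reordered.

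For the leaf rules the result is immediate: (Unit) and (Const) admit an arbitrary context in their conclusion, and (Var) does too, with the further observation that the distinguished variable still occurs in $\Gamma, \Delta$ because it already occurs in $\Gamma$. For the unary introduction rules $(+\ \text{I}_L)$ and $(+\ \text{I}_R)$, the premise and conclusion carry the same context, so I apply the induction hypothesis to the premise to obtain it with context $\Gamma, \Delta$ and re-apply the rule. The remaining cases --- ($\otimes$ I), ($\otimes$ E), ($+$ E), (Let), (Op), and (Div) --- are the binary and ternary rules, whose conclusion context is presented as $\Phi, \Gamma_1, \Gamma_2$ with $\Phi$ shared across the premises and $\Gamma_1, \Gamma_2$ distributed among them (in the elimination rules some premises additionally carry the fresh bound variables). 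Since $\Delta$ is disjoint from $\Gamma = \Phi, \Gamma_1, \Gamma_2$, it is in particular disjoint from the context of the first premise $\Phi, \Gamma_1$ (together with any fresh bound variables, which avoid $\dom(\Delta)$ by the variable convention); I apply the induction hypothesis there to enlarge that context by $\Delta$, leave the other premises unchanged, and re-apply the original rule, now taking the ``$\Gamma_1$'' component to be $\Gamma_1, \Delta$. The resulting derivation has a context that is a permutation of $\Gamma, \Delta$, which is exactly what is required.

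The argument is entirely routine; the only care needed is in the context bookkeeping for the splitting rules --- distinguishing the shared portion $\Phi$ from the distributed portions, and invoking the variable convention so that $\Delta$ is never shadowed by a bound variable --- so I do not anticipate any genuine obstacle. The companion substitution lemma for \LangS{} will follow by a similar induction, using weakening to reconcile contexts where a substituted term is used.
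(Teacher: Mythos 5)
The paper states this lemma without proof; it is invoked as a standard fact in the proof of \Cref{thm:subst}, so there is no proof in the paper to compare against. Your structural induction is correct: every multi-premise rule in \Cref{fig:typing_rules_2_full} has the shape ``shared $\Phi$, distributed $\Gamma_1,\Gamma_2$,'' with the first premise of the form $\Phi,\Gamma_1 \vdash \cdots$ introducing no fresh binders, so threading $\Delta$ exclusively through that premise via the induction hypothesis and re-applying the rule yields the enlarged conclusion up to a reordering of context entries. One minor observation: because you always weaken the binder-free premise, the variable convention you invoke is not actually load-bearing here --- the fresh binders $x,y$ of the elimination rules live only in the premises you leave untouched, and the hypothesis $\dom(\Delta)\cap\dom(\Gamma)=\emptyset$ already guarantees that the enlarged conclusion context is well-formed --- but the convention is harmless and does keep the bookkeeping tidy.
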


Values, the subset of terms that are allowed as
results of evaluation, are defined as follows.
  \begin{alignat*}{1}
         &\text{Values } v ~::=~ 
         ()
         \mid k \in R
         \mid  (v, v)
         \mid \inl v
         \mid \inr v
  \end{alignat*}
We write $e[v/x]$ for the capture-avoiding substitution of the value $v$ for
all free occurrences of $x$ in $e$. Given a typing environment $x_1 : \tau_1,
\cdots, x_i : \tau_i = \Gamma$, we denote the simultaneous substitution of a
vector of values $v_1, \cdots, v_i = \bar{v}$ for the variables in $\Gamma$ as
$e[\bar{v}/\dom(\Gamma)]$.

We describe a notational convention. 
For a vector $\gamma_1,\dots, \gamma_i = \bar{\gamma}$ of 
well-typed closed values and 
a typing environment $x_1:\sigma_1,\dots, x_i:\sigma_i =\Gamma$ (note the tacit 
assumption that $\gamma$ and $\Gamma$ have the same length) we write 
$\bar{\gamma} \vDash \Gamma$ to denote the following:
\begin{align}
\bar{\gamma} \vDash \Gamma \triangleq
  \forall x_i \in \dom(\Gamma). ~ \emptyset \vdash \gamma_i : \Gamma(x_i).
  \label{eq:subst_sat}
\end{align}

\begin{theorem}[Substitution] \label{thm:subst}
  Let $\Gamma \vdash e : \tau$ be a well-typed \LangS{} term. Then for any
  well-typed substitution $\bar{\gamma} \vDash \Gamma$ of closed values, there
  is a derivation $\emptyset \vdash e[\bar{\gamma}/\dom(\Gamma)] : \tau.$
\end{theorem}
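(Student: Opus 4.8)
The plan is to prove the statement by structural induction on the derivation of $\Gamma \vdash e : \tau$. As stated, the theorem substitutes away the \emph{entire} context at once, but the binder-eliminating rules ($\otimes$ E, $+$ E, Let) force us to recurse into a subterm whose context has been extended by fresh bound variables that we do \emph{not} wish to substitute. I would therefore first generalize to the claim $(\ast)$: whenever $\Gamma, \Delta \vdash e : \tau$ is derivable, $\Delta$ is disjoint from $\Gamma$, and $\bar{\gamma} \vDash \Gamma$ is a well-typed substitution of closed values, then $\Delta \vdash e[\bar{\gamma}/\dom(\Gamma)] : \tau$ is derivable. The theorem is then the special case $\Delta = \emptyset$. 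Throughout I assume the usual capture-avoiding convention, so that the bound variables of $e$ are chosen fresh for $\dom(\Gamma)$; this both makes $e[\bar{\gamma}/\dom(\Gamma)]$ well-defined and lets substitution be pushed underneath binders.

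For the base cases: in (Var) with $e = x$, if $x \in \dom(\Gamma)$ then $e[\bar{\gamma}/\dom(\Gamma)]$ is the associated closed value $\gamma_k$, and $\bar{\gamma} \vDash \Gamma$ gives $\emptyset \vdash \gamma_k : \tau$, which we weaken to $\Delta \vdash \gamma_k : \tau$ via \Cref{lem:weakening}; if instead $x \in \dom(\Delta)$, the substitution is the identity on $e$ and (Var) re-derives the judgment over $\Delta$. The rules (Unit) and (Const) are immediate because their terms are closed and the rules admit an arbitrary ambient context. For each inductive rule, the premises carry contexts obtained by splitting $\Gamma, \Delta$ — with $x$ (and $y$) adjoined in the binder-eliminating rules. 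I would restrict $\bar{\gamma}$ to the sublist matching the $\Gamma$-part of each premise, observe that $\vDash$ is defined pointwise so the restriction is still a valid substitution, note that substituting for variables not free in a subterm is a no-op (so the restricted and full substitutions agree on that subterm), invoke the induction hypothesis $(\ast)$ on each premise — keeping the freshly-bound $x, y$ in the untouched $\Delta$-part — and finally re-apply the same typing rule. The freshness convention is exactly what guarantees $x, y \notin \dom(\Gamma)$, so the substitution slides past the binder and the reconstructed term is literally $e[\bar{\gamma}/\dom(\Gamma)]$.

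The main obstacle is not conceptual but bookkeeping: for every rule one must verify that (i) the restriction of $\bar{\gamma}$ to a sub-context indeed satisfies $\vDash$ that sub-context, and (ii) the bound variables introduced in the premises of $\otimes$ E, $+$ E, and Let are fresh for $\dom(\Gamma)$ and hence untouched. Point (i) is trivial since ``being a closed well-typed value of type $\sigma$'' does not depend on any context, and point (ii) is the capture-avoiding convention. Since \LangS{} is an ordinary first-order simply-typed calculus with no linearity constraints, no genuinely difficult step arises; \Cref{lem:weakening} is needed only to thread the ambient $\Delta$ through the (Var) base case. One could alternatively prove the single-variable substitution lemma by the same induction and then iterate it — the values being closed makes the order of substitutions immaterial — but the generalization $(\ast)$ dispatches the simultaneous statement directly.
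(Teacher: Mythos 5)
Your proposal is correct and matches the paper's own approach: the paper likewise proves the strengthened claim that $\Gamma,\Delta \vdash e : \tau$ and $\bar{\gamma} \vDash \Gamma$ yield $\Delta \vdash e[\bar{\gamma}/\dom(\Gamma)] : \tau$, by induction on the typing derivation, invoking \Cref{lem:weakening} in the (Var) case and restricting the substitution to the relevant sub-context in binder-eliminating cases. The only cosmetic difference is that you make the capture-avoidance convention and the disjointness assumption explicit, which the paper leaves tacit.
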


\begin{proof}
We state and prove a stronger theorem, from which this theorem immediately follows
by setting $\Delta=\emptyset$.
If $\Gamma,\Delta\vdash e:\tau$ is a well-typed \LangS{} term, then for any 
well-typed substitution $\bar{\gamma}\vDash \Gamma$ of closed values, 
there is a derivation $\Delta\vdash e[\bar\gamma/\dom(\Gamma)]:\tau$.

We prove this by induction on the structure of the derivation $\Gamma \vdash e : \tau$. 
We detail (Var) and ($\otimes$ E) here.
\begin{description}
\item[\textbf{Case (Var)}]
Suppose $\Gamma,x:\sigma,\Delta\vdash x:\sigma$ and let $\bar{\gamma}\vDash \Gamma,x:\sigma$ be a well-typed
substitution of closed values. In particular, there exists some $y\in\dom(\Gamma)$ such that 
$\emptyset\vdash y:\sigma$. We apply \Cref{lem:weakening} to conclude that $\Delta\vdash y:\sigma$.

\item[\textbf{Case ($\otimes$ E).}] 
Suppose $\Phi,\Gamma,\Delta\vdash\slet{(x,y)}{e}{f}:\sigma$ and, for simplicity, let 
$\bar{\gamma}\vDash\Phi,\Gamma,\Delta$ be a well-typed substitution of closed values. 
From $\bar\gamma$ we derive a substitution $\bar\gamma'\vDash \Phi,\Gamma$, and from the 
induction hypothesis on the left premise we have 
$\emptyset\vdash e[\bar\gamma'/\dom(\Phi,\Gamma)]:\tau_1\otimes\tau_2$.
From $\bar\gamma$ we derive a substitution $\bar\gamma''\vDash \Phi,\Delta$, and from the
induction hypothesis on the right premise we have 
$x:\tau_1,y:\tau_2\vdash f[\bar\gamma''/\dom(\Phi,\Delta)]:\sigma$.
This provides the premises needed to apply the typing rule ($\otimes$
E). The desired conclusion then follows from the definition of substitution. 
\end{description}
\end{proof}

\paragraph*{An Operational Semantics for \LangS{}}
\begin{figure}
\begin{center}
\AXC{}
\UIC{$() \Downarrow ()$}
\bottomAlignProof
\DisplayProof
\hskip 0.5em
\AXC{$e \Downarrow u$}
\AXC{$f \Downarrow v$}
\BIC{$(e,f) \Downarrow (u,v)$}
\bottomAlignProof
\DisplayProof
\hskip 0.5em
\AXC{$e \Downarrow (u,v)$}
\AXC{$f[u/x][v/y] \Downarrow w$}
\BIC{$\slet {(x,y)} e f \Downarrow w$}
\bottomAlignProof
\DisplayProof
\hskip 0.5em
\vskip 1em
\AXC{}
\UIC{$k \in R \Downarrow k \in R $}
\bottomAlignProof
\DisplayProof
\hskip 0.5em
\AXC{$e \Downarrow v$ }
\UIC{$\inl e \Downarrow \inl v$}
\bottomAlignProof
\DisplayProof
\hskip 0.5em
\AXC{$e \Downarrow v$ }
\UIC{$\inr e \Downarrow \inr v$}
\bottomAlignProof
\DisplayProof
\vskip 1em
%
\AXC{$e \Downarrow u$ }
\AXC{$f[u/x] \Downarrow v$ }
\BIC{$\slet x e f \Downarrow v$}
\bottomAlignProof
\DisplayProof
\vskip 1em
%
\AXC{$e \Downarrow \inl v$}
\AXC{$e_1[v/x] \Downarrow w$}
\BIC{$\mathbf{case} \ e \ \mathbf{of} \ (\inl x.e_1 \ |\ \inr y.e_2) \Downarrow w$}
\bottomAlignProof
\DisplayProof
\hskip 0.5em
\AXC{$e \Downarrow \inr v$}
\AXC{$e_2[v/y] \Downarrow w$}
\BIC{$\mathbf{case} \ e \ \mathbf{of} \ (\inl x.e_1 \ | \ \inr y.e_2) \Downarrow w$}
\bottomAlignProof
\DisplayProof
\vskip 1em
\AXC{$e_1 \stepid k_1$}
\AXC{$e_2 \stepid k_2$}
\AXC{$\mathbf{Op} \in \{\mathbf{add}, \mathbf{sub}, \mathbf{mul}, \mathbf{div}\}$}
\TIC{$\mathbf{Op} \ {e_1} \ {e_2} \stepid f_{op} \ ({k_1}, {k_2})$}
\bottomAlignProof
\DisplayProof
\vskip 1em
\AXC{$e_1 \stepap k_1$}
\AXC{$e_2 \stepap k_2$}
\AXC{$\mathbf{Op} \in \{\mathbf{add}, \mathbf{sub}, \mathbf{mul}, \mathbf{div}\}$}
\TIC{$\mathbf{Op} \ {e_1} \ {e_2} \stepap \tilde{f}_{op} \ ({k_1}, {k_2})$}
\bottomAlignProof
\DisplayProof
\vskip 1em

\end{center}
    \caption{Evaluation rules for \LangS. 
    A generic step relation ($\Downarrow$) is used when the rule is identical for 
    both the ideal ($\stepid$) and approximate ($\stepap$) step relations.}
    \label{fig:op_semantics_full}
\end{figure}

Intuitively, an ideal problem and its approximating program can behave
differently given the same input. Following this intuition, we allow programs
in \LangS{} to be executed under an ideal or approximate big-step operational
semantics. Evaluation rules are given in \Cref{fig:op_semantics_full}.
We write $e \stepid v$ (resp., $e \stepap v$) to
denote that a term $e$ evaluates to value $v$ under the ideal (resp.,
approximate) semantics. 
An important feature of \LangS{} is that it is deterministic and strongly
normalizing:

\begin{theorem}[Strong Normalization] \label{thm:normalizing}
  If $\emptyset \vdash e : \tau$ then the well-typed closed values $\emptyset
  \vdash v, v' : \tau$ exist such that $e \stepid v$ and $e \stepap v'$. 
\end{theorem}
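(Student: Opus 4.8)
The plan is to prove this by induction, but with a twist: since the evaluation of $\slet{x}{e}{f}$, $\case{e'}{\inl x.e}{\inr y.f}$, and $\slet{(x,y)}{e}{f}$ proceeds by substituting a value into a subterm that is \emph{not} closed, a naive structural induction on closed terms fails---the typing derivation of the substituted body is not a subderivation of the original. Instead I would prove the following generalization by induction on the typing derivation: for every well-typed \LangS{} term $\Gamma \vdash e : \tau$ and every well-typed closing substitution $\bar\gamma \vDash \Gamma$, there exist closed values with $\emptyset \vdash v : \tau$ and $\emptyset \vdash v' : \tau$ such that $e[\bar\gamma/\dom(\Gamma)] \stepid v$ and $e[\bar\gamma/\dom(\Gamma)] \stepap v'$. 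The statement of the theorem is the special case $\Gamma = \emptyset$. This generalization is, in effect, the (degenerate) logical relation one uses for normalization of a simply typed calculus; it is this simple precisely because \LangS{} has no function types, so the predicate ``evaluates to a well-typed value'' is already preserved by every elimination form.

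Before the induction I would record two routine auxiliary facts. First, a canonical forms lemma: a closed value $v$ with $\emptyset \vdash v : \num$ is some $k \in R$; with $\emptyset \vdash v : \unit$ is $()$; with $\emptyset \vdash v : \sigma \otimes \tau$ is a pair $(v_1,v_2)$ with $\emptyset \vdash v_1 : \sigma$ and $\emptyset \vdash v_2 : \tau$; and with $\emptyset \vdash v : \sigma + \tau$ is $\inl v_1$ or $\inr v_2$ with the evident typing---each by inspection of the typing and value grammars. Second, a substitution-composition fact: when $x \notin \dom(\Gamma)$ and $u$ is closed, $(e[\bar\gamma/\dom(\Gamma)])[u/x] = e[(\bar\gamma, u)/(\dom(\Gamma), x)]$, a standard property of capture-avoiding substitution. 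I would also appeal to \Cref{thm:subst} and \Cref{lem:weakening} to keep the substituted terms well-typed along the way.

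The induction itself goes rule by rule. The leaf cases (Var, Unit, Const) are immediate, since a variable under $\bar\gamma$ is already one of the $\gamma_i$, a well-typed closed value. For ($\otimes$ I), ($+$ I), and ($+$ $\text{I}$) the result follows directly from the induction hypotheses and the corresponding evaluation and typing rules. The interesting cases are ($\otimes$ E), ($+$ E), and (Let): in each, the induction hypothesis applied to the scrutinee premise yields a closed value, canonical forms fixes its shape, inversion gives typings for its components, and then the induction hypothesis applied to the \emph{other} premise---with $\bar\gamma$ extended by these components, using the substitution-composition fact---yields the final value, which the matching evaluation rule assembles. One must treat $\stepid$ and $\stepap$ independently here, since e.g. a $\case$ scrutinee may reduce to $\inl$ under one semantics and $\inr$ under the other; one simply selects the corresponding branch in each. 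Finally, for (Op) and (Div) the two induction hypotheses produce closed values of type $\num$, which by canonical forms are constants $k_1, k_2 \in R$, and the evaluation rules give $f_{op}(k_1,k_2)$ (ideally) and $\tilde f_{op}(k_1,k_2)$ (approximately).

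The main obstacle is less a single hard step than getting the setup right: the generalization over closing substitutions is essential, and the rest is bookkeeping with the lemmas above. The only genuine side condition to surface is that the semantic operations on the signature $R$ are total and type-respecting---that $f_{op}$ and $\tilde f_{op}$ send pairs of constants to well-typed closed values of type $\num$, and $f_{div}, \tilde f_{div}$ to values of type $\num + \unit$---so that the (Op) and (Div) cases close. This is an assumption on the interpretation of the arithmetic primitives, consistent with the lenses $\mathcal{L}_{add}, \mathcal{L}_{mul}, \mathcal{L}_{div}, \dots$ used in the \Bel{} semantics.
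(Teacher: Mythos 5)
The paper states this theorem without an accompanying proof, treating normalization of the first-order, total language \LangS{} as routine, so there is no official argument to compare yours against. Your proof is correct, and the approach---strengthening to open terms under closing substitutions and inducting on the typing derivation, with canonical forms, \Cref{thm:subst}, and the substitution-composition identity discharging the elimination cases---is the standard one for such languages. Two of your observations deserve emphasis. First, you are right to track $\stepid$ and $\stepap$ independently: a $\mathbf{case}$ scrutinee can reduce to $\inl v$ under one semantics and $\inr v'$ under the other (e.g.\ because an upstream $\fdiv{e_1}{e_2}$ yields a zero denominator under only one semantics), so there is no single value to thread through the induction. Second, your closing remark identifies the one genuine assumption the paper leaves implicit: totality hinges on $f_{op}$, $\tilde{f}_{op}$, $f_{div}$, $\tilde{f}_{div}$ being total and type-respecting, which the paper's semantics arranges (in particular $f_{div}$ and $\tilde{f}_{div}$ return $inr\ \star$ rather than being undefined at a zero denominator), but which should be stated as an explicit condition on the primitive operations in a fully formal development.
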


In our main result of backward error soundness, we will relate the ideal and
approximate operational semantics given above to the backward error lens
semantics of \bea{} via an interpretation of programs in \LangS{} as morphisms
in the category \Set.

\subsection{Interpreting \LangS}
Our main backward error soundness theorem requires that we have explicit access
to each transformation in a backward error lens. We achieve this by lifting the
close syntactic correspondence between \LangS{} and \bea{} to a close semantic
correspondence using the forgetful functors $U_{id} : \Bel \rightarrow \Set$
and $U_{ap} : \Bel \rightarrow \Set$ to interpret \LangS{} programs in \Set. 

We start with the interpretation of \LangS{} types, defined as follows:
\begin{gather*}
    \pdenot{\num} \triangleq U\denot{\num}  \qquad
    \pdenot{\textbf{unit}} \triangleq U\denot{(\{\star\},\underline{0},0)} \\
    \pdenot{\sigma \otimes \tau}  \triangleq 
      \pdenot{\sigma} \times  \pdenot{\tau}  \qquad
   \pdenot{\sigma + \tau} \triangleq \pdenot{\sigma} + \pdenot{\tau} \qquad
   \pdenot{M(\sigma)}\triangleq \pdenot\sigma
\end{gather*}  

Given the above interpretation of types, the interpretation $\pdenot{\Gamma}$
of a \LangS{} typing context $\Gamma$ is then defined as 
\begin{align*}
  \pdenot{\emptyset} \triangleq U\denot{I} 
  \qquad \qquad
  \pdenot{\Gamma,x:\sigma} \triangleq 
  \pdenot{\Gamma} \times \pdenot{\sigma}
\end{align*}  

Now, using the above definitions for the interpretations of \LangS{} types and
contexts, we can use the interpretation of \bea{} (\Cref{def:interpL}) terms
along with the functors $U_{id}$ and $U_{ap}$ to define the interpretation of
\LangS{} programs as morphisms in \Set:

\begin{definition}(Interpretation of \LangS{} terms.) \label{def:interpS}
  Each typing derivation $\Gamma \vdash e : \tau$ in \LangS{} yields the set
  maps $\pdenot{e}_{id} : \pdenot{\Gamma} \rightarrow \pdenot{\tau}$ and
  $\pdenot{e}_{ap} : \pdenot{\Gamma} \rightarrow \pdenot{\tau}$, by structural
  induction on the \LangS{} typing derivation $\Gamma \vdash e : \tau$.
\end{definition}
\noindent We give the detailed constructions for \Cref{def:interpS} in
\Cref{app:interp_LS}. 

Given \Cref{def:interpS}, we can show that \LangS{} is semantically sound
and computationally adequate: a \LangS{} program computes to a value if and only
if their interpretations in \Set{} are equal. Because \LangS{} has an ideal and
approximate operational semantics as well as an ideal and approximate
denotational semantics, we have two versions of the standard theorems for
soundness and adequacy. 

\begin{theorem}[Soundness of $\pdenot{-}$]\label{thm:soundid}
  Let $\Gamma \vdash e : \tau$ be a well-typed \LangS{} term.  Then for any
  well-typed substitution of closed values $\bar{\gamma} \vDash \Gamma$, if
  $e[\bar{\gamma}/\dom(\Gamma)]\stepid v$ for some value $v$, then
  $\pdenot{\Gamma \vdash e : \tau}_{id}(\bar\gamma) = v$ (and similarly for $\stepap$ and $\pdenotap{-}$).
\end{theorem}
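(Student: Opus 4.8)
The plan is to prove this by induction on the structure of the \LangS{} typing derivation $\Gamma \vdash e : \tau$, following exactly the same case structure as \Cref{def:interpS}. This is the standard ``soundness of a denotational semantics with respect to big-step operational semantics'' argument: for each typing rule, assume the operational derivation $e[\bar\gamma/\dom(\Gamma)] \stepid v$, invert it to obtain operational derivations for the subterms, apply the induction hypotheses to get equalities between denotations and values, and then chase the definition of $\pdenotid{e}$ to conclude. Since $\pdenotid{-}$ is defined by transporting the \Bel{} semantics of \bea{} along the forgetful functor $\Uid$, the key computational content in each case is that $\Uid$ preserves the relevant categorical structure (composition, tensor, coproduct injections/copairing), so that $\pdenotid{e}$ unfolds into an actual set-theoretic composite of the $\pdenotid{-}$ of its subterms.

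First I would set up the usual strengthening needed to make the induction go through with substitutions: rather than proving the statement only for closed terms, the right induction hypothesis already appears in the theorem statement, parametrized over $\bar\gamma \vDash \Gamma$; I'd follow the same pattern as in the proof of \Cref{thm:subst}, where an open substitution $\bar\gamma \vDash \Phi,\Gamma,\Delta$ is split into $\bar\gamma' \vDash \Phi,\Gamma$ and $\bar\gamma'' \vDash \Phi,\Delta$ for the premises. Then the base cases (Var), (Unit), (Const) are immediate: $\pdenotid{x} = \pi_k$ composed with counit/monoidality maps which $\Uid$ sends to the corresponding set projection, and $x[\bar\gamma/\dom\Gamma] = \gamma_k$, which is exactly what the projection returns. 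For (Const), $k \stepid k$ and $\pdenotid{k}$ is the constant map at the element of $U\denot{\num}$ interpreting $k$.

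For the inductive cases I would organize them as: the introduction rules ($\otimes$ I, $+\,\text{I}_L$, $+\,\text{I}_R$), the elimination rules ($\otimes$ E, $+$ E, Let), and the primitive operations (Op, Div). The introduction rules are direct: invert the evaluation rule $(e,f) \Downarrow (u,v)$ to get $e \Downarrow u$, $f \Downarrow v$, apply the two IHs, and note that $\Uid$ sends $h_1 \otimes h_2$ to the product of set maps, so $\pdenotid{(e,f)}(\bar\gamma) = (\pdenotid{e}(\bar\gamma'), \pdenotid{f}(\bar\gamma'')) = (u,v)$. The eliminations require one extra move: e.g.\ for (Let), inverting $\slet{x}{e}{f} \Downarrow v$ gives $e \Downarrow u$ and $f[u/x] \Downarrow v$; the IH on $e$ gives $\pdenotid{e}(\bar\gamma') = u$, and the IH on $f$ (now instantiated with the substitution extended by $u$) gives $\pdenotid{f}(\bar\gamma'', u) = v$, and then I must check that the \Bel{}-composite defining $\pdenotid{\slet{x}{e}{f}}$, after applying $\Uid$, is literally $\bar\gamma \mapsto \pdenotid{f}(\dots, \pdenotid{e}(\dots))$ — i.e.\ that the diagonal/monoidality plumbing on the discrete context collapses under $\Uid$ to the obvious set-level diagonal and reindexing. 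The case ($+$ E) additionally requires casing on whether $e$ evaluates to $\inl v$ or $\inr v$ and matching that against the behavior of the copairing lens $[h_2,h_3]$ and the distributor $\Theta$ under $\Uid$.

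The main obstacle I expect is the primitive operation cases, (Op) and (Div), together with the bookkeeping that $\Uid$ selects the \emph{ideal} (forward) component and $\Uap$ the \emph{approximate} component. For (Op), inverting gives $e_1 \stepid k_1$, $e_2 \stepid k_2$ and the result is $f_{op}(k_1,k_2)$, so I need that $\Uid$ applied to the lens $\mathcal{L}_{op}$ yields exactly the forward map $f_{op}$ from \Cref{app:interp_bea} (e.g.\ $f_{add}(x_1,x_2) = x_1+x_2$), and similarly that $\Uap(\mathcal{L}_{op}) = \tilde f_{op}$ for the $\stepap$/$\pdenotap{-}$ version — this is where the two ``versions'' of the theorem genuinely diverge, and the $\stepap$ version is \emph{not} a corollary of the $\stepid$ version but a parallel induction with $f_{op}$ replaced by $\tilde f_{op}$ everywhere. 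A subtlety worth flagging: $\tilde f_{op}$ is specified only up to a bounded $\delta$, so strictly the approximate operational rule and the approximate denotation must refer to the \emph{same} fixed choice of rounding function; I would state this compatibility as a standing assumption (the $\tilde f_{op}$ in \Cref{fig:op_semantics_full} and in $\mathcal{L}_{op}$ are the same function) so the equality $\pdenotap{\mathbf{Op}\,e_1\,e_2}(\bar\gamma) = \tilde f_{op}(k_1,k_2)$ holds on the nose. With that in place, every case reduces to unfolding definitions, and the proof is complete by induction; the $\stepap$ half of the statement follows by the identical argument with $\Uid,\pdenotid{-},\stepid,f_{op}$ replaced by $\Uap,\pdenotap{-},\stepap,\tilde f_{op}$.
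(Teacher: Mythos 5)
Your proof proposal is correct and follows essentially the same approach as the paper's proof: induction on the \LangS{} typing derivation, inversion of the big-step relation to expose premises for the induction hypotheses, and definition-chasing through the $\Set$-level plumbing (diagonals on the shared discrete context, products/copairings, and projections) introduced by \Cref{def:interpS}. The one place you go slightly beyond the paper — noting that the $\stepap$ half is a genuinely parallel induction and that the rounding function $\tilde f_{op}$ appearing in the operational rule and in $\mathcal{L}_{op}$ must be the \emph{same} fixed function for the equality to hold on the nose — is a fair sharpening of a compatibility that the paper takes for granted by reusing the symbol $\tilde f_{op}$ in both places; it does not change the structure of the argument.
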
 

\begin{proof}
By induction on the structure of the \LangS{} derivations $\Gamma \vdash e :
\tau$. The cases for (Var), (Unit), (Const), and (+ I) are trivial. In each 
case we apply inversion on the step relation to obtain the premise for the 
induction hypothesis. 

Applications of the symmetry map $s_{X,Y} : X \times Y \rightarrow Y \times X$
are elided for succinctness. Recall the diagonal map $t_X : X \rightarrow X
\times X$ on $\Set$, which is used frequently in the interpretation of
\LangS{}. 

\begin{description}
\item[\textbf{Case ($\otimes $ I).}] 
We are required to show
\[
\pdenotid{\Phi, \Gamma, \Delta \vdash (e,f) : \sigma \otimes  \tau}
(\bar\gamma) = (u,v)
\] 
for some well-typed closed substitution  
$\bar{\gamma} \vDash \Phi, \Gamma, \Delta$ 
and value $(u,v)$ such that  
\[
(e,f)[\bar{\gamma}/\dom(\Phi,\Gamma,\Delta)] \stepid (u,v).
\]
From $\bar{\gamma}$ we derive the substitutions $\bar{\gamma}' \vDash \Phi$,
$\bar{\gamma}_1 \vDash \Gamma$, and $\bar{\gamma}_2 \vDash \Delta$.
By inversion on the step relation we then have 
\begin{align*}
e[\bar{\gamma}',\bar{\gamma}_1/\dom(\Phi,\Gamma)] &\stepid u \\
f[\bar{\gamma}',\bar{\gamma}_2/\dom(\Phi,\Delta)] &\stepid v.
\end{align*}
We conclude as follows: 
\begin{align*}
  \pdenotid{\Phi, \Gamma, \Delta \vdash (e,f) : \sigma \otimes  \tau}
  (\bar{\gamma})
  &= ((t_{\pdenot{\Phi}},id_{\pdenot{\Gamma}\times \pdenot{\Delta}});
  (\pdenotid{\Phi,\Gamma \vdash e : \sigma}, 
  \pdenotid{\Phi,\Delta \vdash f : \tau})) (\bar{\gamma}) \\
  \hfill \tag*{(\Cref{def:interpS})}\\
  &= (\pdenotid{\Phi,\Gamma \vdash e : \sigma}, 
  \pdenotid{\Phi,\Delta \vdash f : \tau}) 
  (\bar{\gamma}',\bar{\gamma}_1,
  \bar{\gamma}',\bar{\gamma}_2) \\
  \hfill \tag*{(Definition of $t_{\pdenot{\Phi}}$)} \\
  &= (u,v) \tag*{(IH)}.
\end{align*}
\item[\textbf{Case ($\otimes $ E).}] 
We are required to show 
\[
\pdenotid{\Phi,\Gamma,\Delta \vdash \slet {(x,y)} e f: \sigma}
(\bar{\gamma}) = w
\] 
for some well-typed closed substitution $\bar{\gamma} \vDash
\Phi,\Gamma,\Delta$ and value $w$ such that 
\[
(\slet {(x,y)} e f)[\bar{\gamma}/\dom(\Phi,\Gamma,\Delta)] \stepid w.
\]
From $\bar{\gamma}$ we derive the substitutions
$\bar{\gamma}' \vDash \Phi$, $\bar{\gamma}_1 \vDash \Gamma$,
and $\bar{\gamma}_2 \vDash \Delta$. By inversion on the step 
relation we then have 
\begin{align*}
e[\bar{\gamma}',\bar{\gamma}_1/\dom(\Phi,\Gamma)] &\stepid (u,v) \\
f[\bar{\gamma}',\bar{\gamma}_2/\dom(\Phi,\Delta)][u/x][v/y] &\stepid w.
\end{align*}
We conclude as follows: 
\begin{align*}
  &\pdenotid{\Phi,\Gamma,\Delta \vdash \slet {(x,y)} e f: \sigma}
  (\bar{\gamma}) \\
  &= ((t_{\pdenot{\Phi}},id_{\pdenot{\Gamma}\times\pdenot{\Delta}}); 
  (h_1, id_{\pdenot{\Phi} \times \pdenot{\Delta}});h_2) 
  (\bar{\gamma}) \tag*{(\Cref{def:interpS})}\\
  &= ((h_1, id_{\pdenot{\Phi} \times \pdenot{\Delta}});h_2) 
  (\bar{\gamma}',\bar{\gamma}_1,
  \bar{\gamma}',\bar{\gamma}_2) 
  \tag*{(Definition of $t_{\pdenot{\Phi}}$)} \\
  &= (\pdenotid{\Phi,\Delta,x:\tau_1,y:\tau_2 \vdash f : \sigma}) 
  (\bar{\gamma}',\bar{\gamma}_2,u, v) \tag*{(IH)} \\
  &= w. \tag*{(IH)}
\end{align*}
\item[\textbf{Case ($+$ E).}] 
We are required to show 
\[
\pdenotid{\Phi,\Gamma, \Delta \vdash \case {e'} {\inl x.e} {\inr y.f} : \rho}
(\bar{\gamma}) = w
\]
for some well-typed closed substitution $\bar{\gamma} \vDash
\Phi,\Gamma,\Delta$ and value $w$ such that 
\[
(\case {e'} {\inl x.e} {\inr y.f})[\bar{\gamma}/\dom(\Phi,\Gamma,\Delta)] \stepid w.
\]

We consider the case when $e' = \inl e_1$ for some $e_1 : \sigma$. 
From $\bar{\gamma}$ we derive the substitutions
$\bar{\gamma}' \vDash \Phi$, $\bar{\gamma}_1 \vDash \Gamma$,
and $\bar{\gamma}_2 \vDash \Delta$. By inversion on the step 
relation we then have 
\begin{align*}
e'[\bar{\gamma}',\bar{\gamma}_1/\dom(\Phi,\Gamma)] &\stepid \inl v \\
e[\bar{\gamma}',\bar{\gamma}_2/\dom(\Phi,\Delta)][v/x] &\stepid w.
\end{align*}
We conclude as follows: 
\begin{align*}
  &\pdenotid{\Phi,\Gamma, \Delta \vdash \case {e'} {\inl x.e} {\inr y.f} : \rho}
  (\bar{\gamma}) \\
  &= ((t_{\pdenot{\Phi}},
  id_{\pdenot{\Gamma} \times \pdenot{\Delta}}); 
  (id_{\pdenot{\Phi} \times \pdenot{\Delta}}, h_1);
  \Theta^S_{\pdenot{\Phi} \times \pdenot{\Delta},\pdenot{\sigma},
  \pdenot{\tau}};[h_2,h_3]) 
  (\bar{\gamma}) \tag*{(\Cref{def:interpS})}\\
  &= ((id_{\pdenot{\Phi} \times \pdenot{\Delta}}, h_1);
  \Theta^S_{\pdenot{\Phi} \times \pdenot{\Delta},\pdenot{\sigma},
  \pdenot{\tau}};[h_2,h_3]) 
  (\bar{\gamma}',\bar{\gamma}_1,
  \bar{\gamma}',\bar{\gamma}_2) 
  \tag*{(Definition of $t_{\pdenotid{\Phi}}$)} \\
  &= (\Theta^S_{\pdenot{\Phi} \times \pdenot{\Delta},\pdenot{\sigma},
  \pdenot{\tau}};[h_2,h_3]) 
  (\bar{\gamma}',\bar{\gamma}_2,
  \inl v) \tag*{(IH)} \\
  &= w. \tag*{(IH)}
\end{align*}
\item[\textbf{Case (Let).}]
We are required to show 
\[
  \pdenotid{\Gamma, \Delta \vdash \slet x e f : \tau}
  (\bar{\gamma}) = v
\]

for some well-typed closed substitution $\bar{\gamma} \vDash
\Phi,\Gamma,\Delta$ and value $w$ such that 
\[
(\slet {x} e f)[\bar{\gamma}/\dom(\Phi,\Gamma,\Delta)] \stepid v.
\]
From $\bar{\gamma}$ we derive the substitutions
$\bar{\gamma}' \vDash \Phi$, $\bar{\gamma}_1 \vDash \Gamma$,
and $\bar{\gamma}_2 \vDash \Delta$. By inversion on the step 
relation we then have 
\begin{align*}
e[\bar{\gamma}',\bar{\gamma}_1/\dom(\Phi,\Gamma)] &\stepid u \\
f[\bar{\gamma}',\bar{\gamma}_2/\dom(\Phi,\Delta)][u/x] &\stepid v.
\end{align*}
We conclude as follows: 
\begin{align*}
  \pdenotid{\Phi,\Gamma,\Delta \vdash \slet {x}{e}{f}: \sigma}
  (\bar{\gamma}) &= ((t_{\pdenot{\Phi}},id_{\pdenot{\Gamma}\times\pdenot{\Delta}}); 
  (h_1, id_{\pdenot{\Phi} \times \pdenot{\Delta}});h_2) 
  (\bar{\gamma}) \tag*{(\Cref{def:interpS})}\\
  &= ((h_1, id_{\pdenot{\Phi} \times \pdenot{\Delta}});h_2) 
  (\bar{\gamma}',\bar{\gamma}_1,
  \bar{\gamma}',\bar{\gamma}_2) 
  \tag*{(Definition of $t_{\pdenot{\Phi}}$)} \\
  &= (\pdenotid{\Phi,\Delta,x:\tau_1 \vdash f : \sigma}) 
  (\bar{\gamma}',\bar{\gamma}_2,u ) 
  \tag*{(IH)} \\
  &= v. \tag*{(IH)}
\end{align*}
\item[\textbf{Case (Op).}] 
We are required to show 
\[
\pdenotid{\Phi, \Gamma, \Delta \vdash \textbf{Op} ~e ~f : \num }
  (\bar{\gamma}) = f_{op}(k_1,k_2)
\]
for some well-typed closed substitution $\gamma \vDash \Phi, \Gamma, \Delta$
and value $f_{op}(k_1,k_2)$ such that 
\[
(\Phi, \Gamma, \Delta \vdash \textbf{Op} ~e ~f)
  [\bar{\gamma}/\dom(\Phi,\Gamma,\Delta)]\stepid ~ f_{op}(k_1,k_2).
\]
From $\bar{\gamma}$ we derive the substitutions
$\bar{\gamma}' \vDash \Phi$, $\bar{\gamma}_1 \vDash \Gamma$,
and $\bar{\gamma}_2 \vDash \Delta$. By inversion on the step 
relation we then have 
\begin{align*}
e[\bar{\gamma}',\bar{\gamma}_1/\dom(\Phi,\Gamma)] &\stepid k_1 \\
f[\bar{\gamma}',\bar{\gamma}_2/\dom(\Phi,\Delta)] &\stepid k_2.
\end{align*}
We conclude as follows: 
\begin{align*}
  \pdenotid{\Phi,\Gamma,\Delta \vdash \textbf{Op} ~ e ~f: \num}(\bar{\gamma})
  &= ((t_{\pdenot{\Phi}}, id_{\pdenot{\Gamma}\times\pdenot{\Delta}});(h_1, h_2); f_{op}) 
  (\bar{\gamma}) \tag*{(\Cref{def:interpS})}\\
  &= ((h_1, h_2);f_{op}) 
  (\bar{\gamma}',\bar{\gamma}_1,
  \bar{\gamma}',\bar{\gamma}_2) 
  \tag*{(Definition of $t_{\pdenot{\Phi}}$)} \\
  &= f_{op}(k_1,k_2). \tag*{(IH)}
\end{align*}
\item[\textbf{Case (Div).}] Identical to the proof for (Op).
\end{description}
\end{proof}

\begin{theorem}[Adequacy of $\pdenot{-}$]\label{thm:adequacy}
  Let $\Gamma \vdash e : \tau$ be a well-typed \LangS{} term. Then for any
  well-typed substitution of closed values $\bar{\gamma} \vDash \Gamma$, if $
  \pdenot{\Gamma \vdash e : \tau}_{id}(\bar{\gamma}) =v$ for some value $v$, 
  then $e[\bar{\gamma}/\dom(\Gamma)]\stepid
  v$ (and similarly for $\stepap$ and $\pdenotap{-}$).
\end{theorem}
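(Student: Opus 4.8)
The plan is to obtain adequacy as a short corollary of three facts that are already available: strong normalization for \LangS{} (\Cref{thm:normalizing}), soundness of the \Set{}-semantics (\Cref{thm:soundid}), and a faithfulness lemma for the semantics on closed values. The key observation is that soundness already pins down the denotation of whatever value a closed term reduces to; strong normalization guarantees that such a value exists; and faithfulness lets us transport the resulting semantic equality back to a syntactic equality of values. So the proof is essentially a three-line composition of the results in the excerpt.

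In more detail: fix a well-typed \LangS{} term $\Gamma \vdash e : \tau$ and a well-typed substitution $\bar{\gamma} \vDash \Gamma$, and suppose $\pdenotid{\Gamma \vdash e : \tau}(\bar{\gamma}) = v$ for some value $v$. First I would apply the Substitution theorem (\Cref{thm:subst}) to obtain that $e[\bar{\gamma}/\dom(\Gamma)]$ is a well-typed closed term of type $\tau$. Strong normalization (\Cref{thm:normalizing}) then yields a closed value $v'$ with $e[\bar{\gamma}/\dom(\Gamma)] \stepid v'$. Applying soundness (\Cref{thm:soundid}) to this reduction gives $\pdenotid{\Gamma \vdash e : \tau}(\bar{\gamma}) = v'$ in $\pdenot{\tau}$. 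Combining with the hypothesis, $v' = v$ as elements of $\pdenot{\tau}$.

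It then remains to conclude that $v' = v$ as \emph{syntactic} values, whence $e[\bar{\gamma}/\dom(\Gamma)] \stepid v$. For this I would prove, by induction on $\tau$ (equivalently on the structure of values), the auxiliary lemma that the map sending a closed value of type $\tau$ to its image in $\pdenot{\tau}$ is injective: for the base type $\num$ this is the standing assumption that distinct constants of the signature $R$ denote distinct reals, for $\unit$ there is a unique value, and the cases for $\sigma \otimes \tau$ and $\sigma + \tau$ follow from the induction hypothesis together with injectivity of the evident component maps of the set-level product and coproduct. The argument for $\stepap$ and $\pdenotap{-}$ is verbatim the same, using the approximate evaluation provided by \Cref{thm:normalizing} and the approximate halves of \Cref{thm:soundid} and of the auxiliary lemma, which does not depend on which operational or denotational semantics is used since values are evaluated and interpreted identically in both.

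I do not anticipate a serious obstacle: all of the heavy lifting—normalization and soundness of $\pdenot{-}$—is assumed, and the only genuinely new ingredient is the faithfulness lemma for values, which is a routine structural induction. The one point that needs care is keeping the identification between syntactic closed values and their semantic images explicit and consistent with the way it is already used in the statement of \Cref{thm:soundid}; once that bookkeeping is fixed, the proof is just the composition described above.
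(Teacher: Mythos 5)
Your proposal is correct, and it is a cleaner decomposition than the paper's. The paper's proof is presented as a case analysis on $e$, with each of the two representative cases dispatched by a one-line appeal to substitution (\Cref{thm:subst}), normalization (\Cref{thm:normalizing}), and the remark that \LangS{} is deterministic. You instead bypass the case analysis entirely and obtain adequacy in one pass: substitution gives a closed well-typed term, normalization produces a value $v'$, soundness (\Cref{thm:soundid}) forces $\pdenotid{e}(\bar\gamma) = v'$, and the hypothesis then yields $v = v'$ in $\pdenot{\tau}$, from which syntactic equality follows by a faithfulness lemma. This is the same underlying idea --- connecting syntactic evaluation and denotational equality via normalization --- but your version makes the role of soundness explicit (the paper invokes ``determinism'' instead, which is doing analogous work) and it needs no induction on terms, only on types for the auxiliary lemma.

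The one thing worth flagging positively: you correctly identified that equality of $v$ and $v'$ in $\pdenot{\tau}$ does not automatically give syntactic equality of values, and that some injectivity of the interpretation-on-values is required. The paper never states or proves such a lemma and quietly conflates a syntactic value with its image in $\pdenot{\tau}$ (this conflation is already present in the statement of \Cref{thm:soundid}, as you note). Your proposed structural induction on $\tau$ to establish faithfulness is exactly what is needed to close this gap, so your writeup would actually be tighter than the paper's own argument. The $\stepap$ half transfers verbatim, as you observe, since values are evaluated and interpreted identically under both semantics.
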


\begin{proof}
The proof follows directly by cases on $e$. Many cases are immediate and the
remaining cases, given that \LangS{} is deterministic, follow by substitution
(\Cref{thm:subst}) and normalization (\Cref{thm:normalizing}).  We show two
representative cases. 
\begin{description}
\item[Case (Var).] 
Given $\Gamma, x : \sigma, \Delta \vdash x : \sigma$ and $\pdenotid{\Gamma, x :
\sigma, \Delta \vdash x : \sigma} (\bar{\gamma}) = v$ for
some value $v$ and some well-typed substitution $\bar{\gamma} \vDash \Gamma,
x:\sigma, \Delta$ we are required to show 
\[
x[\bar{\gamma}/\dom(\Gamma, x : \sigma, \Delta)] \stepid v
\] 
which follows by substitution (\Cref{thm:subst}) and normalization
(\Cref{thm:normalizing}). 
\item[Case ($\otimes$ E).] 
Given $\Gamma, \Delta \vdash \slet x e f :  \tau$ and $\pdenotid{\Gamma, \Delta
\vdash \slet x e f : \tau} (\bar{\gamma}) = w$ for some
value $w$ and some well-typed derivation $\bar{\gamma} \vDash \Gamma, \Delta$
we are required to show 
\[
(\slet x e f)[\bar{\gamma}/\dom(\Gamma,\Delta)] \stepid w
\] 
which follows by substitution (\Cref{thm:subst}) and normalization
(\Cref{thm:normalizing}).  \qedhere
\end{description}
\end{proof}

Our main error backward error soundness theorem requires one final piece of
information: we must know that the functors $U_{id}$ and $U_{ap}$ project
directly from interpretations of \bea{} programs in \Bel{} (\Cref{def:interpL})
to interpretations of \LangS{} programs in \Set{} (\Cref{def:interpS}):  
\begin{lemma}[Pairing]\label{lem:pairing}
  Let $\Phi \mid \Gamma \vdash e : \sigma$ be a \bea{} program. Then we have 
\[
  U_{id}\denot{\Phi \mid \Gamma \vdash e : \sigma} = 
  \pdenotid{\Phi^\circ,\Gamma^\circ \vdash \Lambda(e) : \Lambda(\sigma)} 
  \text{ and } 
  U_{ap}\denot{\Phi \mid \Gamma \vdash e : \sigma} = 
  \pdenotap{\Phi^\circ,\Gamma^\circ \vdash \Lambda(e) : \Lambda(\sigma)}.
\]
\end{lemma}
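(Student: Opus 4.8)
The plan is to prove \Cref{lem:pairing} by induction on the \bea{} typing derivation of $\Phi \mid \Gamma \vdash e : \sigma$, using \Cref{lem:derive_ls} at each step to name the corresponding \LangS{} derivation of $\Phi^\circ,\Gamma^\circ \vdash \Lambda(e) : \Lambda(\sigma)$, and then comparing the two interpretations morphism-by-morphism. Two preliminary observations set this up. First, both $\Uid$ and $\Uap$ are genuinely functors $\Bel \to \Set$, sending a lens $(f,\tilde f,b)$ to its forward map $f$ (resp. its approximate map $\tilde f$); functoriality is immediate from the definitions of identity and composition in \Cref{def:lensC}. Second, both functors send the monoidal product $\tensor$ on $\Bel$ to the Cartesian product on $\Set$: on objects the tensor has underlying set $X \times Y$, and the graded comonad $D_r$ and discrete monad $M$ leave underlying sets unchanged, so a routine induction on types and contexts gives $\pdenot{\Lambda(\tau)} = U\denot{\tau}$ and $\pdenot{\Phi^\circ,\Gamma^\circ} = U\denot{\Phi\mid\Gamma}$ for $U$ either forgetful functor (which agree on objects); on morphisms, $U$ of a tensor of lenses is the product of the forward (resp. approximate) components, and $U$ of each coherence isomorphism (associator, unitors, symmetry, $\Theta$) is the corresponding \Set{} map, by inspection of the triples defining them.

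With this scaffolding, the inductive step reduces to a pointwise check. Since $\denot{e}$ from \Cref{def:interpL} is a composite of structural lenses and primitive lenses, and $\pdenot{\Lambda(e)}_{id}$ (resp. $_{ap}$) is the analogous \Set{}-composite from \Cref{def:interpS}, functoriality and strict monoidality let us push $\Uid$ (resp. $\Uap$) through the composite, leaving only that $U$ of each atomic morphism matches. The structural cases are mechanical: $U$ of a projection $\pi_i$ is the set projection; $U$ of the discrete diagonals $t_X$ and $t^r_X$ is $x \mapsto (x,x)$ (their forward and approximate maps are exactly this); $U$ of the injections $in_1,in_2$ and of the copairing $[g,h]$ are the corresponding \Set{} maps; and all the administrative isomorphisms introduced by the grading — the counit $\varepsilon_X$, the comultiplication $\delta_{q,r,X}$, the $2$-monoidality maps $m_{r,X,Y}$, the weakenings $m_{q\le r,X}$, the comonad's functorial action $D_r g$, and the discrete-monad unit $\eta$ — are identity lenses or act as the underlying function unchanged, so $U$ collapses them to identities, matching the fact that \LangS{} and its \Set{} semantics carry no grading. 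The content-bearing atoms are the primitive operations: $\Uid$ of $\mathcal{L}_{add},\mathcal{L}_{sub},\mathcal{L}_{mul},\mathcal{L}_{div}$ are exactly the forward maps $f_{op}$ used in the \LangS{} interpretation/evaluation (\Cref{fig:op_semantics_full}) and $\Uap$ of them are the approximate maps $\tilde f_{op}$; and because $\Lambda(\mathbf{dmul}\ e\ f) = \mathbf{mul}\ \Lambda(e)\ \Lambda(f)$, the key point is that the forward (resp. approximate) map of $\mathcal{L}_{dmul}$ coincides with that of $\mathcal{L}_{mul}$, namely $(x_1,x_2)\mapsto x_1x_2$ (resp. $(x_1,x_2)\mapsto x_1x_2e^\delta$), so the two semantics agree on $\mathbf{dmul}$.

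I expect the main obstacle to be purely bookkeeping, concentrated in two places. First, the primitive-operation cases demand unwinding both interpretations in full: in \bea{} the (Add) clause projects $x$ and $y$ out of a graded dual context through $\pi_{i+1}$, the $m_{0\le q_j}$ weakenings, the counits, and the $m_{\varepsilon\le\varepsilon+r}$ maps, whereas the \LangS{} derivation furnished by \Cref{lem:derive_ls} re-derives the operands via the (Var) rule and then applies (Op); one must verify that $U$ of the former composite equals the latter, which amounts to checking that both compute ``select the two numeric components of the environment and apply $f_{op}$.'' Second, the (Let), ($\tensor$ E$_\sigma$), ($\tensor$ E$_\alpha$), and ($+$ E) cases carry the heaviest administrative load from the graded comonad — each involves $D_r(h_i)$, a monoidality or $\varphi$ isomorphism, and a graded diagonal $t^r_{\denot{\Phi}}$ — so the real work is confirming that $U$ flattens all of these to precisely the diagonal-and-composition pattern of the corresponding \LangS{} clause in \Cref{def:interpS}. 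None of this is conceptually deep, but it is the part that must be discharged carefully for every rule.
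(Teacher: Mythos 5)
Your proposal is correct and matches the paper's proof in essence: both proceed by induction on the \bea{} typing derivation and then push $\Uid$ (resp.\ $\Uap$) through the lens composite, relying on the fact that the forgetful functors strictly preserve monoidal structure, collapse the grading-administrative lenses ($D_r$, $\varepsilon$, $\delta$, $m_{r,X,Y}$, $m_{q\le r,X}$, $\eta$, $t^r_X$) to their underlying set maps, and send each primitive lens to its forward/approximate component — including the key observation that $\mathcal{L}_{dmul}$ and $\mathcal{L}_{mul}$ share the same forward and approximate maps, matching $\Lambda(\mathbf{dmul}) = \mathbf{mul}$. The only difference is presentational: you state the strict monoidal functoriality of $\Uid,\Uap$ as an explicit organizing lemma, whereas the paper discharges those steps case-by-case under the annotation ``Definition of $\Uid$.''
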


\begin{proof}
The proof of \Cref{lem:pairing} follows by induction on the structure of the 
\bea{} derivation $\Phi\mid \Gamma \vdash e : \sigma$. We detail here the 
cases of pairing for the ideal semantics. 
\begin{description}
\item[\textbf{Case (Var).}] 
\begin{align*}
  \Uid\denot{\Phi\mid \Gamma, x:_r \sigma \vdash x : \sigma} &= 
  \Uid(\pi_{i+j}\circ (id_{\denot{\Phi}}\otimes\varepsilon_{\denot{\Gamma}\otimes\denot\sigma})
    \circ (id_{\denot\Phi}\otimes \overline{m_{0\leq q_k,\denot{\sigma_k}}})) \tag*{(\Cref{def:interpL})} \\
  &= \pi_{i+j} \tag*{(Definition of $\Uid$)} \\
  &= \pdenotid{\Phi^\circ,\Gamma^\circ,x:\Lambda(\sigma) \vdash x : \Lambda(\sigma) }.
                                                \tag*{(\Cref{def:interpS})} 
\end{align*}
\item[\textbf{Case (DVar).}] 
\begin{align*}
\Uid\denot{\Phi,z: \alpha\mid\Gamma \vdash z : \alpha} &= 
\Uid(\pi_i\circ(id_{\denot{\Phi}\otimes \denot{\alpha}}\otimes \varepsilon_{\denot{\Gamma}})\circ
  (id_{\denot{\Phi}\otimes \denot{\alpha}}\otimes \overline{m_{0\leq q_k,\denot{\sigma_k}}})) \tag*{(\Cref{def:interpL})} \\
&=\pi_i \tag*{(Definition of $\Uid$)}\\
&= \pdenotid{\Phi^\circ,x:\Lambda(\sigma),\Gamma^\circ \vdash x : \Lambda(\sigma) }. \tag*{(\Cref{def:interpS})}
\end{align*}
\item[\textbf{Case (Unit).}] 
\begin{align*}
  \Uid\denot{\Phi\mid \Gamma \vdash () : \unit} &= \underline\star
  \tag*{(\Cref{def:interpL})} \\ 
  &= \pdenotid{\Phi^\circ\mid \Gamma^\circ \vdash () : \unit}.
                            \tag*{(\Cref{def:interpS})}
\end{align*}
\item[\textbf{Case ($\otimes $ I).}] 
From the induction hypothesis we have
\begin{align*}
  \Uid{(h_1)} &= \pdenotid{\Phi^\circ,\Gamma^\circ \vdash \Lambda(e) : \Lambda(\sigma)} \\
  \Uid{(h_2)} &= \pdenotid{\Phi^\circ,\Delta^\circ \vdash \Lambda(f) : \Lambda(\tau)}.
\end{align*}
We conclude as follows:
\begin{align*}
\Uid\denot{\Phi \mid \Gamma, \Delta \vdash (e,f) : \sigma \otimes  \tau} &=
  \Uid(t_{\denot{\Phi}} \otimes  id_{\denot{\Gamma, \Delta}}); \Uid(h_1 \otimes
  h_2) \tag*{(\Cref{def:interpL})} \\ 
  &= (t_{\pdenot{\Phi}},id_{\pdenot{\Gamma} \times \pdenot{\Delta}});
  (\Uid(h_1), \Uid(h_2)) \tag*{(Definition of $\Uid$)} \\
  &= \pdenotid{\Phi^\circ,\Gamma^\circ, \Delta^\circ \vdash \Lambda((e,f)) : \Lambda(\sigma \otimes \tau)}.
  \tag*{(IH \& \Cref{def:interpS})}
\end{align*}
\item[\textbf{Case ($\otimes $ E$_\sigma$).}] 
From the induction hypothesis we have 
\begin{align*}
  \Uid{(h_1)} &= \pdenotid{\Phi^\circ,\Gamma^\circ \vdash \Lambda(e) : \Lambda(\tau_1 \otimes \tau_2)} \\
  \Uid{(h_2)} &= \pdenotid{\Phi^\circ,\Delta^\circ,x : \Lambda(\tau_1), y:\Lambda(\tau_2)
  \vdash \Lambda(f) : \Lambda(\sigma)}.
\end{align*}
We conclude with the following: 
\begin{align*}
  &\Uid\denot{\Phi\mid  r+ \Gamma, \Delta \vdash \slet {(x,y)} e f  :\sigma} 
  \nonumber \\ 
  &= \Uid(h_2 \circ (
  (m^{-1}_{r,\denot{\tau_1},\denot{\tau_2}} \circ 
  D_r(h_1)) \otimes id_{\denot{\Phi}\otimes\denot{\Delta}})
  \circ (t_{\denot{\Phi}}^r\otimes id_{D_r\denot{\Gamma}\otimes\denot{\Delta}}))
  \tag*{(\Cref{def:interpL})} \\ 
  &= (t_{\pdenot{\Phi}},id_{\pdenot{\Gamma} \times  \pdenot{\Delta}});
  (\Uid (h_1), id_{\pdenot{\Phi} \times \pdenot{\Delta}}); \Uid(h_2)
  \tag*{(Definition of $\Uid$)} \\
  &= \pdenotid{\Phi^\circ,\Gamma^\circ,\Delta^\circ \vdash \slet {(x,y)} {\Lambda(e)} {\Lambda(f)}: \Lambda(\sigma)}. 
  \tag*{(IH \& \Cref{def:interpS})}
\end{align*} 
\item[\textbf{Case ($\otimes $ E$_\alpha$)}] 
From the induction hypothesis we have 
\begin{align*}
  \Uid{(h_1)} &= \pdenotid{\Phi^\circ,\Gamma^\circ \vdash \Lambda(e) : \Lambda(\alpha_1 \otimes \alpha_2)} \\
  \Uid{(h_2)} &= \pdenotid{\Phi^\circ,\Delta^\circ,x : \Lambda(\alpha_1), y:\Lambda(\alpha_2) \vdash \Lambda(f) : \Lambda(\sigma)}.
\end{align*}
We conclude with the following: 
\begin{align*}
  \denot{\Phi\mid  \Gamma, \Delta \vdash \dlet {(x,y)} e f  :\sigma}
  &= h_2 \circ (h_1 \otimes  id_{\denot{\Phi} \otimes  \denot{\Delta}})
  \circ (t_{\denot{\Phi}} \otimes  id_{\denot{\Gamma} \otimes  \denot{\Delta}})
  \tag*{(\Cref{def:interpL})}\\ 
  &= (t_{\pdenot{\Phi}}, id_{\pdenot{\Gamma}\times\pdenot{\Delta}});
    (\Uid (h_1), id_{\pdenot{\Phi} \times \pdenot{\Delta}}); \Uid(h_2) \\
  \hfill \tag*{(Definition of $\Uid$)} \\
  &= \pdenotid{\Phi^\circ,\Gamma^\circ,\Delta^\circ \vdash 
  \slet {(x,y)} {\Lambda(e)} {\Lambda(f)}: \Lambda(\sigma)}. \\
  \hfill \tag*{(IH \& \Cref{def:interpS})}
\end{align*} 
\item[\textbf{Case ($+$ E).}]
From the induction hypothesis, we have 
\begin{align*}
  \Uid{(h_1)}
  &= \pdenotid{\Phi^\circ,\Gamma^\circ \vdash \Lambda(e') : \Lambda(\sigma + \tau)}\\
  \Uid{(h_2)} 
  &= \pdenotid{\Phi^\circ,\Delta^\circ, x : \Lambda(\sigma) \vdash \Lambda(e): \Lambda(\rho)}\\
  \Uid{(h_3)} 
  &= \pdenotid{\Phi^\circ,\Delta^\circ, y: \Lambda(\tau) \vdash \Lambda(f): \Lambda(\rho)}.
\end{align*}
We conclude with the following: 
\begin{align*}
&\Uid{\denot{\Phi\mid q+\Gamma, \Delta \vdash \case {e'} {\inl x.e} {\inr y.f}: \sigma}} \\
&= \Uid(  
  [h_2,h_3] \circ \Theta \circ
  (id_{\denot{\Phi}\otimes\denot{\Delta}}\otimes(\varphi \circ D_q(h_1))) \circ
  (t^q_{\denot{\Phi}} \otimes id_{D_q\denot{\Gamma} \otimes  \denot{\Delta}})) \\
\hfill \tag*{(\Cref{def:interpL})} \\
&= (t_{\pdenot{\Phi}}, id_{\pdenot{\Gamma} \times \pdenot{\Delta}});
(id_{\pdenot{\Phi}\times\pdenot{\Delta}}, \Uid(h_1));
\Uid(\Theta) ;[\Uid(h_2),\Uid(h_3)] \\
\hfill \tag*{(Definition of $\Uid$)} \\
&= \pdenot{\Phi^\circ,\Gamma^\circ, \Delta^\circ \vdash 
\case {\Lambda(e')} {\inl x.\Lambda(e)} {\inr y.\Lambda(f)}: \Lambda(\sigma)}_{id}. \tag*{(IH \& \Cref{def:interpS})}
\end{align*}
\item[\textbf{Case (+ I).}]  
We detail the case for (+I$_L$). From the induction hypothesis, we have
\[\Uid(h) = \pdenot{\Phi^\circ,\Gamma^\circ \vdash \Lambda(e) : \Lambda(\sigma)}.\]
\begin{align}
  \Uid(\denot{\Phi\mid \Gamma \vdash \inl e: \sigma + \tau}) 
  &= \Uid\left(in_1\circ h \right)\tag*{(\Cref{def:interpL})} \\
  &= \Uid(h);\Uid(in_1) \tag*{(Definition of $\Uid$)} \\
  &= \pdenotid{\Phi^\circ, \Gamma^\circ \vdash \Lambda(\inl e) : \Lambda(\sigma + \tau)}.  
  \tag*{(IH \& \Cref{def:interpS})}
\end{align}
\item[\textbf{Case (Let).}]
From the induction hypothesis we have
\begin{align*}
  \Uid{(h_1)} &= 
  \pdenotid{\Phi^\circ,\Gamma^\circ \vdash \Lambda(e) : \Lambda(\tau)} \\
  \Uid{(h_2)} &= 
  \pdenotid{\Phi^\circ,\Delta^\circ, x : \Lambda(\tau) \vdash \Lambda(f) : \Lambda(\sigma)}.
\end{align*}

We conclude with the following: 
\begin{align*}
  &\Uid \denot{\Phi\mid r + \Gamma, \Delta \vdash \slet x e f: \sigma} \\
  &=\Uid ( 
  h_2 \circ (D_r(h_1) \otimes id_{\denot{\Phi}\otimes\denot{\Delta}})
  \circ (m_{r,\denot{\Phi},\denot{\Gamma}} \otimes id_{\denot{\Phi}\otimes\denot{\Delta}}) 
  \circ (t^r_{\denot{\Phi}} \otimes id_{D_r\denot{\Gamma} \otimes \denot{\Delta}}))\\
  \hfill \tag*{(\Cref{def:interpL})} \\
  &=(t_{\pdenot{\Phi}}, id_{\pdenot{\Gamma} \times \pdenot{\Delta}});
  (\Uid(h_1),id_{\pdenot{\Phi}\times \pdenot{\Delta}}); \Uid(h_2) 
  \tag*{(Definition of $\Uid$)} \\
  &=\pdenot{\Phi^\circ, (r+\Gamma)^\circ, \Delta^\circ \vdash 
  \slet x {\Lambda(e)} {\Lambda(f)}: \Lambda(\sigma)}_{id}. \tag*{(IH \& \Cref{def:interpS})}
\end{align*}
\item[\textbf{Case (Disc).}] 
From the induction hypothesis we have
\begin{align*}
  \Uid (h)= \pdenot{\Phi^\circ,\Gamma^\circ\vdash \Lambda(e):\Lambda(\sigma)}_{id}.
\end{align*}
We conclude with the following: 
\begin{align*}
  \Uid \denot{\Phi\mid\Gamma\vdash\ !e : m(\sigma)} 
  &= U_{id}(\eta\circ h) \tag*{(\Cref{def:interpL})} \\
  &= U_{id}(h) &\tag*{(Definition of $\Uid$)} \\
  &= \pdenot{\Phi^\circ,\Gamma^\circ\vdash \Lambda(e):\Lambda(\sigma)}_{id}. \tag*{(IH)}
\end{align*}
\item[\textbf{Case (DLet).}]
From the induction hypothesis we have
\begin{align*}
  \Uid{(h_1)} &= 
  \pdenotid{\Phi^\circ,\Gamma^\circ \vdash \Lambda(e) : \Lambda(\alpha)} \\
  \Uid{(h_2)} &= 
  \pdenotid{\Phi^\circ,z : \Lambda(\alpha),\Delta^\circ \vdash \Lambda(f) : \Lambda(\sigma)}.
\end{align*}

We conclude with the following: 
\begin{align*}
  &\Uid \denot{\Phi\mid\Gamma, \Delta \vdash \dlet z e f: \sigma} \\
  &=\Uid ( 
  h_2 \circ (h_1 \otimes id_{\denot{\Phi}\otimes\denot{\Delta}})
  \circ (t_{\denot{\Phi}} \otimes id_{\denot{\Gamma} \otimes \denot{\Delta}}))
  \tag*{(\Cref{def:interpL})} \\
  &=(t_{\pdenot{\Phi}}, id_{\pdenot{\Gamma} \times \pdenot{\Delta}});
  (\Uid(h_1),id_{\pdenot{\Phi}\times \pdenot{\Delta}}); \Uid(h_2) 
  \tag*{(Definition of $\Uid$)} \\
  &=\pdenotid{\Phi^\circ, \Gamma^\circ, \Delta^\circ \vdash 
  \slet z {\Lambda(e)} {\Lambda(f)}: \Lambda(\sigma)}. \tag*{(IH \& \Cref{def:interpS})}
\end{align*}
\item[\textbf{Case (Add).}] 
From \Cref{def:interpL} we have 
\begin{align*}
&\denot{\Phi\mid \Gamma,x:_{\varepsilon+q}\num,
    y:_{\varepsilon+r}\num \vdash \add x y : \num}\\
& \qquad \qquad =
\pi_{i + j+1} \circ \dots \circ
  (id_{\denot{\Phi} \otimes \denot{\Gamma}}
  \otimes (\mathcal{L}_{add} \circ
  (m_{\varepsilon \le\varepsilon + q, \denot{\num}}
  \otimes m_{\varepsilon \le \varepsilon + r, \denot{\num}}))).
\end{align*}
From \Cref{def:interpS} we have 
\begin{align*}
  h_1&= \pdenot{\Phi^\circ,\Gamma^\circ,x:\num\vdash x:\num}_{id} =\pi_{i+j+1} \\ 
  h_2&= \pdenot{\Phi^\circ,y:\num\vdash y:\num}_{id} =\pi_{i+1}.
\end{align*}
We conclude as follows:
\begin{align*}
  &\Uid{\denot{\Phi\mid \Gamma,x:_{\varepsilon+q}\num,y:_{\varepsilon+r}\num 
  \vdash \add x y : \num}} \\
  &= (id_{\pdenot{\Phi} \times \pdenot{\Gamma}}, f_{add}); \pi_{i + j+1} 
  \tag*{(\Cref{def:interpL} \& Definition of $\Uid$)} \\
  &=(t_{\pdenot{\Phi}},id_{\pdenot{\Gamma}},id_{\pdenot{\num}},id_{\pdenot{\num}});(h_1,h_2);f_{add} 
  \tag*{(IH)}\\
  &=\pdenotid{\Phi^\circ,\Gamma^\circ,x:\num,y:\num 
    \vdash \add x y : \num}. \tag*{(\Cref{def:interpS})}
\end{align*}
\end{description}
The cases for the remaining arithmetic operations are nearly identical 
to the case for \textbf{Add}.
\end{proof}

\section{Interpreting \LangS{} Terms}\label{app:interp_LS}
This appendix provides the detailed constructions of the interpretation of
\LangS{} terms for \Cref{def:interpS}.  The interpretation of terms is defined
over the typing derivations for \LangS{} given in
\Cref{fig:typing_rules_2_full}.  For each case, the ideal interpretation
$\pdenotid{-}$  is constructed explicitly, but the construction for
$\pdenotap{-}$ is nearly identical, requiring only that the forgetful functor
$\Uap$ is used in place of $\Uid$.  

Applications of the symmetry map $s_{X,Y} : X \times Y \rightarrow Y \times X$
are elided for succinctness. The diagonal map $t_X : X \rightarrow X \times X$
on $\Set$ is used frequently and is not elided.
\begin{description}
\item[\textbf{Case (Var).}] 
Define the maps 
$\pdenot{\Phi,\Gamma,x :\sigma, \Delta \vdash x : \sigma}_{id}$ and 
$\pdenot{\Phi,\Gamma,x :\sigma, \Delta \vdash x : \sigma}_{ap}$ 
in $\Set$ as the appropriate projection $\pi_i$.
\item[\textbf{Case (Unit).}] 
Define the set maps 
$\pdenot{\Phi,\Gamma \vdash () : \unit}_{id}$ and 
$\pdenot{\Phi,\Gamma \vdash () : \unit}_{ap}$ as the constant function 
returning the value $\star$. 
\item[\textbf{Case (Const).}] 
Define the maps 
$\pdenot{\Phi,\Gamma \vdash k : \num}_{id}$  and 
$\pdenot{\Phi,\Gamma \vdash k : \num}_{ap}$ 
in $\Set$ as the constant function taking points in 
$\pdenot{\Phi,\Gamma}$ to the value $k \in R$. 
\item[\textbf{Case ($\otimes $ I).}] 
Given the maps 
\begin{align*}
h_1 &= \pdenotid{\Phi,\Gamma \vdash e : \sigma}: 
  \pdenot{\Phi} \times \pdenot{\Gamma} 
  \rightarrow {\pdenot{\sigma}}\\
h_2 &= \pdenotid{\Phi,\Delta \vdash f : \tau}: 
  \pdenot{\Phi} \times\pdenot{\Delta} 
  \rightarrow {\pdenot{\tau}}
\end{align*}
in $\Set$, define the map $\pdenotid{\Phi,\Gamma, \Delta \vdash (e,f) : \sigma
    \otimes  \tau}$ as 
\[
  (t_{\pdenot{\Phi}},id_{\pdenot{\Gamma}\times\pdenot{\Delta}});(h_1, h_2).
\]
\item[\textbf{Case ($\otimes $ E).}] Given the maps 
\begin{align*}
h_1 &= \pdenot{\Phi,\Gamma \vdash e : \tau_1 \otimes  \tau_2}_{id}: 
  \pdenot{\Phi} \times \pdenot{\Gamma} 
  \rightarrow {\pdenot{\tau_1} \times  \pdenot{\tau_2}}\\
h_2 &= \pdenot{\Phi,\Delta, x : \tau_1, y: \tau_2 \vdash f : \sigma}_{id} : 
  \pdenot{\Phi} \times \pdenot{\Delta} \times \pdenot{\tau_1} \times
  \pdenot{\tau_2} \rightarrow {\pdenot{\sigma}}
\end{align*}
in $\Set$, define $\pdenot{\Phi,\Gamma, \Delta \vdash \slet {(x,y)} e f :
    \sigma}_{id}$ as 
\[
  (t_{\pdenot{\Phi}},id_{\pdenot{\Gamma}\times\pdenot{\Delta}}); (h_1,
  id_{\pdenot{\Phi} \times \pdenot{\Delta}});h_2.
\]
\item[Case ($+$ E).] 
Given the maps 
\begin{align*}
h_1 &= \pdenotid{\Phi,\Gamma \vdash e' : \sigma + \tau}: 
  \pdenot{\Phi} \times \pdenot{\Gamma} 
  \rightarrow {\pdenot{\sigma + \tau}}\\
h_2 &= \pdenotid{\Phi,\Delta, x : \sigma \vdash e : \rho} : 
  \pdenot{\Phi} \times\pdenot{\Delta} \times \pdenot{\sigma} 
  \rightarrow {\pdenot{\rho}}\\
h_3 &= \pdenotid{\Phi,\Delta, y : \tau \vdash f : \rho} : 
  \pdenot{\Phi} \times \pdenot{\Delta} \times \pdenot{\tau} 
  \rightarrow {\pdenot{\rho}}
\end{align*}
in $\Set$, define $\pdenotid{\Phi,\Gamma, \Delta \vdash \case {e'} {\inl x.e} {\inr y.f}
: \rho}$ as 
\[
  (t_{\pdenot{\Phi}},
  id_{\pdenot{\Gamma} \times \pdenot{\Delta}}); 
  (id_{\pdenot{\Phi} \times \pdenot{\Delta}}, h_1);
  \Theta^S_{\pdenot{\Phi} \times \pdenot{\Delta},\pdenot{\sigma},
  \pdenot{\tau}};[h_2,h_3]
\]
where $\Theta^S_{X,Y,Z}$ is a map in $\Set$:
\[
  \Theta_{X,Y,Z} : X \times  (Y + Z) \rightarrow 
  (X \times  Y) + (X \times  Z).
\] 
\item[\textbf{Case ($+$ I$_L$).}] 
Given the map 
\[
  h = \pdenotid{\Phi,\Gamma \vdash e : \sigma}:
  \pdenot{\Phi} \times \pdenot{\Gamma} 
  \rightarrow {\pdenot{\sigma}}
\]
in $\Set$, define the map
\[
\pdenotid{\Phi,\Gamma \vdash \inl e: \sigma + \tau} 
\] 
as the composition 
\[
  h;in_1.
\]
\item[\textbf{Case ($+$ I$_R$).}] 
Given the map 
\[
  h = \pdenotid{\Phi,\Gamma \vdash e : \sigma}: \pdenot{\Phi} \times
  \pdenot{\Gamma} \rightarrow {\pdenot{\sigma}}
\]
in $\Set$, define the map
\[
  \pdenotid{\Phi,\Gamma \vdash \inr e: \sigma + \tau} 
\] 
as the composition 
\[
  h;in_2.
\]
\item[\textbf{Case (Let).}] 
Given the maps 
\begin{align*}
h_1 &= \pdenotid{\Phi,\Gamma \vdash e : \sigma} 
  : \pdenot{\Phi} \times \pdenot{\Gamma} \rightarrow \pdenot{\sigma} \\
h_2 &= \pdenotid{\Phi,\Delta, x: \sigma \vdash f : \tau}
  : \pdenot{\Phi} \times \pdenot{\Delta} \times \pdenot{\sigma} 
      \rightarrow \pdenot{\tau} 
\end{align*}
in $\Set$, define the map 
\[
  \pdenotid{\Phi,\Gamma, \Delta \vdash \slet x e f : \tau}
\]
as the composition 
\[
  (t_{\pdenot{\Phi}},
  id_{\pdenot{\Gamma} \times \pdenot{\Delta}}); 
  (h_1, id_{\pdenot{\Phi} \times \pdenot{\Delta}}); h_2.
\]
\item[\textbf{Case (Op).}] 
Given the maps
\begin{align*}
h_1 &= \pdenotid{\Phi,\Gamma \vdash e : \num}: 
  \pdenot{\Phi} \times \pdenot{\Gamma} \rightarrow {\pdenot{\num}}\\
h_2 &= \pdenotid{\Phi,\Delta \vdash f : \num} :
  \pdenot{\Phi} \times \pdenot{\Delta} \rightarrow {\pdenot{\num}}
\end{align*}
in $\Set$, define the map
\[
  \pdenotid{\Phi,\Gamma,\Delta \vdash \mathbf{Op} ~e ~f : \num} 
\]
as the composition 
\[
  (t_{\pdenot{\Phi}},id_{\pdenot{\Gamma}\times\pdenot{\Delta}});(h_1, h_2); U_{id} \mathcal{L}_{op} =
  (t_{\pdenot{\Phi}},id_{\pdenot{\Gamma}\times\pdenot{\Delta}});(h_1, h_2); f_{op}
\]
for $\mathbf{Op} \in \{\mathbf{add},\mathbf{sub},\mathbf{mul}\}$.
\item[\textbf{Case (Div).}] Given the maps
\begin{align*}
h_1 &= \pdenotid{\Phi,\Gamma \vdash e : \num}: 
  \pdenot{\Phi} \times \pdenot{\Gamma} \rightarrow {\pdenot{\num}}\\
h_2 &= \pdenotid{\Phi,\Delta \vdash f : \num} :
  \pdenot{\Phi} \times \pdenot{\Delta} \rightarrow {\pdenot{\num}}
\end{align*}
in $\Set$, define the map
\[
  \pdenotid{\Phi,\Gamma,\Delta \vdash \textbf{div} ~ e ~f : \num + \unit}
\]
as the composition
\[
  (t_{\pdenot{\Phi}},id_{\pdenot{\Gamma}\times\pdenot{\Delta}});(h_1, h_2); U_{id} \mathcal{L}_{div} =
  (t_{\pdenot{\Phi}},id_{\pdenot{\Gamma}\times\pdenot{\Delta}});(h_1, h_2); f_{div}.
\]
\end{description}

\section{Proof of Backward Error Soundness}\label{sec:app_soundness}
This appendix provides a detailed proof of the main backward error soundness
theorem for \bea{} (\Cref{thm:main}).

\begin{namedtheorem}[\Cref{thm:main}]
Let $ \Phi\mid  x_1:_{r_1}\sigma_1,\cdots,x_n:_{r_n}\sigma_n = 
    \Gamma \vdash e : \sigma$ be a well-typed \bea{} term. Then for 
any well-typed substitutions 
$\bar{p} \vDash \Phi^\circ$ and $\bar{k} \vDash \Gamma^\circ$, if 
\[\Lambda(e)[\bar{p}/\dom(\Phi^\circ)][\bar{k}/\dom(\Gamma^\circ)] \stepap v\] 
for some value $v$, then the well-typed substitution $\bar{l} \vDash \Gamma^\circ$ 
exists such that \[\Lambda(e)[\bar{p}/\dom(\Phi^\circ)][\bar{l}/\dom(\Gamma^\circ)] \stepid v,\] and 
$d_{\denot{\sigma_i}}({k}_i,{l}_i) \le r_i$ for each $k_i \in \bar{k}$ 
and $l_i \in \bar{l}$.
\end{namedtheorem}

\begin{proof}
From the lens semantics (\Cref{def:interpL}) of \bea{} we have the triple 
\[ \denot{\Phi\mid \Gamma \vdash e : \sigma} = (f,\tilde{f},b) : 
   \denot{\Phi} \otimes  \denot{\Gamma} \rightarrow \denot{\sigma}. \]
Then, using the backward map $b$, we can define the tuple of vectors 
of values 
$\label{eq:bmapdef}
	(\bar{s},\bar{l}) \triangleq b((\bar{p},\bar{k}),v)
$
such that $\bar{s} \vDash \Phi^\circ$ and $\bar{l} \vDash \Gamma^\circ$. 

From the second property of backward error lenses we then have  
$$f(\bar{s},\bar{l}) = f(b((\bar{p},\bar{k}),v)) = v.$$

We can now show a backward error result, i.e.,
$\tilde{f}(\bar{p},\bar{k}) = f(\bar{s},\bar{l})$:
\begin{align}
\pdenotap{\Phi,\Gamma \vdash \Lambda(e) : \sigma} (\bar{p},\bar{k})
	&= \Uap{\denot{\Phi \mid \Gamma \vdash e : \sigma}} 
		\tag*{(\Cref{lem:pairing})}(\bar{p},\bar{k}) \\
	&=\tilde{f}(\bar{p},\bar{k})\tag*{(\Cref{def:interpL})}\\
	&= v \tag*{(\Cref{thm:soundid})} \\ 
	&= f(\bar{s},\bar{l}). \tag*{}
\end{align}

From the first property of error lenses we have 
\[
  d_{\denot{\Phi}\otimes \denot{\Gamma}}( (\bar{p},\bar{k}), 
    b((\bar{p},\bar{k}),v)) -r_{\denot{\Phi}\otimes\denot{\Gamma}}
  \le d_{\denot{\sigma}}( \tilde{f}(\bar{p},\bar{k}), v)-r_{\denot{\sigma}}
\]
so long as  
\begin{align}
d_{\denot{\sigma}}( \tilde{f}(\bar{p},\bar{k}), v) 
 =d_{\denot{\sigma}}( v, v) <\infty.
  \label{eq:error_assum}
\end{align}
If the base numeric type is interpreted as a metric
space with a standard distance function, then 
$d_{\denot{\sigma}}\left( v, v\right)$ is zero for any type 
$\sigma$, and so \Cref{eq:error_assum} is satisfied.

Unfolding definitions, and using the fact that 
$\tilde{f}(\bar{p},\bar{k}) = v$ from above, we have 
\begin{align}
\max\{d_{\denot{\Phi}}(\bar{p},\bar{s}), 
    d_{\denot{\Gamma}}({\bar{k},\bar{l}})-r_{\denot{\Gamma}}\} &\le 
    d_{\denot{\sigma}}\left( v, v\right)=0. \label{eq:error_ineq}
\end{align}
From \Cref{eq:error_ineq} we can conclude two things. First, using the
definition of the distance function on discrete metric spaces, we can conclude
$\bar{p}=\bar{s}$: the discrete variables carry no backward error.  Second, for
linear variables, by \Cref{eq:d_tensor} we can derive the required backward error bound:
\begin{align*} 
    \max\left\{ d_{\denot{\sigma_1}}(k_1,l_1)-r_1,\dots,
     d_{\denot{\sigma_n}}(k_n,l_n)-r_n\right\} &\le 0.
\end{align*}
\end{proof}

\section{Type Checking Algorithm and Proofs of Soundness and Completeness}\label{app:algorithm}
\begin{figure}
\begin{center}

\begin{mathpar}
\inferrule*[right=(Var)] 
{ }
{\Phi\mid\Gamma^\bullet,x:\sigma;x\Rightarrow  \{x:_0\sigma\};\sigma}

\inferrule*[right=(DVar)] 
{ }
{\Phi,z:\alpha\mid\Gamma^\bullet;z\Rightarrow \emptyset; \alpha}

\inferrule*[right=($\otimes $ I)] 
{
\Phi\mid\Gamma^\bullet;e\Rightarrow \Gamma_1; \sigma \\ 
\Phi\mid\Gamma^\bullet;f\Rightarrow \Gamma_2; \tau \\
\dom\Gamma_1\cap\dom\Gamma_2=\emptyset
}
{\Phi\mid\Gamma^\bullet; (e,f)\Rightarrow \Gamma_1,\Gamma_2; \sigma\otimes\tau}

\inferrule*[right=(Unit)] 
{ }
{\Phi\mid\Gamma^\bullet;()\Rightarrow \emptyset; \mathbf{unit}}

\inferrule*[right=($\otimes $ E$_\sigma$)] 
{ 
    \Phi\mid\Gamma^\bullet;e\Rightarrow \Gamma_1;\tau_1\otimes\tau_2 \\
    \Phi\mid\Gamma^\bullet,x:\tau_1, y:\tau_2; f\Rightarrow \Gamma_2 ; \sigma \\
    \dom\Gamma_1\cap\dom\Gamma_2=\emptyset \\
    r=\max\{r_1,r_2\} \text{ where }
        x:_{r_1}\tau_1,y:_{r_2}\tau_2\in\Gamma_2 \text{ else } r=0
}
{
    \Phi\mid \Gamma^\bullet; \slet {(x,y)} e f \Rightarrow 
    (r+\Gamma_1),\Gamma_2\setminus \{x,y\}; \sigma
}

\inferrule*[right=($\otimes $ E$_\alpha$)] 
{ 
    \Phi\mid\Gamma^\bullet;e\Rightarrow  \Gamma_1;\alpha_1\otimes \alpha_2 \\
    \Phi,z_1: \alpha_1,z_2: \alpha_2\mid\Gamma^\bullet;f\Rightarrow \Gamma_2;\sigma \\
    \dom\Gamma_1\cap\dom\Gamma_2=\emptyset 
}
{\Phi\mid \Gamma^\bullet; \dlet {(z_1,z_2)} e f \Rightarrow \Gamma_1,\Gamma_2;\sigma}

\inferrule*[right=($+$ E)] 
{
    \Phi\mid\Gamma^\bullet;e'\Rightarrow\Gamma_1;\sigma+\tau \\
    \Phi\mid \Gamma^\bullet,x:\sigma;e\Rightarrow \Gamma_2;\rho \\
    \Phi\mid\Gamma^\bullet,y:\tau;f\Rightarrow \Gamma_3;\rho \\
    \dom\Gamma_1\cap\dom\Gamma_2=\dom\Gamma_1\cap\dom\Gamma_3=\emptyset \\
    q=\max\{q_1,q_2\} \text{ where } x:_{q_1}\sigma\in\Gamma_2
    \text{ or } y:_{q_2}\tau\in\Gamma_3 \text{ else }q=0
}
{
    \Phi\mid\Gamma^\bullet;\mathbf{case} \ e' \ \mathbf{of} \ (\inl x.e \ | \ \inr y.f) 
    \Rightarrow (q+\Gamma_1),\max\{\Gamma_2\setminus \{x\}, \Gamma_3\setminus\{y\}\};\rho\
}

\inferrule*[right=($+$ $\text{I}_L$)] 
{\Phi\mid \Gamma^\bullet;e\Rightarrow \Gamma;\sigma}
{\Phi\mid\Gamma^\bullet; \inl_\tau \ e \Rightarrow\Gamma;\sigma + \tau}

\inferrule*[right=($+$ $\text{I}_R$)] 
{\Phi\mid \Gamma^\bullet;e\Rightarrow\Gamma;\tau}
{\Phi\mid\Gamma^\bullet; \inr_\sigma \ e \Rightarrow \sigma + \tau}

\inferrule*[right=(Let)] 
{
    \Phi\mid\Gamma^\bullet;e\Rightarrow\Gamma_1;\tau \\
    \Phi\mid\Gamma^\bullet,x:\tau;f\Rightarrow\Gamma_2;\sigma \\\\
    \dom\Gamma_1\cap\dom\Gamma_2=\emptyset \\
    x:_r\sigma\in\Gamma_2 \text{ else } r=0
}
{\Phi\mid \Gamma^\bullet; \slet x e f\Rightarrow (r+\Gamma_1),\Gamma_2\setminus\{x\}; \sigma}

\inferrule*[right=(Disc)] 
{\Phi\mid\Gamma^\bullet;e\Rightarrow\Gamma;\sigma}
{\Phi\mid\Gamma^\bullet;!e\Rightarrow \Gamma;m(\sigma)}

\inferrule*[right=(DLet)] 
{
    \Phi\mid \Gamma^\bullet;e\Rightarrow\Gamma_1;\alpha \\
    \Phi,z: \alpha\mid \Gamma^\bullet;f\Rightarrow\Gamma_2;\sigma \\\\
    \dom\Gamma_1\cap\dom\Gamma_2=\emptyset 
}
{\Phi\mid\Gamma^\bullet; \dlet{z}{e}{f}\Rightarrow \Gamma_1,\Gamma_2;\sigma}

\inferrule*[right=\text{(Add, Sub)}] 
{ }
{
    \Phi\mid \Gamma^\bullet, x: \num, y:\num;
    \{\mathbf{add}, \mathbf{sub}\} \ x \ y \Rightarrow 
    \{x:_{\varepsilon}\num,y:_{\varepsilon}\num\};\num
}

\inferrule*[right=(Mul)] 
{ }
{
    \Phi\mid \Gamma^\bullet, 
    x:\num, 
    y:\num; \mul  x  y \Rightarrow 
    \{x:_{\varepsilon/2}\num, y:_{\varepsilon/2}\num\};\num 
}

\inferrule*[right=(Div)] 
{ }
{
    \Phi\mid \Gamma^\bullet, x: \num, y:\num;
    \{\mathbf{add}, \mathbf{sub}\} \ x \ y \Rightarrow 
    \{x:_{\varepsilon}\num,y:_{\varepsilon}\num\};\num + \unit
}

\inferrule*[right=(DMul)] 
{ }
{
    \Phi,z: m(\num)\mid \Gamma^\bullet, 
    x:\num; \dmul  z  x \Rightarrow \{x:_\varepsilon \num\}; \num
}
\end{mathpar}

\end{center}
    \caption{Type checking algorithm for \bea{}.}
    \label{fig:algorithm}
\end{figure}

This appendix defines the type checking algorithm for \bea{} described in 
\Cref{sec:algorithm}, as well as proofs of its soundness and completeness.
First, we give the full type checking algorithm in \Cref{fig:algorithm}.
Recall that algorithm calls are written as $\Phi\mid\Gamma^\bullet;e\Rightarrow \Gamma;\sigma$
where $\Gamma^\bullet$ is a linear context skeleton, $e$ is a \bea{} program,
$\Gamma$ is, intuitively, the \emph{minimal} linear context required to type $e$ such that
$\overline{\Gamma}\sqsubseteq\Gamma^\bullet$, and $\sigma$ is the type of $e$. 
Note that we only require $\Phi$ to contain the discrete variables used in the program 
and we do nothing more; thus, it is not returned by the algorithm.
We do require that discrete and linear contexts are always disjoint, and we will denote
linear variables by $x$ and $y$ and discrete variables by $z$. 
Finally, we define the \emph{max} of two linear contexts, $\max\{\Gamma,\Delta\}$, 
to have domain $\dom\Gamma\cup\dom\Delta$ and, if $x:_q\sigma\in\Gamma$ and $x:_r\sigma\in\Delta$, 
then $x:_{\max\{q,r\}}\sigma\in\max\{\Gamma,\Delta\}$. 

Before we give proofs of \Cref{thm:algo_sound} and \Cref{thm:algo_complete},
we must prove two lemmas about type system and algorithm weakening. 
Intuitively, type system weakening says that if we can derive the type of a program from a context $\Gamma$, 
then we can also derive the same program from a larger context $\Delta$ which subsumes $\Gamma$.
\begin{lemma}[Type System Weakening]\label{lem:type_weak}
  If $\Phi\mid\Gamma\vdash e:\sigma$ and $\Gamma\sqsubseteq\Delta$, then $\Phi\mid\Delta\vdash e:\sigma$.
\end{lemma}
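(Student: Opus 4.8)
The plan is to prove Lemma~\ref{lem:type_weak} (Type System Weakening) by induction on the typing derivation $\Phi \mid \Gamma \vdash e : \sigma$, generalizing over the choice of $\Delta \sqsupseteq \Gamma$. The key observation driving the proof is that each typing rule in \Cref{fig:typing_rules} is "monotone" in the grades and domains of its linear context in an appropriate sense: if we can fire a rule with premises typed in contexts $\Gamma_i$, we can still fire it with any weakenings $\Delta_i \sqsupseteq \Gamma_i$, provided the $\Delta_i$ are chosen compatibly (same skeleton, disjoint where required).

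First I would set up the induction. For the axiom rules (Var), (DVar), (Unit), and the arithmetic rules (Add, Sub, Mul, Div, DMul): here $e$ is typed in a context of a very specific shape, e.g.\ $\Gamma, x:_r\sigma \vdash x : \sigma$ for (Var) with arbitrary tail $\Gamma$ and arbitrary grade $r$. Since $\Delta \sqsupseteq \Gamma, x:_r\sigma$ means $\dom(\Gamma,x:_r\sigma) \subseteq \dom\Delta$ and all grades in $\Delta$ are at least as large on the shared variables, $\Delta$ itself has the form $\Delta', x:_{r'}\sigma$ with $r' \geq r$ and $\Delta' \sqsupseteq \Gamma$; but the (Var) rule allows \emph{any} tail and \emph{any} grade on $x$, so it applies directly to give $\Phi \mid \Delta \vdash x : \sigma$. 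The arithmetic rules are handled the same way: e.g.\ (Add) types $\add x y$ in $\Gamma, x:_{\varepsilon+r_1}\num, y:_{\varepsilon+r_2}\num$ for arbitrary $\Gamma, r_1, r_2$; any weakening still has this shape (with possibly larger $r_1, r_2$, and recalling $\varepsilon + r_i' \geq \varepsilon + r_i$ when $r_i' \geq r_i$), so the rule reapplies. Here the design choice of building weakening into the arithmetic rules via the $+r_i$ slack pays off.

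Next I would handle the structural rules. For rules with no context splitting, e.g.\ ($+\ \text{I}_L$), (Disc): the premise is typed in the same context $\Gamma$ as the conclusion, so I invoke the IH on $\Delta \sqsupseteq \Gamma$ directly and reapply the rule. For rules that introduce a fresh linear binding in a premise, e.g.\ (Let) with conclusion $\Phi \mid r + \Gamma, \Delta \vdash \slet{x}{e}{f} : \sigma$ from premises $\Phi \mid \Gamma \vdash e : \tau$ and $\Phi \mid \Delta, x:_r\tau \vdash f : \sigma$: given a weakening $\Theta \sqsupseteq r+\Gamma, \Delta$, I decompose $\Theta$ along the disjoint-union structure as $\Theta = \Theta_1, \Theta_2$ with $\Theta_1 \sqsupseteq r+\Gamma$ (equivalently, writing $\Theta_1 = r + \Theta_1'$ where $\Theta_1' \sqsupseteq \Gamma$ --- here I use that if $q + \Gamma'' \sqsupseteq q + \Gamma$ then $\Gamma'' \sqsupseteq \Gamma$, and that subtracting a fixed $r$ from all grades of a context $\sqsupseteq r+\Gamma$ is well-defined and yields a context $\sqsupseteq \Gamma$) and $\Theta_2 \sqsupseteq \Delta$; apply the IH to the first premise with $\Theta_1' \sqsupseteq \Gamma$ and to the second with $\Theta_2, x:_r\tau \sqsupseteq \Delta, x:_r\tau$; then reassemble via (Let). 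The ($\otimes$ E$_\sigma$), ($+$ E), and (DLet) cases are analogous, using the shared-context premise structure (the discrete context $\Phi$ is untouched throughout since weakening only concerns linear contexts). The ($\otimes$ I) case requires splitting $\Theta \sqsupseteq \Gamma, \Delta$ into disjoint $\Theta_1 \sqsupseteq \Gamma$, $\Theta_2 \sqsupseteq \Delta$, which is possible since $\dom\Theta \supseteq \dom\Gamma \sqcup \dom\Delta$ --- just assign each variable of $\Theta$ to whichever side it came from (and the extra variables of $\Theta$ to, say, $\Theta_1$).

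The main obstacle I anticipate is bookkeeping around the interaction of the context-sum operation $q + \Gamma$ with the subcontext relation $\sqsubseteq$, and around the disjoint-union splitting: I need the small algebraic facts that (i) $\Gamma \sqsubseteq \Delta$ implies $q + \Gamma \sqsubseteq q + \Delta$, (ii) if $q + \Gamma \sqsubseteq \Theta$ then $\Theta$ can be written as $q + \Theta'$ with $\Gamma \sqsubseteq \Theta'$ (requires every grade in $\Theta$ on $\dom\Gamma$ to be $\geq q$, which holds since they're $\geq q + r \geq q$), and (iii) if $\Gamma, \Delta \sqsubseteq \Theta$ with $\dom\Gamma \cap \dom\Delta = \emptyset$ then $\Theta$ splits as a disjoint union of a weakening of $\Gamma$ and a weakening of $\Delta$. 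These are all routine once stated, but they must be stated carefully, and one should double check the edge cases where a variable appears in $\Theta$ but not in the original context (it gets discarded in the original derivation's premises, so it can be freely placed in one branch of any split). Since $\Phi$ is fixed and never weakened, no complications arise from the discrete context.

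\begin{proof}[Proof sketch of \Cref{lem:type_weak}]
By induction on the derivation of $\Phi \mid \Gamma \vdash e : \sigma$, with $\Delta \sqsupseteq \Gamma$ arbitrary. For the axioms (Var, DVar, Unit) and the primitive arithmetic rules (Add, Sub, Mul, Div, DMul), the conclusion's context has a fixed shape that is preserved under $\sqsupseteq$ (using that these rules permit arbitrary tails and arbitrary grades bounded below by the rule's minimum, e.g.\ $\varepsilon + r$ with $r$ arbitrary), so the same rule reapplies with $\Delta$ in place of $\Gamma$. For single-context structural rules ($+\ \text{I}_{L}$, $+\ \text{I}_{R}$, Disc), apply the IH to each premise with the same weakening and reapply the rule. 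For context-splitting rules ($\otimes$ I, $\otimes$ E$_{\sigma}$, $\otimes$ E$_{\alpha}$, $+$ E, Let, DLet), decompose the given weakening along the disjoint-union / context-sum structure of the conclusion's context --- splitting $\Delta$ into disjoint pieces that weaken the respective premise contexts, and peeling off the additive grade $q$ from the $q + \Gamma$ component when present --- then apply the IH to each premise and reassemble via the rule. The required facts about $\sqsubseteq$, namely its compatibility with $q + (-)$ and with disjoint union of contexts, follow directly from the definitions.
\end{proof}
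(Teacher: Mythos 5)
Your proposal is correct and follows essentially the same route as the paper's proof: induction on the typing derivation, with the axiom and arithmetic rules closed off directly via their built-in slack, and the context-splitting rules handled by decomposing the weakened context along the disjoint-union structure and peeling off the additive grade from the $q+\Gamma$ component. One small imprecision worth tightening: your helper fact (ii) --- that $q+\Gamma\sqsubseteq\Theta$ lets you write $\Theta = q+\Theta'$ with $\Gamma\sqsubseteq\Theta'$ --- fails if $\Theta$ contains extra variables of grade $< q$, and your remark that such extras "can be freely placed in one branch of any split" is likewise too loose when the target branch carries a $q+(-)$; the paper avoids this by fixing the split so that the $q+\Gamma$ side has exactly domain $\dom\Gamma$ (routing all extras to the other premise), which is what your final sketch implicitly does but should be made explicit.
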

\begin{proof}
  Suppose $\Phi\mid\Gamma\vdash e:\sigma$. We proceed by induction on the typing derivation 
  and consider some representative cases of the final typing rule applied.
  \begin{description}
    \item[\textbf{Case (Var).}] Suppose the last rule applied was 
    \[
      \Phi\mid\Gamma,x:_r\sigma\vdash x:\sigma.
    \]
    Let $\Delta$ be a context such that $(\Gamma,x:_r\sigma)\sqsubseteq\Delta$. Thus, $x:_q\sigma\in\Delta$ 
    where $r\leq q$. By the same rule, $\Phi\mid\Delta\vdash x:\sigma$. 

    \item[\textbf{Case ($\otimes $ I).}] Suppose the last rule applied was
    \[
      \Phi\mid\Gamma,\Delta\vdash (e,f):\sigma\otimes\tau
    \]
    and thus, we also have that 
    \[
      \Phi\mid\Gamma\vdash e:\sigma\text{ and }\Phi\mid\Delta\vdash f:\tau.
    \]
    Let $\Lambda$ be a context such that $(\Gamma,\Delta)\sqsubseteq\Lambda$.
    As $\Gamma$ and $\Delta$ are disjoint, we can split $\Lambda$ into the contexts 
    $\Gamma_1$ and $\Delta_1$ such that $\Gamma\sqsubseteq\Gamma_1$ and $\Delta\sqsubseteq\Delta_1$.
    By our inductive hypothesis, it follows that 
    \[
      \Phi\mid\Gamma_1\vdash e:\sigma\text{ and }\Phi\mid\Delta_1\vdash f:\tau.
    \]
    By the same rule, we conclude that 
    \[
      \Phi\mid\Gamma_1,\Delta_1\vdash (e,f):\sigma\otimes\tau.
    \]

    \item[\textbf{Case ($\otimes $ E$_\sigma$).}] Suppose the last rule applied 
    was 
    \[
      \Phi\mid r+\Gamma,\Delta\vdash\slet{(x, y)}e f:\sigma.
    \]
    Let $\Lambda$ be a context such that $(r+\Gamma,\Delta)\sqsubseteq\Lambda$ and 
    $x,y\not\in\dom\Lambda$. As before, split $\Lambda$ into contexts $\Gamma_1$ and $\Delta_1$
    such that $(r+\Gamma)\sqsubseteq\Gamma_1$ and $\Delta\sqsubseteq\Delta_1$ but where 
    $\dom\Gamma=\dom\Gamma_1$. Now, for each $x\in\dom\Gamma_1$, we have that $x:_q\sigma\in\Gamma_1$
    where $r\leq q$. Therefore, we can define the context $-r+\Gamma_1$ which subtracts $r$ from the 
    error bound of every variable in $\Gamma_1$, and hence $\Gamma\sqsubseteq (-r+\Gamma_1)$. 
    Finally, use our inductive hypothesis to get that
    \[
      \Phi\mid(-r+\Gamma_1)\vdash e:\tau_1\otimes\tau_2\text{ and }
      \Phi\mid\Delta_1,x:_r\tau_1,y:_r\tau_2\vdash f:\sigma
    \]
    and we can apply the same rule to get our conclusion.

    \item[\textbf{Case (Add).}] Suppose the last rule applied was 
    \[
      \Phi\mid\Gamma,x:_{\varepsilon+r_1}\num,y:_{\varepsilon+r_2}\num\vdash\add{x}{y}:\num
    \]
    Let $\Delta$ be a context such that
    $(\Gamma,x:_{\varepsilon+r_1}\num,y:_{\varepsilon+r_2}\num)\sqsubseteq\Delta$.
    Hence, $x:_{q_1}\num,y:_{q_2}\num\in\Delta$ where $\varepsilon+r_1\leq q_1$ and $\varepsilon+r_2\leq q_2$.
    Rewrite $q_1=\varepsilon + (q_1-\varepsilon)$ and $q_2=\varepsilon+(q_2-\varepsilon)$ and apply the same rule.
  \end{description}
\end{proof}
Similarly, algorithmic weakening says that if we pass a context skeleton $\Gamma^\bullet$ into the algorithm and it
infers context $\Gamma$, then if we pass in a larger skeleton $\Delta^\bullet$, the algorithm will 
still infer context $\Gamma$. (Here, we extend the notion of subcontexts to context skeletons, where
$\Gamma^\bullet \sqsubseteq\Delta^\bullet$ if $\Gamma^\bullet\subseteq\Delta^\bullet$.)
This is because the algorithm discards unused variables from the context.
\begin{lemma}[Type Checking Algorithmic Weakening]\label{lem:algo_weak}
  If $\Phi\mid\Gamma^\bullet;e\Rightarrow\Gamma;\sigma$ and $\Gamma^\bullet\sqsubseteq\Delta^\bullet$, 
  then $\Phi\mid\Delta^\bullet;e\Rightarrow \Gamma;\sigma$.
\end{lemma}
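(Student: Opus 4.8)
The plan is to prove the lemma by induction on the derivation of the algorithmic judgment $\Phi\mid\Gamma^\bullet;e\Rightarrow\Gamma;\sigma$, equivalently by structural induction on the program $e$, following the rules of \Cref{fig:algorithm}. The key observation that drives the whole argument is that the algorithm's output context $\Gamma$ is assembled entirely from the outputs of the recursive calls — via the operations $r+(-)$, $(-)\setminus\{x\}$, $\max\{-,-\}$, and disjoint union — and never refers to the input skeleton except to look up the declared types of variables that actually occur in $e$. Hence enlarging the skeleton from $\Gamma^\bullet$ to $\Delta^\bullet\supseteq\Gamma^\bullet$ can neither change which rule fires nor alter what it produces.

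For the base cases — (Var), (DVar), (Unit), (Add, Sub), (Mul), (Div), (DMul) — one simply notes that whatever bindings the rule demands in the input skeleton (for instance $x:\sigma$ for (Var), or $x:\num,y:\num$ for (Add)) already appear in $\Gamma^\bullet$, and therefore, since $\Gamma^\bullet\sqsubseteq\Delta^\bullet$ means $\Gamma^\bullet\subseteq\Delta^\bullet$, they also appear in $\Delta^\bullet$. The identical rule instance then applies with skeleton $\Delta^\bullet$ and returns the same $\Gamma;\sigma$.

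For the inductive cases the argument is uniform. Each rule makes one or more recursive calls on subterms, in some cases after extending the skeleton with bound variables (e.g. $\Gamma^\bullet,x:\tau_1,y:\tau_2$ in $(\otimes\ \mathrm{E}_\sigma)$, or $\Gamma^\bullet,x:\sigma$ and $\Gamma^\bullet,y:\tau$ in $(+\ \mathrm{E})$). Choosing those bound variables fresh for $\Delta^\bullet$ (by $\alpha$-renaming), we get $\Gamma^\bullet,x{:}\tau\sqsubseteq\Delta^\bullet,x{:}\tau$, so the induction hypothesis applies to every premise and yields exactly the same output contexts $\Gamma_1,\Gamma_2,\dots$ as in the original derivation, now obtained from the larger skeletons. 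Every side condition of the rule — the domain-disjointness requirements $\dom\Gamma_i\cap\dom\Gamma_j=\emptyset$ and the definitions of the grades $r$ or $q$ as a $\max$ over bindings of the returned contexts — mentions only these unchanged output contexts, so it remains satisfied. The same rule therefore fires with skeleton $\Delta^\bullet$ and produces the same final context $\Gamma$ and type $\sigma$.

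The only genuine subtlety, and the main point to handle carefully, is the bookkeeping around binders: we must appeal to $\alpha$-equivalence (or a Barendregt-style freshness convention) so that a variable bound inside $e$ does not already occur in $\Delta^\bullet$; this is what makes $\Delta^\bullet$ extended by that binder a legitimate skeleton and preserves the $\sqsubseteq$ relation needed to invoke the induction hypothesis on the body. Everything else is a routine case check mirroring \Cref{fig:algorithm}.
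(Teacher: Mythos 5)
Your proof follows the same inductive strategy as the paper's own argument (induction on the algorithmic derivation, case analysis on the last rule), and it is correct. You are slightly more explicit than the paper about needing $\alpha$-renaming so that bound variables are fresh for $\Delta^\bullet$ — the paper simply stipulates, e.g., $x,y\not\in\Delta^\bullet$ in the $(\otimes\ \mathrm{E}_\sigma)$ case — but the substance and structure are the same.
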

\begin{proof}
  Suppose that $\Phi\mid\Gamma^\bullet;e\Rightarrow\Gamma;\sigma$. 
  We proceed by induction on the algorithmic derivation 
  and consider some representative cases of the final algorithmic rule applied.
  \begin{description}
    \item[\textbf{Case (Var).}] Suppose the last step applied was 
    \[
      \Phi\mid\Gamma^\bullet,x:\sigma;x\Rightarrow\{x:_0\sigma\};\sigma.
    \]
    Let $\Delta^\bullet$ be a context skeleton such that 
    $(\Gamma^\bullet,x:\sigma)\sqsubseteq\Delta^\bullet$. Thus, $x:\sigma\in\Delta^\bullet$
    so we can apply the same rule.

    \item[\textbf{Case ($\otimes $ E$_\sigma$).}] Suppose the last step applied 
    was 
    \[
      \Phi\mid\Gamma^\bullet;\slet{(x, y)} e f\Rightarrow 
      (r+\Gamma_1),\Gamma_2\setminus\{x,y\};\sigma.
    \]
    Let $\Delta^\bullet$ be a context skeleton such that $\Gamma^\bullet\sqsubseteq\Delta^\bullet$
    and $x,y\not\in\Delta^\bullet$. By induction, we have that 
    \[
      \Phi\mid\Delta^\bullet;e\Rightarrow\Gamma_1;\tau_1\otimes\tau_2\text{ and }
      \Phi\mid\Delta^\bullet,x:\tau_1,y:\tau_2;f\Rightarrow\Gamma_2;\sigma
    \]
    and $\dom\Gamma_1\cap\dom\Gamma_2=\emptyset$. By the same rule, we conclude that
    \[
      \Phi\mid\Delta^\bullet;\slet{(x,y)}{e}{f}\Rightarrow (r+\Gamma_1),\Gamma_2\setminus\{x,y\};\sigma.
    \]
  \end{description}
\end{proof}
Finally, we give proofs of algorithmic soundness and completeness. 
Soundness states that if the algorithm returns a linear context $\Gamma$, then we can use $\Gamma$
to derive the program using \bea{}'s type system.
\begin{namedtheorem}[\Cref{thm:algo_sound}]
  If $\Phi\mid\Gamma^\bullet;e\Rightarrow \Gamma;\sigma$, then 
  $\overline{\Gamma}\sqsubseteq\Gamma^\bullet$ and the derivation 
  $\Phi\mid\Gamma\vdash e:\sigma$ exists.
\end{namedtheorem}
\begin{proof}
  Suppose that $\Phi\mid\Gamma^\bullet;e\Rightarrow\Gamma;\sigma$. 
  We proceed by induction on the algorithmic derivation 
  and consider some representative cases of the final algorithmic rule applied.
  We use the fact that if $\Gamma,\Delta$ are disjoint,
  then $\overline{\Gamma,\Delta}=\overline\Gamma,\overline\Delta$. 
  \begin{description}
    \item[\textbf{Case (Var).}] Suppose the last step applied was 
    \[
      \Phi\mid\Gamma^\bullet,x:\sigma;x\Rightarrow\{x:_0\sigma\};\sigma.
    \]
    By the typing rule (Var$_\sigma$), we have that $\Phi\mid\{x:_0\sigma\}\vdash x:\sigma$.
    Moreover,
    $\overline{\{x:_0\sigma\}}\sqsubseteq(\Gamma^\bullet,x:\sigma)$.

    \item[\textbf{Case ($\otimes $ I).}] Suppose the last step applied was
    \[
      \Phi\mid\Gamma^\bullet;(e,f)\Rightarrow\Gamma_1,\Gamma_2;\sigma\otimes\tau
    \]
    where 
    \[
      \Phi\mid\Gamma^\bullet;e\Rightarrow\Gamma_1;\sigma\text{ and }
      \Phi\mid\Gamma^\bullet;f\Rightarrow\Gamma_2;\sigma
    \]
    and $\dom\Gamma_1\cap\dom\Gamma_2=\emptyset$. By our inductive hypothesis, 
    we have that $\Phi\mid\Gamma_1\vdash e:\sigma$ and $\Phi\mid\Gamma_2\vdash f:\tau$.
    Therefore, we can apply the typing rule ($\otimes$ I) to get that 
    \[
      \Phi\mid\Gamma_1,\Gamma_2\vdash (e,f):\sigma\otimes\tau.
    \]
    Finally, as $\overline{\Gamma_1}\sqsubseteq\Gamma^\bullet$ and 
    $\overline{\Gamma_2}\sqsubseteq\Gamma^\bullet$,
    we have that $\overline{\Gamma_1,\Gamma_2}\sqsubseteq\Gamma^\bullet$.

    \item[\textbf{Case ($\otimes $ E$_\sigma$).}] Suppose the last step applied 
    was 
    \[
      \Phi\mid\Gamma^\bullet;\slet{(x, y)} e f\Rightarrow 
      (r+\Gamma_1),\Gamma_2\setminus\{x,y\};\sigma
    \]
    By induction, we have that
    \[
      \Phi\mid\Gamma_1\vdash e:\tau_1\otimes\tau_2\text{ and }\Phi\mid\Gamma_2\vdash f:\sigma,
    \]  
    where $x,y$ may be in $\dom\Gamma_2$.
    Let $\Delta=\Gamma_2\setminus\{x,y\}$. 
    Since $r$ is defined to be the maximum of the bounds on $x,y$ if they exist in 
    $\Gamma_2$, we have that $\Gamma_2\sqsubseteq(\Delta,x:_r\tau_1,y:_r\tau_2)$. 
    From \Cref{lem:type_weak}, it follows that
    \[
      \Phi\mid\Delta,x:_r\tau_1,y:_r\tau_2\vdash f:\sigma.
    \]
    Thus, we can apply the typing rule ($\otimes$ E$_\sigma$) to conclude that 
    \[
      \Phi\mid r+\Gamma_1,\Delta\vdash \slet{(x, y)}e f:\sigma.
    \]
    Finally, as $\overline{\Gamma_1}\sqsubseteq\Gamma^\bullet$ and 
    $\overline{\Gamma_2}\sqsubseteq(\Gamma^\bullet,x:\tau_1,y:\tau_2)$, we have that
    \[
      \overline{r+\Gamma_1,\Delta}=\overline{\Gamma_1,\Gamma_2\setminus\{x,y\}}
      =\overline{\Gamma_1},\overline{\Gamma_2\setminus\{x,y\}}\sqsubseteq\Gamma^\bullet.
    \]

    \item[\textbf{Case (+ E).}] Suppose the last step applied was 
    \[
      \Phi\mid\Gamma^\bullet;\mathbf{case} \ e' \ \mathbf{of} \ (\inl x.e \ | \ \inr y.f) 
    \Rightarrow (q+\Gamma_1),\max\{\Gamma_2\setminus \{x\}, \Gamma_3\setminus\{y\}\};\rho.
    \]
    By induction, we have that 
    \[
      \Phi\mid\Gamma_1\vdash e':\sigma+\tau\text{ and }
      \Phi\mid\Gamma_2\vdash e:\rho\text{ and }\Phi\mid\Gamma_3\vdash f:\rho.
    \]
    Let $\Delta=\max\{\Gamma_2\setminus\{x\}, \Gamma_3\setminus\{y\}\}$, and we
    still have that $\dom\Gamma_1\cap\dom\Delta=\emptyset$. By \Cref{lem:type_weak}, 
    it follows that 
    \[
      \Phi\mid\Delta,x:_q\sigma\vdash e:\rho\text{ and }\Phi\mid\Delta,y:_q\tau\vdash f:\rho
    \]
    by weakening the bounds on $x$ and $y$ to $q$. Thus, we can apply typing rule (+ E) to 
    conclude that 
    \[
      \Phi\mid q+\Gamma_1,\Delta\vdash \case{e'}{\inl x.e}{\inr y.f}:\rho.
    \]
    Moreover, as $\overline{\Gamma_1}\sqsubseteq\Gamma^\bullet$ and 
    $\overline{\Gamma_2}\sqsubseteq(\Gamma^\bullet,x:\sigma)$ and 
    $\overline{\Gamma_3}\sqsubseteq(\Gamma^\bullet,y:\tau)$, we have that 
    \[
      \overline{q+\Gamma_1,\Delta}=\overline{\Gamma_1},
      \overline{\max\{\Gamma_2\setminus\{x\},\Gamma_3\setminus\{y\}\}}\sqsubseteq\Gamma^\bullet.
    \]

    \item[\textbf{Case (Add).}] Suppose the last step applied was 
    \[
      \Phi\mid \Gamma^\bullet, x: \num, y:\num;
      \mathbf{add} \ x \ y \Rightarrow 
      \{x:_{\varepsilon}\num,y:_{\varepsilon}\num\};\num.
    \]
    By the typing rule (Add) we have that 
    \[
      \Phi\mid\{x:_\varepsilon\num,y:_\varepsilon\num\}\vdash\mathbf{add}~x~y:\num.
    \]
  \end{description}
\end{proof}
Conversely, completeness says that if from $\Gamma$ we can derive the type of a program $e$, then
inputting $\overline{\Gamma}$ and $e$ into the algorithm will yield a valid output. 
\begin{namedtheorem}[\Cref{thm:algo_complete}]
  If $\Phi\mid\Gamma\vdash e:\sigma$ is a valid derivation in \bea{}, then there
  exists a context $\Delta\sqsubseteq\Gamma$ such that 
  $\Phi\mid\overline{\Gamma}; e\Rightarrow \Delta;\sigma$.
\end{namedtheorem}
\begin{proof}
  Suppose that $\Phi\mid\Gamma\vdash e:\sigma$.
  We proceed by induction on the typing derivation 
  and consider some representative cases of the final typing rule applied.
  \begin{description}
    \item[\textbf{Case (Var$_\sigma$).}] Suppose the last rule applied was 
    \[
      \Phi\mid\Gamma,x:_r\sigma\vdash x:\sigma.
    \]
    By algorithm step (Var), we have that
    \[
      \Phi\mid\overline{\Gamma},x:\sigma;x\Rightarrow\{x:_0\sigma\};\sigma
    \]
    and $\{x:_0\sigma\}\sqsubseteq(\Gamma,x:_r\sigma)$ as $0\leq r$. 

    \item[\textbf{Case ($\otimes $ I).}] Suppose the last rule applied was
    \[
      \Phi\mid\Gamma,\Delta\vdash (e,f):\sigma\otimes\tau.
    \]
    From this, we deduce that $\dom\Gamma\cap\dom\Delta=\emptyset$. By induction, there exist
    $\Gamma_1\sqsubseteq\Gamma$ and $\Delta_1\sqsubseteq\Delta$ such that 
    \[
      \Phi\mid\overline\Gamma;e\Rightarrow\Gamma_1;\sigma\text{ and }
      \Phi\mid\overline\Delta;f\Rightarrow\Delta_1;\tau.
    \]
    Moreover, $\dom\Gamma_1\cap\dom\Delta_1=\emptyset$ as well. 
    By \Cref{lem:algo_weak}, we also have that 
    \[
      \Phi\mid\overline{\Gamma,\Delta};e\Rightarrow\Gamma_1;\sigma\text{ and }
      \Phi\mid\overline{\Gamma,\Delta};f\Rightarrow\Delta_1;\tau.
    \]
    Thus, we can apply algorithm step
    ($\otimes$ I) to conclude
    \[
      \Phi\mid\overline{\Gamma,\Delta};(e,f)\Rightarrow \Gamma_1,\Delta_1;\sigma\otimes\tau
    \] 
    where we know $(\Gamma_1,\Delta_1)\sqsubseteq(\Gamma,\Delta)$. 

    \item[\textbf{Case ($\otimes $ E$_\sigma$).}] Suppose the last rule applied 
    was 
    \[
      \Phi\mid r+\Gamma,\Delta\vdash\slet{(x, y)}e f:\sigma.
    \]
    By induction, there exist $\Gamma_1\sqsubseteq\Gamma$ and 
    $\Delta_1\sqsubseteq(\Delta,x:_r\tau_1,y:_r\tau_2)$ such that 
    \[
      \Phi\mid \overline{\Gamma};e\Rightarrow \Gamma_1;\tau_1\otimes\tau_2\text{ and }
      \Phi\mid \overline{\Delta},x:\tau_1,y:\tau_2;f\Rightarrow\Delta_1;\sigma.
    \]
    If $x:_{r_1}\tau_1, y:_{r_2}\tau_2\in\Delta_1$, let $r'=\max\{r_1,r_2\}$. As 
    $\Delta_1\sqsubseteq(\Delta,x:_r\tau_1,y:_r\tau_2)$, we know $r'\leq r$. 

    Using \Cref{lem:algo_weak}, we can apply algorithm step ($\otimes$ E$_\sigma$) of
    \[
      \Phi\mid\overline{\Gamma,\Delta};\slet{(x, y)}e f \Rightarrow (r'+\Gamma_1),\Delta_1\setminus\{x,y\};\sigma.
    \]
    Moreover, $(r'+\Gamma_1)\sqsubseteq(r+\Gamma)$ and $(\Delta_1\setminus\{x,y\})\sqsubseteq\Delta$.

    \item[\textbf{Case (+ E).}] Suppose the last rule applied was 
    \[
      \Phi\mid q+\Gamma,\Delta\vdash \case{e'}{\inl x.e}{\inr y.f}:\rho.
    \]
    By induction, there exist $\Gamma_1\sqsubseteq\Gamma$, $\Delta_1\sqsubseteq(\Delta,x:_q\sigma)$, and 
    $\Delta_2\sqsubseteq(\Delta,y:_q\tau)$ such that 
    \[
      \Phi\mid\overline{\Gamma};e'\Rightarrow \Gamma_1;\sigma+\tau\text{ and }
      \Phi\mid\overline{\Delta},x:\sigma;e\Rightarrow \Delta_1;\rho\text{ and }
      \Phi\mid\overline{\Delta}, y:\tau;f\Rightarrow\Delta_2;\rho.
    \]
    If $x:_{q_1}\sigma\in\Delta_1$ and $y:_{q_2}\tau\in\Delta_2$, let $q'=\max\{q_1,q_2\}$,
    and we know $q'\leq q$. Using \Cref{lem:algo_weak}, we can apply algorithm step (+ E) of
    \[  
      \Phi\mid\overline{\Gamma,\Delta};\case{e'}{\inl x.e}{\inr y.f}\Rightarrow (q'+\Gamma_1),
      \max\{\Delta_1\setminus\{x\},\Delta_2\setminus\{y\}\};\rho.
    \]
    Furthermore, we know $(q'+\Gamma_1)\sqsubseteq(q+\Gamma)$ and 
    $\max\{\Delta_1\setminus\{x\},\Delta_2\setminus\{y\}\}\sqsubseteq \Delta$.

    \item[\textbf{Case (Add).}] Suppose the last rule applied was 
    \[
      \Phi\mid\Gamma,x:_{\varepsilon+r_1}\num,y:_{\varepsilon+r_2}\num\vdash\add{x}{y}:\num
    \]
    where $r_1,r_2\geq 0$. We can apply algorithm step (Add) of
    \[
      \Phi\mid\overline{\Gamma},x:\num,y:\num;\add{x}{y}\Rightarrow \{x:_\varepsilon\num,y:_\varepsilon\num\};\num
    \]
    and we have $\{x:_\varepsilon\num,y:_\varepsilon\num\}\sqsubseteq
    (\Gamma,x:_{\varepsilon+r_1}\num,y:_{\varepsilon+r_2}\num)$.
  \end{description}
\end{proof}

\fi

\end{document}